\theoremstyle{plain}
\newtheorem*{thm*}{Theorem}
\newtheorem{thm}{Theorem}
\newtheorem{cor}[thm]{Corollary}
\newtheorem{lem}[thm]{Lemma}
\newtheorem{prop}[thm]{Proposition}
\theoremstyle{definition}
\newtheorem{defi}{Definition}
\newtheorem*{exa}{Example}
\theoremstyle{remark}
\newtheorem*{rem}{\bf Remark}
\definecolor{labelkey}{rgb}{0,.56,.7}
\let\bs\boldsymbol
\let\ox\otimes
\let\ol\overline
\newcommand{\sox}{\slashed{\ox}}
\mathchardef\myhyph="2D
\DeclareMathAlphabet{\pazocal}{OMS}{zplm}{m}{n}   %  mathcal
\newcommand{\Ocal}{\pazocal{O}}
\def\bbC{\mathbb{C}}
\def\bbZ{\mathbb{Z}}
\def\bbR{\mathbb{R}}
\def\bbN{\mathbb{N}}
\def\bbI{\mathbb{I}}
\def\bbJ{\mathbb{J}}
\newcommand{\dif}[1]{\mathrm{\,d} #1}             % dx
\newcommand{\nn}{\nonumber}
\def\dg{\dagger}
\def\df{\overset{\mathrm{df}}{=}}
\newcommand{\ket}[1]{\mathop{|#1\rangle}\nolimits}
\newcommand{\haf}{\mathop{{\mathrm{haf}}}\nolimits}
\DeclareMathOperator*{\ls2}{Lommel}
\def\a{\alpha}
\def\b{\beta}
\def\g{\gamma}
\def\s{\sigma}
\def\la{\lambda}
\newcommand{\n}{\bs{n}}
\newcommand{\N}{\bs{N}}
\newcommand{\z}{\bs{z}}
\newcommand{\bd}{\bs{d}}
\def\gbs{\mathrm {GBS}}
\def\dgbs{\mathrm {DGBS}}
\def\mdgbs{\mathrm {mDGBS}}
\DeclareSymbolFont{Eulerscripteusm10}{U}{eus}{m}{n}
\DeclareMathSymbol{\euI}{\mathord}{Eulerscripteusm10}{"4A}
\DeclareMathSymbol{\rH}{\mathord}{Eulerscripteusm10}{"48}
\DeclareMathSymbol{\euK}{\mathord}{Eulerscripteusm10}{"4B}
\DeclareMathSymbol{\euR}{\mathord}{Eulerscripteusm10}{"52}
\DeclareMathSymbol{\rS}{\mathord}{Eulerscripteusm10}{"53}
\begin{document}

\title{A duality at the heart of Gaussian boson sampling}
\author{Kamil Br\'adler, Robert Israel, Maria Schuld and Daiqin Su}
\begin{abstract}
Gaussian boson sampling (GBS) is a near-term quantum computation framework that is believed to be classically intractable, but yet rich of potential applications. In this paper we study the intimate relation between distributions defined over classes of samples from a GBS device with graph matching polynomials. For this purpose, we introduce a new graph polynomial called the displaced GBS polynomial, whose coefficients are the coarse-grained photon-number probabilities of an arbitrary undirected graph $G$ encoded in a GBS device. We report a discovery of a duality between the displaced GBS polynomial of $G$ and the matching polynomial of $G\,\square\,P_2(x)$ -- the Cartesian graph product of $G$ with a single weighted edge also known as the prism over~$G$. Besides the surprising insight gained into Gaussian boson sampling, it opens the door for the new way of classically simulating the Gaussian boson sampling device. Furthermore, it motivates the recent success of a new type of coarse-grained quantum statistics used to construct feature maps in [Schuld et al. 2019].
\end{abstract}

\address{Xanadu, Toronto, Canada}
\email{kamil@xanadu.ai}
\keywords{matching polynomial, hafnian, boson sampling}
%\subjclass{15A15}

\date{\today}

\maketitle

\thispagestyle{empty}

\section{Introduction}\label{sec:intro}

A Gaussian boson sampling (GBS) device was introduced in~\cite{hamilton2016gaussian}, building upon previous ideas~\cite{lund2014boson,rahimi2015can} to generalize boson sampling ~\cite{aaronson2011computational} -- a much-studied proposal to realize a classically intractable quantum computation. As with boson sampling, the road to so-called ``quantum supremacy'' is to understand the results of measurements as samples from a distribution that no classical algorithm can sample from. For this purpose, a GBS device~\cite{hamilton2016gaussian} consists of three main components: an array of $M$ single-mode squeezers whose output is injected into an $M$-mode linear interferometer followed by $M$ photon-number resolving detectors. The measurement is in the Fock basis, that is, the device is  counting the number of photons in each mode. The outputs of the measurement, or ``Gaussian boson samples'', are therefore $M$-tuples $\n = (n_1,...,n_M)$ of nonnegative integers $n_i$. We call these $M$-tuples \textit{click patterns} and we distinguish the click patterns in the collision-free ($n_i\leq1$) and collision ($\n$ arbitrary) regime.

Unlike boson sampling, GBS demonstrates something more than just quantum advantage -- it abounds in applications~\cite{huh2015boson,bradler2018graph,banchi2019molecular,arrazola2018using}. One application, which is closely related to this work, has yielded interesting results in practical applications~\cite{schuld2019quantum}: the output statistics of a GBS device programmed to encode a graph~(see also \cite{bradler2018gaussian}) can be used to construct a feature vector for the graph. The feature vectors, in turn, give rise to a graph similarity measure called the \textit{GBS graph kernel}. Compared to known classical graph kernels, the GBS-based kernel performs well in machine learning tasks based on the similarity measure. Furthermore, the method does not require the entire distribution of the GBS device to be resolved via measurements, but considers ``coarse-grained'' output distributions over classes of measurement outcomes, which significantly reduces the costs of estimating the GBS statistics. This approach was pioneered in~\cite{bradler2018graph}, showing how GBS provides a complete set of graph invariants and therefore is able, at least in principle, to decide the graph isomorphism problem.

Here we present and significantly extend the analysis that lead to successful coarse-graining strategies studied empirically in~\cite{schuld2019quantum}, which summarize measurement outcomes to so-called ``orbits'' and ``meta-orbits''. An orbit is the set of all permutations of a click pattern, while meta-orbits are collections of certain types of orbits. The probability of detecting click patterns belonging to a given orbit is closely related to the coefficients of a structure studied in number theory, algebraic combinatorics and physics for more than fifty years: the graph matching polynomial~\cite{farrell1979introduction,heilmann1970monomers,hosoya1971topological,gutman1977acyclic,godsil1981matchings}. The polynomial coefficients count the so-called $r$-matches of a graph -- the number of ways to choose~$r$ disjoint edges in the graph. But it was found that the coefficients also often say something important about a physical system that the graph represents. Probably the most prominent system is an Ising model, where the matching polynomial is closely related to its partition function~\cite{heilmann1970monomers}. The matching polynomial is -- not surprisingly -- an intractable quantity~\cite{jerrum1989approximating,patel2017deterministic,bezakova2018complexity}, and a lot is known about identities~\cite{godsil1981matchings,cvetkovic1988recent,shi2016graph} and dualities~\cite{godsilAlgComb,lass2004matching} of this extensively investigated~\cite{eep} mathematical object.

In our analysis, we uncover a range of details about the intimate link between matching polynomials and Gaussian boson sampling. In Section \ref{sec:GBS} we first define a cousin of the matching polynomial inspired by a GBS device in the collision-free regime, which we call the \textit{GBS polynomial} of a graph $G$. We derive several identities known from matching polynomials which are also satisfied by the GBS polynomial, and report a new type of identity that has no known counterpart for the matching polynomial. These considerations inspire a new classical simulation method of GBS statistics: The entire GBS polynomial can be computed in one step by calculating the hafnian of the prism over~$G$: the graph $G\,\square\,P_2(x)$, where $P_2(x)$ is a single edge with weight $x\in\bbR$. Extending this strategy to the collision regime, we derive the aforementioned meta-orbit or ``$\Delta$ coarse-grained distribution'', whose properties are studied in this paper as well and were successfully applied in~\cite{schuld2019quantum}.

We go one step further in Section \ref{sec:DGBS} and consider the role of displacement in the light modes. To this end, we generalize the GBS polynomial to the displaced GBS (DGBS) polynomial of $G$  both in the collision and collision-free regime and find that any DGBS polynomial can be written in terms of the matching polynomials of $G$ and all its induced subgraphs. This is surprising and provides a further conceptual simplification of the output statistics  description of a general GBS device. But since the number of induced subgraphs grows exponentially it is helpful only in a limited way for the GBS simulation or the coarse-grained probability evaluation. This, however, changes by uncovering our main result: the DGBS polynomial of $G$ is identical to the matching polynomial of the prism graph $G\,\square\,P_2(x)$, thus generalizing the zero-displacement result for the GBS polynomial. The consequences of this ``duality'' relation are manifold and far-reaching. The calculation of coarse-grained GBS statistics captured by the DGBS can be significantly sped-up by computing a single expression for a graph twice the size -- the matching polynomial of $G\,\square\,P_2(x)$. What is more, once the desired matching polynomial is calculated, one can simply insert any displacement value without a costly recalculation for each instance. Section~\ref{sec:mixed} ultimately generalizes this result, helping us classically simulate the output probability statistics of graphs encoded in a realistic GBS device and thus suffering from decoherence -- most notably due to photon loss.

Lastly, Section~\ref{sec:coarsegrainDistro} explores an important prediction following from the introduction of the DGBS polynomial -- a new type of photon number coarse-grained statistics. The main motivation behind this investigation is to find quantities that are useful for applications. That is, we are looking for ways to post-process GBS samples that are ``quantum feasible'' (that is, feasible in practice for realistic parameters of squeezing and displacement), and at the same time classically intractable. The study of GBS polynomials allows us to motivate why meta-orbits used in~\cite{schuld2019quantum} are a potential candidate.

\section{Notation, Preliminaries and a summary of previous results}\label{sec:preliminaries}

We start by recalling the results of~\cite{hamilton2016gaussian} relevant to us, the notation we use in this paper and some necessary material from graph theory. The GBS output measurement probability is
\begin{equation}\label{eq:ProbMixedGBSdisplace}
  p(\n)={e^{-{1\over2}\bs{D}^\dagger\s_Q^{-1}\bs{D}}\over{\n!}\sqrt{\det{\s_Q}}}\partial^{|\n|}_{\bs{\b},\ol{\bs\b}}
  e^{{1\over2}\bs{\g}^\top C\bs{\g}+\bs{D}^\dg\s_Q^{-1}\bs{\g}}\big\rvert_{\bs{\g}=0},
\end{equation}
where $\n=(n_i)\in\bbZ^M_{\geq0}$, $\partial^{|\n|}_{\bs{\b},\ol{\bs\b}}\equiv\prod_{i=1}^{M}{\partial^{n_i}\over\partial\b_i^{n_i}}{\partial^{n_i}\over\partial\ol{\b}_i^{n_i}}$, $\n!\df n_1!\times\dots\times n_M!$, $C=C^\top\in\bbR^{2M\times2M}$ and $\bs{\g}\df(\bs{\b},\ol{\bs{\b}})=(\b_1,\dots,\b_M,\ol\b_1,\dots,\ol\b_M)\in\bbC^{2M}$ which we view as a column vector and $\bs{D}\in\bbC^{2M}$ is a displacement $2M$-tuple. We denote
\begin{equation}\label{eq:X}
X_{2M}=\begin{bmatrix}
           0 & \bbI_M \\
           \bbI_M & 0 \\
         \end{bmatrix}.
\end{equation}
Then
\begin{equation}\label{eq:sigmaQ}
\s_Q=(\bbI_{2M}-X_{2M}C)^{-1},
\end{equation}
where the Gaussian covariance matrix $\s$ describing the state can be obtained by $\s=\s_Q-\bbI_{2M}/2$.

Let $C=A\oplus A$ be the so-called pure case scenario~\cite{bradler2018gaussian}. Then, if  $\bs{D}=(\bs{d},\ol{\bs{d}})$ for $\bs{d}\in\bbC^M$ Eq.~\eqref{eq:ProbMixedGBSdisplace} factorizes:
\begin{equation}\label{eq:ProbMixedGBSdisplaceForAopA}
    p(\n)={e^{-{1\over2}\bs{D}^\dagger\s_Q^{-1}\bs{D}}\over{\n!}\sqrt{\det{\s_Q}}}
    \big(\partial^{|\n|}_{\bs{\b}}e^{{1\over2}\bs{\b}^\top A\bs{\b}+(\bd^\top\bbI_M-\ol{\bd}^\top A)\bs{\b}}\big\rvert_{\bs{\b}=0}\big)^2,
\end{equation}
where $\partial^{|\n|}_{\bs{\b}}\equiv\prod_{i=1}^{M}{\partial^{n_i}\over\partial\b_i^{n_i}}$.

Let $G=(V,E)$ be a graph of $|V|$ vertices and $|E|$ edges. Let $A\in\bbR^{|V|\times|V|}$ denote its adjacency matrix. Given two graphs $G,H$, we write: (i) $G{\times}H$ for the tensor product of graphs, (ii) $G\,\square\,H$ for the Cartesian product of graphs, and, (iii) $G\uplus H$ for the disjoint union of graphs. In terms of their adjacency matrices $A_1$ and $A_2$ the operations correspond to: (i) $A_1\ox A_2$, (ii) $A_1\ox\bbI_2+\bbI_1\ox A_2$, and, (iii), $A_1\oplus A_2$.  A complete graph on $M$ vertices with self-loops will be denoted by $\ol{K}_n$ and its all-ones adjacency matrix by~$\bbJ_n$. In this paper,  we will exclusively use the prism graph construction $H=P_2(x)$ for the Cartesian product $G\,\square\,H$, where $P_2(x)$ is a single edge with a weight~$x$. Note $G\,\square\,H\simeq H\,\square\,G$ and the distributive property~\cite{imrich2008topics}
\begin{equation}\label{eq:CarteDistributive}
    (G_1\uplus G_2)\,\square\,H=G_1\,\square\,H\uplus G_2\,\square\,H.
\end{equation}
The matching polynomial of $G$~\cite{hosoya1971topological,farrell1979introduction,gutman1977acyclic,heilmann1970monomers,godsil1981theory} is
\begin{equation}\label{eq:matchPoly}
  \mu_G(x)\df\sum_{r=0}^{\lfloor M/2\rfloor}(-1)^rm(G,r)x^{M-2r},
\end{equation}
where $m(G,r)$ is the number of $r$-matchings enumerating  the number of ways to choose~$r$ disjoint edges in the graph. Let $A$ be a $2p\times2p$ complex-valued symmetric matrix.  The hafnian,  $\haf{A}$, (originally introduced in~\cite{caianiello1953quantum})
is defined as
$$
\haf{A} =  \frac{1}{p! 2^p}\sum_{\sigma \in S_{2p}} \prod_{j=1}^p a_{\sigma(2j-1),\sigma(2j)},
$$
a complex weighted sum of perfect matchings.  We write $m(G,r)=\sum_{|S|=2r}\haf{A_S}$ where the sum is over all subsets $S$ of vertices with cardinality~$2r$. Hence
\begin{equation}\label{eq:matchPolyhaf}
    \mu_G(x) =  \sum_{S \subseteq V}  (-1)^{|S|/2} \haf{A_S}\,x^{M-|S|}.
\end{equation}
Of course, only the subsets $S$ of even cardinality contribute. The expression for $m(G,r)$ in terms of the hafnians of submatrices naturally generalizes to the weighted $r$-matches of weighted graphs~\cite{cvetkovic1988recent}. To fit with $r=0$, the hafnian of an empty matrix is taken to be~$1$. We will also introduce the \emph{signless} matching polynomial
\begin{equation}\label{eq:matchPolySignLess}
    \mu^+_G(x)\df\sum_{r=0}^{\lfloor M/2\rfloor}m(G,r)x^{M-2r}.
\end{equation}
We recall several  facts about the matching polynomial~\cite{lovasz2009matching,cvetkovic1988recent,shi2016graph}.
\begin{thm*}[Godsil tree~\cite{godsil1981matchings}]\label{thm:godsiltree}
    Let $T(G)$ be the path-tree graph (the so-called Godsil tree) of $G$ rooted at $\ol{v}$. Then
    \begin{equation}\label{eq:Godsil}
      {\mu_G(z)\over\mu_{G\backslash\{v\}}(z)}={\mu_{T(G)}(z)\over\mu_{T(G)\backslash\{\ol{v}\}}(z)}.
    \end{equation}
\end{thm*}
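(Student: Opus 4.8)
The plan is to prove the identity by induction on the number of vertices of $G$, using the fundamental vertex-deletion recurrence of the matching polynomial and recasting it as a continued-fraction-type recursion for the ratio that appears on both sides of \eqref{eq:Godsil}.

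First I would record the basic recurrence: for any vertex $v$ of $G$,
\begin{equation*}
\mu_G(z) = z\,\mu_{G\setminus\{v\}}(z) - \sum_{w\sim v}\mu_{G\setminus\{v,w\}}(z),
\end{equation*}
which follows immediately from the combinatorial definition \eqref{eq:matchPoly} by splitting the $r$-matchings of $G$ according to whether $v$ is left unmatched (giving $z\,\mu_{G\setminus\{v\}}$) or is matched to one of its neighbours $w$ (giving $-\mu_{G\setminus\{v,w\}}$ for each such $w$). Writing $R_G(v)\df\mu_G(z)/\mu_{G\setminus\{v\}}(z)$ for the rational function on the left of \eqref{eq:Godsil}, dividing the recurrence through by $\mu_{G\setminus\{v\}}$, and noting that $\mu_{G\setminus\{v,w\}}/\mu_{G\setminus\{v\}} = 1/R_{G\setminus\{v\}}(w)$, gives the self-similar recursion
\begin{equation}\label{eq:ratiorec}
R_G(v) = z - \sum_{w\sim v}\frac{1}{R_{G\setminus\{v\}}(w)}.
\end{equation}

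Next I would unpack the structure of the Godsil tree. By construction the vertices of $T(G)$ rooted at $\ol v$ are the paths of $G$ emanating from $v$, with two paths joined when one extends the other by a single vertex; the root $\ol v$ is the trivial path $(v)$ and its neighbours are the paths $(v,w)$ with $w\sim v$. The key combinatorial observation is that deleting $\ol v$ disconnects $T(G)$ into one component for each neighbour $w$ of $v$, and the component attached to $(v,w)$, via the map $(v,w,u_2,\dots,u_k)\mapsto(w,u_2,\dots,u_k)$, is isomorphic to the path tree $T(G\setminus\{v\})$ rooted at $w$. Since $T(G)$ is itself a tree, applying \eqref{eq:ratiorec} at its root $\ol v$ yields
\begin{equation*}
R_{T(G)}(\ol v) = z - \sum_{w\sim v}\frac{1}{R_{T(G\setminus\{v\})}(\ol w)},
\end{equation*}
where I have used that the ratio $R_H(u)$ depends only on the connected component of $H$ containing $u$ (the matching polynomials of all other components cancel between numerator and denominator), so that each summand is governed by the corresponding isomorphic copy of $T(G\setminus\{v\})$.

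Comparing this with \eqref{eq:ratiorec} for $G$ itself, the two recursions coincide term by term provided $R_{G\setminus\{v\}}(w) = R_{T(G\setminus\{v\})}(\ol w)$ for every $w\sim v$ -- which is exactly the induction hypothesis applied to the smaller graph $G\setminus\{v\}$. The base case of a single vertex is immediate ($R=z$ on both sides), so the induction closes and \eqref{eq:Godsil} follows after clearing denominators. I expect the main obstacle to be the clean verification of the decomposition step -- that the components of $T(G)\setminus\{\ol v\}$ are genuinely the path trees of $G\setminus\{v\}$ rooted at the neighbours of $v$ -- together with the mild technical care needed to treat \eqref{eq:Godsil} as an identity of rational functions (or, equivalently, to cross-multiply into an honest polynomial identity) so that the divisions in \eqref{eq:ratiorec} are harmless even when some $\mu_{G\setminus\{v\}}(z)$ vanishes at particular values of $z$.
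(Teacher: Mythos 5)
The paper states this theorem without proof, simply citing Godsil's original article, so there is no in-paper argument to compare against; your proposal is, however, a correct and complete rendering of Godsil's classical proof (vertex-deletion recurrence for $\mu$, the ratio recursion $R_G(v)=z-\sum_{w\sim v}1/R_{G\setminus\{v\}}(w)$, the decomposition of $T(G)\setminus\{\ol v\}$ into copies of $T(G\setminus\{v\})$ rooted at the neighbours of $v$, multiplicativity over components, and induction on $|V(G)|$). All the steps you flag as potential obstacles do go through: the component containing $(v,w)$ really is isomorphic to the path tree of $G\setminus\{v\}$ rooted at $w$ via dropping the initial vertex, and since matching polynomials are monic the identity is safely interpreted in the field of rational functions in $z$.
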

\begin{thm*}[\cite{farrell1979introduction,gutman1977acyclic}]
  The matching polynomial of a tree is equal to its characteristic polynomial.
\end{thm*}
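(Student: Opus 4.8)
The plan is to compare the two polynomials coefficient by coefficient through the Leibniz expansion of the characteristic determinant, using acyclicity to kill every permutation that is not supported on a matching. Write $M=|V|$ and expand
\begin{equation*}
  \det(x\bbI-A)=\sum_{\sigma\in S_M}\mathrm{sgn}(\sigma)\prod_{i=1}^{M}\bigl(x\,\delta_{i,\sigma(i)}-A_{i,\sigma(i)}\bigr),
\end{equation*}
where $A$ is the $0/1$ adjacency matrix of the tree $T$. A summand is nonzero only if, for every non-fixed point $i$, the entry $A_{i,\sigma(i)}$ is nonzero, i.e. $\{i,\sigma(i)\}\in E$. Thus the non-fixed points of $\sigma$ must carry its cycle structure, every cycle $(i_1\,i_2\,\dots\,i_\ell)$ of length $\ell\ge2$ tracing a closed edge-walk $i_1,i_2,\dots,i_\ell,i_1$ in $T$.

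The \emph{crux} — and the only place where the tree hypothesis enters — is that no cycle of length $\ell\ge3$ can occur: the distinct vertices $i_1,\dots,i_\ell$ joined consecutively by edges would form a genuine cycle of $T$, contradicting acyclicity. Hence every contributing $\sigma$ is an involution that acts as the identity off a set $S$ of $r$ pairwise disjoint edges — precisely an $r$-matching — together with $M-2r$ fixed points. First I would establish this reduction, which is really the whole content of the argument; everything afterwards is bookkeeping of signs and factors.

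It then remains to evaluate such a term. The $M-2r$ fixed points contribute $x^{M-2r}$; each transposition $(i\,j)$ with $\{i,j\}\in E$ contributes $(-A_{ij})(-A_{ji})=A_{ij}^2=1$; and a product of $r$ transpositions has sign $(-1)^r$. Grouping the involutions by their support $S$ (so $|S|=2r$ and the count of perfect matchings of the induced subgraph $T[S]$ is exactly $\haf{A_S}$), the expansion collapses to $\sum_{S\subseteq V}(-1)^{|S|/2}\haf{A_S}\,x^{M-|S|}$, which is formula \eqref{eq:matchPolyhaf} for $\mu_T(x)$; equivalently, summing over $r$ gives $\sum_{r}(-1)^r m(T,r)\,x^{M-2r}$ as in \eqref{eq:matchPoly}. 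I expect no obstacle here beyond care with signs. I note that the argument never uses connectedness, so it in fact proves $\det(x\bbI-A)=\mu_G(x)$ for every \emph{forest} $G$, trees being the connected case.

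As an alternative route one could induct on $|V|$ using the common deletion recurrence at a leaf: both polynomials satisfy $p_G(x)=x\,p_{G\setminus v}(x)-p_{G\setminus\{u,v\}}(x)$ when $v$ is pendant with neighbour $u$ (for the characteristic polynomial this is Schwenk's formula, whose cycle-correction terms vanish on a tree), together with multiplicativity over disjoint unions. This makes the induction immediate once a leaf is chosen, and relocates the essential use of acyclicity to the vanishing of Schwenk's cycle term — the same phenomenon identified above as the main point.
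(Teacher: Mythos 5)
Your argument is correct. Note, however, that the paper does not prove this statement at all: it is listed in Section~\ref{sec:preliminaries} as a known fact with citations to Farrell and Gutman, so there is no in-paper proof to compare against. Your Leibniz-expansion route is the standard one and is sound: with $A_{ii}=0$ the fixed points of a contributing permutation each supply a factor $x$, a cycle of length $\ell\ge3$ would trace a genuine cycle of the graph (impossible in a forest), and the surviving involutions are exactly the $r$-matchings, each with sign $(-1)^r$ and weight $1$, recovering \eqref{eq:matchPoly}. Your two side remarks are also correct and worth keeping: the argument uses only acyclicity, so it proves $\mu_G(x)=\det(x\bbI-A)$ for every forest, and the leaf-deletion recurrence via Schwenk's formula gives an equivalent inductive proof in which acyclicity enters through the vanishing of the cycle-correction terms.
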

\begin{thm*}[Weighted edge recurrence~\cite{cvetkovic1988recent}]\label{thm:wer}
  Let $G=(V,E)$ be a weighted graph where $A=[a_{ij}]$ is its adjacency matrix. If edge $e_{ij}$ is incident to the vertices $v_i$ and $v_j$ then
  \begin{equation}\label{eq:edgeRec}
    \mu_G(z)=\mu_{G-e_{ij}}(z)-a_{ij}\mu_{G\backslash\{v_i,v_j\}}(z)
  \end{equation}
  and
  \begin{equation}\label{eq:edgeRecSignless}
    \mu^+_G(z)=\mu^+_{G-e_{ij}}(z)+a_{ij}\mu^+_{G\backslash\{v_i,v_j\}}(z).
  \end{equation}
\end{thm*}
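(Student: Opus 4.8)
The plan is to reduce both identities to a single recurrence for the weighted matching numbers $m(G,r)$ and then lift it to the polynomial level by a routine shift of the summation index. Recall from \eqref{eq:matchPolyhaf} that in the weighted setting $m(G,r)=\sum_{|S|=2r}\haf{A_S}$ is the sum, over all $r$-matchings $\mathcal{M}$ of $G$, of the product $\prod_{e\in\mathcal{M}}a_e$ of the participating edge weights. The whole argument rests on partitioning the set of $r$-matchings of $G$ according to whether or not they contain the distinguished edge $e_{ij}$.

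First I would establish the combinatorial recurrence
\begin{equation}\label{eq:mRec}
  m(G,r)=m(G-e_{ij},r)+a_{ij}\,m(G\backslash\{v_i,v_j\},r-1).
\end{equation}
An $r$-matching of $G$ that does \emph{not} use $e_{ij}$ consists only of edges in $E\setminus\{e_{ij}\}$, hence is exactly an $r$-matching of $G-e_{ij}$ (which retains the full vertex set of $G$), carrying the same weight; this yields the first term. An $r$-matching that \emph{does} use $e_{ij}$ contributes the factor $a_{ij}$ from that edge, while its remaining $r-1$ edges avoid both $v_i$ and $v_j$ and therefore form an $(r-1)$-matching of the induced subgraph $G\backslash\{v_i,v_j\}$; summing weights over all such matchings gives the second term. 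Since the two cases are disjoint and exhaustive, \eqref{eq:mRec} follows.

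It remains to feed \eqref{eq:mRec} into \eqref{eq:matchPoly}. Writing $M=|V|$, so that $G-e_{ij}$ has $M$ vertices and $G\backslash\{v_i,v_j\}$ has $M-2$, I would substitute termwise:
\begin{equation}
  \mu_G(z)=\sum_{r}(-1)^r m(G-e_{ij},r)\,z^{M-2r}+a_{ij}\sum_r(-1)^r m(G\backslash\{v_i,v_j\},r-1)\,z^{M-2r}.
\end{equation}
The first sum is $\mu_{G-e_{ij}}(z)$ by definition. In the second sum the index shift $s=r-1$ produces an extra sign $(-1)^{s+1}$ and turns the exponent into $(M-2)-2s$, so that the sum equals $-\mu_{G\backslash\{v_i,v_j\}}(z)$; this gives \eqref{eq:edgeRec}. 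The signless identity \eqref{eq:edgeRecSignless} follows from exactly the same computation starting from \eqref{eq:matchPolySignLess}: with all signs dropped, the index shift no longer produces the factor $-1$, and the second term therefore appears with a $+$ sign.

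I do not anticipate a genuine obstacle here; the substantive content is the clean bijective partition behind \eqref{eq:mRec}, and the only place demanding care is the bookkeeping of the index shift together with the edge-weight factor $a_{ij}$, so that the sign flips correctly between the signed and signless versions.
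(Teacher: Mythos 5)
Your proof is correct: the paper states this recurrence without proof (it is recalled from the literature), and your argument—partitioning the weighted $r$-matchings by whether they use $e_{ij}$, then shifting the summation index—is the standard one and handles the sign and the unsquared weight $a_{ij}$ exactly as the paper's conventions require.
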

Note that unlike~\cite{cvetkovic1988recent} the edge weights $a_{ij}$ in~\eqref{eq:edgeRec} and~\eqref{eq:edgeRecSignless} are not squared.

\section{The GBS polynomial for zero displacement}\label{sec:GBS}

Any undirected graph $G$ whose adjacency matrix is $A$ can be encoded into the GBS device~\cite{bradler2018gaussian} by constructing $C=A\oplus A$. An arbitrary photon number distribution can then be, at least in principle, calculated from~\eqref{eq:ProbMixedGBSdisplaceForAopA}. We call this setting the ``pure encoding''.

In this section we define a new graph polynomial called the GBS polynomial for the pure encoding setting and zero displacement (relaxations are discussed in Sections \ref{sec:DGBS} and \ref{sec:mixed}). The GBS polynomial is constructed so that its coefficients are the probabilities of orbits of non-collision photon click patterns. At the same time, the collision orbits can be shown to be the coefficients of the GBS polynomial of an extended graph (see Section \ref{subsec:collision}). In Section \ref{subsec:properties} and~\ref{subsec:identities} we prove a number of its useful properties.

\subsection{Definition of the GBS polynomial}

The GBS polynomial is defined as follows:
\begin{defi}
    Let $g(G,r) = \sum_{|S|=2r} \haf^{\,2}{A_S}$. Then, the GBS polynomial of $G$ is
    \begin{equation}\label{eq:GBSpoly}
        \gbs_G(x) \df \sum_{r=0}^{\lfloor M/2\rfloor} (-1)^r g(G,r)\,x^{M-2r}.
    \end{equation}
    Thus
    \begin{equation}\label{eq:GBSpolyHaf}
        \gbs_G(x) = \sum_{S \subseteq V}(-1)^{|S|/2}\haf^{\,2}{A_S}\,x^{M-|S|}.
    \end{equation}
    We also define the signless GBS polynomial as
    \begin{equation}\label{eq:GBSpolySignless}
    \gbs^+_G(x)\df\sum_{r=0}^{\lfloor M/2\rfloor}g(G,r) x^{M-2r}.
    \end{equation}
\end{defi}
In both matching and GBS polynomials, the leading coefficient is always $x^M$ (corresponding to $S = \emptyset$) and the coefficient of $x^{M-2}$ is the number of edges. Note that  the coefficients of $x^{M-2r}$ in the matching polynomial and GBS polynomial of a graph are different if and only if there is some $A_S$ with $|S|=2r$ whose hafnian is greater than one.  That is true if the graph has an even cycle of length $2r$; if the graph has no even cycles, the matching and GBS polynomials are equal.

\subsection{The relation between GBS polynomial and non-collision GBS statistics}\label{subsec:collision}

The motivation for the introduction of the GBS polynomial is that its coefficients are proportional to the probabilities of certain orbits - which in turn are an important output statistics of the GBS device for a range of applications \cite{bradler2018graph, schuld2019quantum}. Let $\n=(n_i)$ where $n_i\leq1$ (the so-called collision-free condition). The probability of orbit $O_{\n}$ (the set of all permutations of $\n$) represented by $\n$ for a graph $G$ whose adjacency matrix is  $A\in\bbR^{M\times M}$  reads:
\begin{equation}\label{eq:probOrbitColFree}
  p_G(O_{\n})\df\sum_{\n\in O_{\n}}^{|O_{\n}|}p_G(\n)
  ={1\over\sqrt{\det{\s_{Q}}}}\sum_{\n\in O_{\n}}^{|O_{\n}|}\haf^{\,2}{A_S}.
\end{equation}
There are $\lfloor M/2\rfloor+1$ collision-free orbits labeled by $|\n|=\sum_in_i$. Note that $|\n|=|S|=2r$, $0\leq|S|\leq M$ for the vertex subset $S$ and so from~\eqref{eq:GBSpoly} and~\eqref{eq:GBSpolyHaf} we conclude
\begin{equation}\label{eq:GBSCoeffColfree}
    g(G,r)=\sqrt{\det{\s_{Q}}}\,p_G(O_{\n}).
\end{equation}

\subsection{The relation between GBS polynomial and collision GBS statistics}\label{subsec:properties}

The situation when we drop the collision-free condition on $\n$ (so there exist modes where $n_i>1$) has been analyzed in~\cite{bradler2018graph} and we  first recall a few basic facts necessary for this work as well. Let $n=\max{n_i}$ and $\N=(N_1,\dots,N_{nM})$. We introduce a mapping  $\n\mapsto\N$ called ``decollision'' defined as
\begin{equation}\label{eq:decollision}
  n_i\mapsto (N_{n(i-1)+1},\dots,N_{ni})=(\underbrace{0,\dots,0}_{n-n_i},\underbrace{1,\dots,1}_{n_i}).
\end{equation}
The name comes from the fact that $N_j=0,1$ and note the choice of the increasing order $N_j\leq N_{j+\ell},\forall j, \ell$. Then, Eq.~\eqref{eq:ProbMixedGBSdisplaceForAopA} becomes
\begin{equation}\label{eq:equalityHaf}
\big(\partial^{|\n|}_{\bs{\b}}e^{{1\over2}\bs{\b}^\top A\bs{\b}}\big\rvert_{\bs{\b}=0}\big)^2
=\big(\partial_{\bs{\a}}^{|\bs N|}e^{{1\over2}\bs{\a}^\top (A\ox\bbJ_n)\bs{\a}}\big\rvert_{\bs{\a}=0}\big)^2
=\haf^{\,2}{[A\sox\bbJ_{|\n|}]},
\end{equation}
where $\bs\a=(\a_i)_{i=1}^{nM}$. The operation $\sox$  on the RHS  stands for the reduced Kronecker product and was introduced in~\cite{bradler2018graph}. It succinctly summarizes the action of $\partial^{|\N|}_{\bs{\a}}\equiv\prod_{i=1}^{nM}{\partial^{N_i}\over\partial\a_i^{N_i}}$ by ignoring the rows and columns of $A\ox\bbJ_n$ corresponding to $N_i=0$ in the partial derivative. To be more specific, given $\n$, take $n=\max{n_i}$, create $A\ox\bbJ_n$ and remove $n-n_i$ rows (columns) starting from the $(n(i-1)+1)$-th row (column) of $A\ox\bbJ_n$. The resulting matrix is $A\sox\bbJ_{|\n|}\in\bbR^{|\n|\times|\n|}$ which is real symmetric whenever $A$ is. It helps us  write down the collision orbit probability in a simple way~\cite{bradler2018graph}
\begin{equation}\label{eq:probOrbitCol}
  p_G(O_{\n})=\sum_{\n\in O_{\n}}^{|O_{\n}|}p_G(\n)
  ={1\over\sqrt{\det{\s_{Q}}}}{1\over\bs{n}!}\sum_{\N}\haf^{\,2}{[A\sox\bbJ_{|\n|}]},
\end{equation}
where the click patterns $\N$ we sum over are those corresponding to the summed collision orbits $\n$ via~\eqref{eq:decollision}. Is there a natural way of forming a GBS polynomial out of these probabilities?
%...unambiguously given by a collision event  $\n=(n_1,\dots,n_M)$.
Clearly, $A\sox\bbJ_{|\n|}$ and $A\ox\bbJ_n$ are related in a similar way as $A_S$ and $A$ so perhaps  by investigating the collision probabilities we are secretly studying the GBS polynomial of $G\times\ol{K}_n$. Indeed, this turns out to be the case.
\begin{prop}\label{prop:GBSpolyCol}
    Consider a click pattern $\n=(n_1,\dots,n_M)$  rewritten as
    \begin{equation}\label{eq:n}
        \n=(\underbrace{0,\dots,0}_{k_0},\underbrace{1,\dots,1}_{k_1},\dots,\underbrace{n,\dots,n}_{k_n}).
    \end{equation}
    Fix $n\geq1$ such that $n_i\leq n$. Then, for $0\leq 2r\leq nM$, we write the GBS polynomial of $G\times\ol{K}_n$ as
    \begin{equation}\label{eq:GBSpolyCol}
    \gbs_{G\times\ol{K}_n}(x)=\sum_{r=0}^{\lfloor nM/2\rfloor}(-)^rg(G\times\ol{K}_n,r)x^{nM-2r},
    \end{equation}
    where the coefficients read
    \begin{equation}\label{eq:GBSCoeffsMultiProbsInDetail}
      g(G\times\ol{K}_n,r)
      =\sqrt{\det{\s_Q}}\sum_{\substack{n_1+\dots+n_M=|\n|\\n_i\leq n_{i+\ell}\\n_i\leq n}}\n!\prod_{j=0}^{n}\binom{n}{j}^{k_j}p_G(O_{\n}).
    \end{equation}
\end{prop}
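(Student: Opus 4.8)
The plan is to expand the defining sum $g(G\times\ol{K}_n,r)=\sum_{|S|=2r}\haf^{\,2}{(A\ox\bbJ_n)_S}$ over the $nM$-vertex graph $G\times\ol{K}_n$ and reorganize it into orbit contributions. First I would label the vertices of $G\times\ol{K}_n$ by pairs $(i,a)$ with $i\in\{1,\dots,M\}$ a vertex of $G$ and $a\in\{1,\dots,n\}$ a copy index, so that the adjacency matrix is $A\ox\bbJ_n$ and the $\big((i,a),(j,b)\big)$ entry equals $A_{ij}$ for \emph{every} pair of copies $a,b$. Each size-$2r$ subset $S$ is then recorded by its \emph{profile} $\m=(m_1,\dots,m_M)$, where $m_i$ is the number of copies of vertex $i$ lying in $S$; thus $0\le m_i\le n$ and $m_1+\dots+m_M=2r$.

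The key step I would isolate is that $\haf^{\,2}{(A\ox\bbJ_n)_S}$ depends only on the profile $\m$, not on which specific copies are selected. Indeed, all copies of a fixed vertex index identical rows and columns of $A\ox\bbJ_n$ (the blocks of $\bbJ_n$ are constant), so $(A\ox\bbJ_n)_S$ coincides, up to a simultaneous row/column permutation, with the reduced Kronecker product $A\sox\bbJ_{2r}$ built from the pattern $\m$ via the decollision construction of Section~\ref{subsec:properties}; since hafnians are invariant under such permutations, the value is the same. As the number of subsets realizing a given profile is $\prod_{i=1}^M\binom{n}{m_i}$, this lets me rewrite
\[
g(G\times\ol{K}_n,r)=\sum_{\substack{m_1+\dots+m_M=2r\\0\le m_i\le n}}\Big(\prod_{i=1}^M\binom{n}{m_i}\Big)\,\haf^{\,2}{[A\sox\bbJ_{2r}]},
\]
where on the right $A\sox\bbJ_{2r}$ is the matrix attached to $\m$.

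Next I would pass from profiles to orbits. Collecting the profiles into orbits $O_{\n}$, each orbit has a sorted representative $\n$ with $k_j$ entries equal to $j$, and every member shares the same binomial weight $\prod_{j=0}^n\binom{n}{j}^{k_j}$ and the same factorial $\m!=\n!$. The inner orbit sum $\sum_{\m\in O_{\n}}\haf^{\,2}{[A\sox\bbJ_{2r}]}$ is exactly the quantity appearing (as a sum over the corresponding decollided patterns $\N$) in the collision orbit-probability formula~\eqref{eq:probOrbitCol}, hence equals $\sqrt{\det{\s_Q}}\,\n!\,p_G(O_{\n})$. Substituting and reindexing the outer sum by sorted representatives (the constraints $n_i\le n_{i+\ell}$ and $n_i\le n$) then yields~\eqref{eq:GBSCoeffsMultiProbsInDetail} verbatim.

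The step I expect to demand the most care is the profile-invariance claim together with the bookkeeping of the sum over $\N$ in~\eqref{eq:probOrbitCol}: one must verify that the decollisions of the permutations comprising $O_{\n}$ are in exact bijection with the profiles $\m$ in that orbit, so that the orbit sum of squared hafnians reproduces $\sqrt{\det{\s_Q}}\,\n!\,p_G(O_{\n})$ with no over- or under-counting. The remaining manipulations are a regrouping of a finite sum, with the $r=0$ term covered by the convention that the hafnian of an empty matrix is $1$.
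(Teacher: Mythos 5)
Your proposal is correct and follows essentially the same route as the paper's proof: both identify $g(G\times\ol{K}_n,r)$ with the sum of $\haf^{\,2}{[A\sox\bbJ_{|\n|}]}$ over all size-$2r$ subsets of $G\times\ol{K}_n$, convert each term to $\sqrt{\det{\s_Q}}\,\n!\,p_G(\n)$ via~\eqref{eq:equalityHaf} and~\eqref{eq:ProbMixed}, and regroup by the multiplicity $\prod_{j=0}^{n}\binom{n}{j}^{k_j}$ of subsets sharing a given collision pattern. Your explicit ``profile-invariance'' observation (that the squared hafnian depends only on how many copies of each vertex are selected, by permutation invariance of the hafnian) is the same counting the paper performs when it notes there are $\binom{n}{j}$ ways to realize each $n_i=j$ under the decollision map~\eqref{eq:decollision}; you simply make that step more explicit.
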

\begin{proof}
    We write
    \begin{equation}\label{eq:NorbitProb}
       g(G\times\ol{K}_n,r)
       =\sum_{\N\in O_{\N}}^{|O_{\N}|}\haf^{\,2}{[A\sox\bbJ_{|\n|}]}
       =\sum_{\N\in O_{\N}}^{|O_{\N}|}
       \big(\partial_{\bs{\a}}^{|\N|}e^{{1\over2}\bs{\a}^\top (A\ox\bbJ_n)\bs{\a}}\big\rvert_{\bs{\a}=0}\big)^2,
    \end{equation}
    where
    \begin{equation}\label{eq:orbitSizeforN}
        |O_{\N}|=\binom{nM}{2r}.
    \end{equation}
    The first equality comes from the definition of the GBS polynomial, Eq.~\eqref{eq:GBSpolyHaf} (think of $G$ as $G\times\ol{K}_n$ and so it becomes~\eqref{eq:GBSpolyCol}), and the second equality follows from~\eqref{eq:equalityHaf}. We express the squared term on the RHS with the help of~\eqref{eq:ProbMixedGBSdisplaceForAopA} (recall $\bs{D}=0$ for now)
    \begin{equation}\label{eq:ProbMixed}
    \sqrt{\det{\s_Q}}\,\n!\,p_G(\bs{n})=\big(\partial^{|\n|}_{\bs{\b}}e^{{1\over2}\bs{\b}^\top A\bs{\b}}\big\rvert_{\bs{\b}=0}\big)^2
    \end{equation}
    as
    \begin{equation}\label{eq:GBSCoeffsCol}
      g(G\times\ol{K}_n,r)=\sqrt{\det{\s_Q}}\sum_{\N\in O_{\bs{N}}}^{|O_{\N}|}\n!\,p_G(\n).
    \end{equation}
    Note, however, that there seems to be a mismatch:  we are summing over all the elements of the orbit $O_{\N}$ but the summand is a function of $\n$. Moreover, unlike~\eqref{eq:probOrbitCol}, we are summing over the whole orbit of $\N$. There is a link between $\n$ and $\N$ given by the decollision transformation, Eq.~\eqref{eq:decollision}, but the counting does not match: $|O_{\N}|$ for some $\N$ does not count the permutations of $\n$ it has been obtained from. Indeed, the relation between $\N$ and $\n$ is something to be careful about. As we have seen, a pattern $\n$ determines $\N$ uniquely, but not the other way around: many permutationally \emph{inequivalent} $\n$'s contribute to an $\N$ with a fixed number of ones and zeros. Moreover, we assembled the RHS of~\eqref{eq:decollision} in an increasing order for all $n_i\in\n$ but any order is equally valid. For these reasons we kept $\n!$ inside the sum in~\eqref{eq:GBSCoeffsCol}. We fix $M$ (by the choice of $G$), $n$ and $r$ (by what coefficient $g(G\times\ol{K}_n,r)$ we are after) such that $n\geq1$ and $0\leq2r\leq nM$ and we consider all possible click patterns $\n$ satisfying
    \begin{equation}\label{eq:nN}
    2r\equiv\sum_{j=1}^{nM}N_j\equiv|\N|=\sum_{i=1}^{M}n_i\equiv|\n|\quad\mbox{ s.t }\quad n=\max{n_i}.
    \end{equation}
    The sum in~\eqref{eq:GBSCoeffsCol} splits into the sum over all restricted integer partitions of $|\n|$ (hence counting the number of orbits) and the sum over each orbit:
    \begin{equation}\label{eq:GBSCoeffsColFinal}
      g(G\times\ol{K}_n,r)
      =\sqrt{\det{\s_Q}}\sum_{\substack{n_1+\dots+n_M=|\n|\\n_i\leq n_{i+\ell}\\n_i\leq n}}\n!\prod_{j=0}^{n}\binom{n}{j}^{k_j}
      \underbrace{\sum_{\n\in O_{\n}}^{|O_{\n}|}p_G(\n)}_{p_G(O_{\n})}.
    \end{equation}
    The combinatorial coefficient $\prod_{j=0}^{n}\binom{n}{j}^{k_j}$ originates from the aforementioned fact that each $n_i$  gets mapped to an $n$-tuple on the RHS of~\eqref{eq:decollision}. There are $\binom{n}{j}$ of such maps, independently for each $j$ ($k_j$ of them), where we used~\eqref{eq:n}. The product comes from $0\leq j\leq n$.
\end{proof}
\begin{rem}
  Using the Burnside formula
    \begin{equation}\label{eq:burnside}
    \left| O_{\n}\right| = {M \choose k_1, \ldots, k_\ell} = \frac{M!}{k_0!\; k_1! \ldots k_\ell!}
    \end{equation}
 for the second summand of Eq.~\eqref{eq:GBSCoeffsColFinal} we get
  \begin{equation}\label{eq:burnsideCount}
    \sum_{\substack{n_1+\dots+n_M=|\n|\\n_i\leq n_{i+\ell}\\n_i\leq n}}\binom{M}{k_0,k_1,\dots,k_\ell}\prod_{j=0}^{n}\binom{n}{j}^{k_j}
  \end{equation}
  and this must be equal to $|O_{\N}|$ in~\eqref{eq:orbitSizeforN}.
\end{rem}

Motivated by the proof of Proposition~\ref{prop:GBSpolyCol}, we introduce the triple $\tau=(M,n,r)$ encapsulating all degrees of freedom.
\begin{exa}
  Let $\tau$=(6,2,3). So $G$ is a graph on 6 vertices and $m(G\times\ol{K}_2,3)$ is given by summing squares of hafnians of all $(2\times3=)6$-dimensional submatrices of $A\ox\bbJ_2$. There are  $\binom{12}{6}=924$ of them according to~\eqref{eq:orbitSizeforN}. The  corresponding  (permutationally inequivalent) collision orbit representatives are
  \begin{align*}%\label{}
    {\n} & = (1,1,1,1,1,1), \\
    {\n} & = (0,1,1,1,1,2), \\
    {\n} & = (0,0,1,1,2,2), \\
    {\n} & = (0,0,0,2,2,2)
  \end{align*}
  and their orbit elements all map to an $\N$ with an equal number of zeros and ones according to~\eqref{eq:decollision}. From~\eqref{eq:burnsideCount} we get
  \begin{equation}%\label{}
    \sum_{{\substack{n_1+\dots+n_M=|\n|\\n_i\leq n_{i+\ell}\\n_i\leq n}}}\binom{M}{k_0,k_1,\dots,k_\ell}\prod_{j=0}^{n}\binom{n}{j}^{k_j}
    =
    1\times\binom{2}{1}^6+{6!\over4!}\times\binom{2}{1}^4+{6!\over2!2!2!}\times\binom{2}{1}^2+{6!\over3!3!}=924,
  \end{equation}
  where the binomial coefficients equal to one are omitted.
\end{exa}
\begin{exa}
  Let $\tau=(6,3,4)$ and a graph in Fig.~\ref{fig:g6}.
    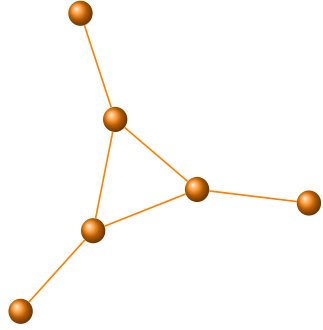
\begin{figure}[h]
        \resizebox{4.3cm}{4.3cm}{
            \begin{tikzpicture}
            \GraphInit[vstyle=Art]
            \Vertex[L=\hbox{$1$},x=1.0378cm,y=5.0cm]{v0}
            \Vertex[L=\hbox{$2$},x=1.6399cm,y=3.2192cm]{v1}
            \Vertex[L=\hbox{$3$},x=3.0601cm,y=2.0443cm]{v2}
            \Vertex[L=\hbox{$4$},x=1.2581cm,y=1.3521cm]{v3}
            \Vertex[L=\hbox{$5$},x=0.0cm,y=0.0cm]{v4}
            \Vertex[L=\hbox{$6$},x=5.0cm,y=1.8166cm]{v5}
            \Edge[](v0)(v1)
            \Edge[](v1)(v2)
            \Edge[](v1)(v3)
            \Edge[](v2)(v3)
            \Edge[](v2)(v5)
            \Edge[](v3)(v4)
            \end{tikzpicture}
        }
      \caption{A graph on six vertices.}
      \label{fig:g6}
    \end{figure}
    The $\tau$ tells us to use the following orbits
    \begin{equation}\label{eq:nExample}
        ({\n})=\big((111122),(011222),(002222),(111113),(011123),(001223),(001133),(000233)\big)
    \end{equation}
    whose proper counting should yield $\binom{nM}{2r}=\binom{18}{8}=43758$ permutations. Indeed, from Eq.~\eqref{eq:GBSCoeffsMultiProbsInDetail} we collect the combinatorial products in
    \begin{align}%\label{}
      &\bs c = \\\nn
      \Bigg(
      &{6!\over2!4!}\binom{3}{2}^2\binom{3}{1}^4,{6!\over2!3!}\binom{3}{2}^3\binom{3}{1}^2,{6!\over4!2!}\binom{3}{2}^4,{6!\over5!}\binom{3}{1}^5,{6!\over3!}\binom{3}{2}\binom{3}{1}^3,{6!\over2!2!}\binom{3}{2}^2\binom{3}{1},{6!\over(2!)^3}\binom{3}{1}^2,{6!\over2!3!}\binom{3}{2}
      \Bigg)
    \end{align}
    and find $\sum_i c_i=43758$. To test~\eqref{eq:GBSCoeffsMultiProbsInDetail} itself we find the collection
    $$
    \big(\sqrt{\det{\s_Q}}\,\n!\sum_{\n\in O_{\n}}^{|O_{\n}|}p_G(\n)\big)=\big(3888,3348,4320,1296,0,96,288,60\big),
    $$
    where the coefficients $\n!$ are calculated from~\eqref{eq:nExample}. Then, the RHS of~\eqref{eq:GBSCoeffsMultiProbsInDetail} equals 384912  which is the GBS polynomial coefficient accompanying $x^{10}$ in~\eqref{eq:GBSpoly}. This is indeed equal to the RHS of~\eqref{eq:NorbitProb}.
\end{exa}

\subsection{Identities of the GBS polynomial}\label{subsec:identities}
The matching polynomial is known to satisfy a number of identities. The GBS polynomial satisfies some of them.
\begin{thm}\label{thm:GBSmultiplicative}
$\gbs_{G_1 \uplus G_2} = \gbs_{G_1} \gbs_{G_2}$.
\end{thm}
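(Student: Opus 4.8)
The plan is to exploit the block structure of the disjoint union at the level of adjacency submatrices, together with the multiplicativity of the hafnian over direct sums. Write $G_1=(V_1,E_1)$ and $G_2=(V_2,E_2)$ with $|V_1|=M_1$, $|V_2|=M_2$, so that $G_1\uplus G_2$ has $M=M_1+M_2$ vertices and block-diagonal adjacency matrix $A=A_1\op A_2$. Any vertex subset $S\subseteq V_1\sqcup V_2$ splits uniquely as $S=S_1\sqcup S_2$ with $S_i=S\cap V_i$, and the principal submatrix inherits the block form $A_S=(A_1)_{S_1}\op(A_2)_{S_2}$.

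First I would record the key fact that the hafnian is multiplicative over direct sums: $\haf{(B\op C)}=\haf{B}\,\haf{C}$ for symmetric $B,C$. This follows directly from the perfect-matching interpretation in the definition of the hafnian: any perfect matching of the combined index set that pairs an index of $B$ with an index of $C$ contributes a vanishing off-block entry, so the only surviving matchings are disjoint unions of a perfect matching of $B$ and one of $C$. In particular, if either block has odd size the corresponding factor — hence the whole hafnian — is zero, which is exactly the parity behaviour needed below. Squaring gives $\haf^{\,2}{A_S}=\haf^{\,2}{(A_1)_{S_1}}\,\haf^{\,2}{(A_2)_{S_2}}$.

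Then I would substitute into the defining expression \eqref{eq:GBSpolyHaf} and factor the resulting double sum. Because $\haf{(A_i)_{S_i}}=0$ whenever $|S_i|$ is odd, every surviving term has both $|S_1|$ and $|S_2|$ even; for such terms $|S|/2=|S_1|/2+|S_2|/2$ and $M-|S|=(M_1-|S_1|)+(M_2-|S_2|)$, so the sign $(-1)^{|S|/2}$ and the monomial $x^{M-|S|}$ both split as products of an $S_1$-factor and an $S_2$-factor. Hence
\begin{align*}
\gbs_{G_1\uplus G_2}(x)
&=\sum_{S_1\subseteq V_1}\sum_{S_2\subseteq V_2}(-1)^{|S_1|/2}(-1)^{|S_2|/2}\haf^{\,2}{(A_1)_{S_1}}\haf^{\,2}{(A_2)_{S_2}}\,x^{M_1-|S_1|}x^{M_2-|S_2|}\\
&=\Big(\sum_{S_1\subseteq V_1}(-1)^{|S_1|/2}\haf^{\,2}{(A_1)_{S_1}}x^{M_1-|S_1|}\Big)\Big(\sum_{S_2\subseteq V_2}(-1)^{|S_2|/2}\haf^{\,2}{(A_2)_{S_2}}x^{M_2-|S_2|}\Big)\\
&=\gbs_{G_1}(x)\,\gbs_{G_2}(x),
\end{align*}
which is the claim.

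The only genuinely delicate point is the parity bookkeeping in the sign: a priori $|S|$ even does not force $|S_1|$ and $|S_2|$ individually even (both could be odd), in which case $(-1)^{|S|/2}$ would not factor. What rescues the argument is precisely that odd-size blocks annihilate the hafnian, so the problematic odd$\,\op\,$odd terms contribute zero and may be discarded before factoring. I would therefore make sure to state the hafnian multiplicativity fact (and its vanishing on odd blocks) cleanly up front, since everything else is routine index bookkeeping. The identical argument with $\haf$ in place of $\haf^{\,2}$ recovers the classical multiplicativity $\mu_{G_1\uplus G_2}=\mu_{G_1}\mu_{G_2}$ of the matching polynomial, and the signless version $\gbs^+_{G_1\uplus G_2}=\gbs^+_{G_1}\gbs^+_{G_2}$ follows verbatim with every sign replaced by $+1$.
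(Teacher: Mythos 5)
Your proposal is correct and follows essentially the same route as the paper's proof: split $S$ into $S_1\sqcup S_2$, use multiplicativity of the hafnian over direct sums (with vanishing on odd blocks), and factor the double sum. Your explicit remark on the parity bookkeeping for the sign $(-1)^{|S|/2}$ is a worthwhile clarification that the paper leaves implicit, but it does not change the argument.
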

\begin{proof}
Let $V_1$ and $V_2$  be the vertex sets of $G_1$ and $G_2$, $n_1 = |V_1|$ and $n_2 = |V_2|$. For $S \subset V_1 \cup V_2$, we can write $S = S_1 \cup S_2$ where $S_1 = S \cap V_1$ and $S_2 = S \cap V_2$, and $\haf{A_S} = \haf{A_{S_1}} \haf{A_{S_2}}$ (which of course is $0$ unless $S_1$ and $S_2$ have even cardinalities).

Thus we have
\begin{align*}%\label{}
    \gbs_{G_1 \uplus G_2} &= \sum_{S_1 \subseteq V_1} \sum_{S_2 \subseteq V_2} (-1)^{|S_1|/2} (-1)^{|S_2|/2} \haf^{\,2}{A_{S_1}}\haf^{\,2}{A_{S_2}} x^{n_1-|S_1|} x^{n_2 - |S_2|}\\
    &= \left( \sum_{S_1 \subseteq V_1} (-1)^{|S_1|/2}  \haf^{\,2}{A_{S_1}}  x^{n_1-|S_1|} \right)  \left( \sum_{S_2 \subseteq V_2} (-1)^{|S_2|/2}  \haf^{\,2}{A_{S_2}}  x^{n_2-|S_2|} \right)\\
    &= \gbs_{G_1} \gbs_{G_2}
\end{align*}
\end{proof}
\begin{thm}
$$
\dfrac{\dif}{\dif{x}} \gbs_G(x) = \sum_{v \in V} \gbs_{G\backslash\{v\}}(x),
$$
where $G \backslash \{v\}$ is the graph $G$ with vertex $v$ (and its incident edges) removed.
\end{thm}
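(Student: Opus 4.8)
The plan is to prove the identity directly from the hafnian-sum expansion~\eqref{eq:GBSpolyHaf} of the GBS polynomial and to match the two sides coefficient by coefficient, so that everything reduces to an elementary double-counting argument. Writing $M=|V|$, the first step is to differentiate~\eqref{eq:GBSpolyHaf} term by term, which gives
\[
  \frac{\dif \gbs_G(x)}{\dif x} = \sum_{S \subseteq V}(-1)^{|S|/2}\,(M-|S|)\,\haf^{\,2}{A_S}\,x^{M-1-|S|}.
\]
Thus the goal is to show that the right-hand side $\sum_{v\in V}\gbs_{G\backslash\{v\}}(x)$ reproduces exactly this polynomial.

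Next I would expand the right-hand side using~\eqref{eq:GBSpolyHaf} applied to each vertex-deleted graph. Since $G\backslash\{v\}$ has $M-1$ vertices, we have $\gbs_{G\backslash\{v\}}(x)=\sum_{S\subseteq V\backslash\{v\}}(-1)^{|S|/2}\haf^{\,2}{A_S}\,x^{(M-1)-|S|}$. The key observation is that for $v\notin S$ the principal submatrix $A_S$, and therefore $\haf^{\,2}{A_S}$, is unchanged when one passes from $G$ to $G\backslash\{v\}$, because $A_S$ depends only on the induced subgraph on $S$. Summing over all $v\in V$ and interchanging the order of summation, each subset $S\subseteq V$ then contributes its term $(-1)^{|S|/2}\haf^{\,2}{A_S}\,x^{M-1-|S|}$ once for every vertex $v$ satisfying $S\subseteq V\backslash\{v\}$, i.e. for every $v\notin S$.

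The final step is the count: the number of vertices $v\in V$ with $v\notin S$ is precisely $M-|S|$. Collecting terms yields
\[
  \sum_{v\in V}\gbs_{G\backslash\{v\}}(x) = \sum_{S\subseteq V}(M-|S|)(-1)^{|S|/2}\haf^{\,2}{A_S}\,x^{M-1-|S|},
\]
which agrees termwise with the differentiated expression above, completing the proof. I do not expect a genuine obstacle here: the argument is purely combinatorial, and the only points demanding mild care are the bookkeeping of the multiplicity $M-|S|$ and the remark that deleting a vertex outside $S$ leaves $A_S$ (hence its squared hafnian) intact, so that no coefficient is distorted as the ambient graph shrinks.
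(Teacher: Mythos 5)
Your proposal is correct and follows essentially the same route as the paper: differentiate the hafnian expansion termwise, rewrite the factor $M-|S|$ as a sum over the vertices $v\notin S$, and interchange the order of summation, using that $A_S$ is unaffected by deleting a vertex outside $S$. No further comment is needed.
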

\begin{proof}
\begin{align*}%\label{}
    \dfrac{\dif}{\dif{x}} \gbs_G(x)
    &= \sum_{S \subseteq V} (-1)^{|S|/2} (n-|S|) \haf^{\,2}{A_S}\,x^{n-1-|S|} \\
    &= \sum_{S \subseteq V} \sum_{v \in V \backslash \{v\}} (-1)^{|S|/2} \haf^{\,2}{A_S}\,x^{n-1-|S|}\\
    &= \sum_{v \in V} \sum_{S \subseteq V \backslash \{v\}} (-1)^{|S|/2} \haf^{\,2}{A_S}\,x^{n-1-|S|}\\
    &= \sum_{v \in V} \gbs_{G\backslash \{v\}}(x)
\end{align*}
\end{proof}
The following result is not known to have a counterpart in the matching polynomial theory. It will turn out to be a special case of our main result.
\begin{thm}\label{thm:CartesianHaf}
$\gbs_G(x)=(-i)^n\haf{[G\,\square\,P_2(ix)]}$
%The GBS polynomial of $G$ is $(-i)^n$ times the hafnian of the Cartesian product of $P_2$ and $G$, with weights $ix$ on the edges $(1,v), (2,v)$ for vertices $v$ of $G$.
\end{thm}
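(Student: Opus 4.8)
The plan is to evaluate $\haf{[G\,\square\,P_2(x)]}$ directly by a combinatorial decomposition of the perfect matchings of the prism, and then to check that the substitution $x\mapsto ix$ together with the prefactor $(-i)^{n}$ (with $n=|V|$) reproduces exactly the alternating signs in the definition~\eqref{eq:GBSpolyHaf} of $\gbs_G$.

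First I would write down the adjacency matrix of the prism. Labelling the $2n$ vertices as $(v,1)$ and $(v,2)$ for $v\in V$ (the two copies of $G$), the Cartesian-product rule $A_1\ox\bbI_2+\bbI_1\ox A_2$ with $A_2=\begin{bmatrix}0&x\\x&0\end{bmatrix}$ gives
$$
A_{G\,\square\,P_2(x)}=\begin{bmatrix}A & x\bbI_n\\ x\bbI_n & A\end{bmatrix},
$$
so the ``horizontal'' edges inside each copy carry the weights of $G$, while the $n$ ``vertical'' edges $\{(v,1),(v,2)\}$ each carry weight $x$.

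The key combinatorial step is to classify perfect matchings of the prism according to which vertical edges they use. Since $\{(v,1),(v,2)\}$ is the only vertical edge incident to either of its endpoints, if $(v,1)$ is matched vertically then so is $(v,2)$, and conversely. Hence a perfect matching determines a subset $T\subseteq V$ of vertically matched vertices, contributing a factor $x^{|T|}$; the remaining vertices $S=V\setminus T$ must then be matched horizontally, independently within copy~$1$ and within copy~$2$, each such partial matching being a perfect matching of the induced subgraph $G[S]$. Summing the product of edge weights over the horizontal matchings of each copy contributes a factor $\haf{A_S}$ per copy, so grouping by $T$ yields
$$
\haf{[G\,\square\,P_2(x)]}=\sum_{T\subseteq V}x^{|T|}\,\haf^{\,2}{A_{V\setminus T}}=\sum_{S\subseteq V}\haf^{\,2}{A_S}\,x^{n-|S|}.
$$

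Finally I would substitute $x\mapsto ix$ and multiply by $(-i)^n$. Only subsets $S$ of even cardinality contribute a nonzero hafnian, and for such $S$ one has $(-i)^n(ix)^{n-|S|}=i^{-|S|}x^{n-|S|}=(-1)^{|S|/2}x^{n-|S|}$, using $(-i)^n i^n=1$ and $i^{-|S|}=(-1)^{|S|/2}$ for even $|S|$. Comparing the result with~\eqref{eq:GBSpolyHaf} gives $\gbs_G(x)=(-i)^n\haf{[G\,\square\,P_2(ix)]}$, as claimed. The only substantive part is justifying the matching decomposition, in particular that the two horizontal matchings on $S$ are genuinely independent (producing the \emph{square} of the hafnian) and that the vertical edges necessarily couple the two copies over the \emph{same} vertex set $S$; once that is in place, the remainder is routine bookkeeping of signs and exponents.
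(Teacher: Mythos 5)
Your proposal is correct and follows essentially the same route as the paper: both decompose the perfect matchings of the prism by the set of vertical (rung) edges used, observe that the two copies of $G$ must then be matched independently on the same complementary vertex set $S$ (yielding $\haf^{\,2}{A_S}$), and finish with the sign bookkeeping $(-i)^n(ix)^{n-|S|}=(-1)^{|S|/2}x^{n-|S|}$ for even $|S|$. Your write-up is if anything slightly more explicit about why the vertical edges force the same set $S$ in both copies.
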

\begin{proof}
$\haf{[P_2\,\square\,G]}$ is the weighted sum of perfect matchings of $P_2\,\square\,G$.  A perfect matching of $P_2 \,\square\, G$ consists of some set of edges $(1,v), (2,v)$, say for $v \in  V \backslash S$, together with a perfect matching of $\{1\} \times S$ and a perfect matching of $\{2\} \times G$.  Of course $|S|$ must be even for this to exist.  The sum of the contributions of these to $\haf{[P_2 \,\square\, G]}$ (for a given $S \subseteq V$) is $(-ix)^{M-|S|} \haf^{\,2}{A_S}$, and multiplying  by $(-i)^M$ we get $(-1)^{|S|/2} x^{M-|S|} \haf^{\,2}{A_S}$, the term corresponding to $S$ in~\eqref{eq:GBSpolyHaf}.
\end{proof}
\begin{rem}
  Note that Theorem~\ref{thm:CartesianHaf} becomes $\gbs^+_G(x)=\haf{[G\,\square\,P_2(x)]}$ for the signless GBS polynomial \eqref{eq:GBSpolySignless}. In fact, as we will see in the next sections, the signless GBS polynomial will make frequent appearances.
\end{rem}
Theorem~\ref{thm:CartesianHaf} has an interesting consequence.
\begin{cor}\label{cor:linearSystem}
  Given $\lfloor M/2\rfloor$  hafnians $\haf{[G\,\square\,P_2(x)]}$ for known choices of $x$ we can find all the GBS coefficients $g(G,r)$ for any graph $G$ by solving a system of $\lfloor M/2\rfloor$ linear equations.
\end{cor}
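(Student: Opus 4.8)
The plan is to read Theorem~\ref{thm:CartesianHaf} in the signless form recorded in the Remark immediately following it, namely
\[
\gbs^+_G(x)=\haf{[G\,\square\,P_2(x)]}=\sum_{r=0}^{\lfloor M/2\rfloor}g(G,r)\,x^{M-2r},
\]
and to treat the hafnian on the right as data from which the coefficients $g(G,r)$ are recovered by interpolation. Set $K=\lfloor M/2\rfloor$. The first observation is that this polynomial carries $K+1$ coefficients, but that the leading one is already known: for $r=0$ the only subset is $S=\emptyset$, and since the hafnian of the empty matrix is $1$ we have $g(G,0)=\haf^{\,2}{A_\emptyset}=1$. Hence only the $K$ coefficients $g(G,1),\dots,g(G,K)$ remain undetermined, and $K$ evaluations should suffice.

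Next I would evaluate the hafnian at $K$ pairwise distinct real numbers $x_1,\dots,x_K$. Moving the known leading term to the left, each evaluation yields one linear equation
\[
\haf{[G\,\square\,P_2(x_k)]}-x_k^{M}=\sum_{r=1}^{K}g(G,r)\,x_k^{M-2r},\qquad k=1,\dots,K,
\]
so that the unknown vector $(g(G,1),\dots,g(G,K))$ satisfies a square linear system whose coefficient matrix has entries $V_{kr}=x_k^{M-2r}$.

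The only point requiring care is the nonsingularity of this system. Because $\gbs^+_G$ contains only powers of $x$ of the parity of $M$, I would factor $x_k^{M-2r}=x_k^{\e}\,(x_k^2)^{K-r}$ with $\e=M-2K\in\{0,1\}$, which after reversing the order of the columns displays $V$ as the diagonal matrix $\diag(x_k^{\e})$ times a Vandermonde matrix in the variables $y_k=x_k^2$. Its determinant is therefore $\pm\bigl(\prod_k x_k^{\e}\bigr)\prod_{k<l}(y_l-y_k)$, which is nonzero as soon as the squares $x_k^2$ are pairwise distinct (and all $x_k\neq0$ when $M$ is odd); the choice $x_k=k$ already suffices. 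Thus $V$ is invertible, the system has a unique solution, and all $g(G,r)$ are determined.

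The argument is routine once the signless identity is invoked; the one genuinely substantive step---and the place where a naive count would mislead---is recognizing that the even/odd spacing of the exponents together with the known value $g(G,0)=1$ reduces the interpolation from the apparent $\lfloor M/2\rfloor+1$ evaluations to the $\lfloor M/2\rfloor$ claimed in the statement. I do not expect any real obstacle beyond bookkeeping the parity of $M$ in the Vandermonde determinant.
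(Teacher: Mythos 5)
Your proposal is correct and follows exactly the route the paper intends: the paper states this corollary as an immediate consequence of Theorem~\ref{thm:CartesianHaf} without writing out a proof, and your argument supplies precisely the missing justification. In particular, your two key observations --- that $g(G,0)=1$ is known a priori so only $\lfloor M/2\rfloor$ coefficients are genuinely unknown, and that the coefficient matrix reduces (after factoring out $x_k^{M-2\lfloor M/2\rfloor}$) to a Vandermonde matrix in the variables $x_k^2$, hence is nonsingular for points with pairwise distinct nonzero squares --- are exactly the bookkeeping needed to make the stated count of $\lfloor M/2\rfloor$ equations come out right.
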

As a matter of fact, the saving of the computational time can be considerable both in the exact and approximate way of  obtaining the GBS polynomial,~Eq.~\eqref{eq:GBSpolySignless}, compared to the brute force  when $2^{M-1}$ hafnians of various sizes have to be calculated. The relatively minor price to pay is the doubled size of the matrices whose hafnian we have to evaluate. The main result of this paper (Theorem~\ref{thm:duality}) leads to a significant generalization of this Corollary.

Some of the classically intractable quantities  are known to be feasible for graphs of a low treewidth (such as the permanent~\cite{cifuentes2016efficient}). The hafnian of a low treewidth graph turns out to be tractable as well. As we show in the following result, the prism over such a graph remains low-treewidth and therefore tractable too.
\begin{thm}
Suppose $G$ is a graph of treewidth $\tau$.  Then $G\,\square\,P_2$ has treewidth at most $2\tau+1$.
\end{thm}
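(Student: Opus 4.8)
The plan is to construct a tree decomposition of $G\,\square\,P_2$ directly from an optimal tree decomposition of $G$ by ``doubling'' each bag. Recall that $G\,\square\,P_2$ has vertex set $V\times\{1,2\}$: it consists of two disjoint copies $G_1,G_2$ of $G$ (with $(u,i)$ adjacent to $(v,i)$ whenever $uv\in E$), together with the matching edges $(v,1)\sim(v,2)$ for every $v\in V$. The construction I have in mind is the standard product trick specialized to $P_2$, and it yields the bound essentially for free.

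First I would fix a tree decomposition $(T,\{B_t\}_{t\in T})$ of $G$ of width $\tau$, so that every bag satisfies $|B_t|\le\tau+1$. On the \emph{same} tree $T$ I define new bags $B_t'\df(B_t\times\{1\})\cup(B_t\times\{2\})$; that is, for each original vertex in $B_t$ I place both of its copies into $B_t'$. Since $|B_t'|=2|B_t|\le2(\tau+1)$, the width of this candidate decomposition is at most $2(\tau+1)-1=2\tau+1$, which is exactly the claimed bound. The only remaining work is to check that $(T,\{B_t'\})$ is a genuine tree decomposition of $G\,\square\,P_2$.

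Next I would verify the three tree-decomposition axioms. For vertex coverage, each $(v,i)$ lies in $B_t'$ for any $t$ with $v\in B_t$, and such a $t$ exists. For edge coverage, a copy edge $(u,i)\sim(v,i)$ is covered because some bag $B_t$ contains both $u$ and $v$ (as $uv\in E$), whence $B_t'\ni(u,i),(v,i)$; a matching edge $(v,1)\sim(v,2)$ is covered because $B_t'$ contains both copies of $v$ whenever $v\in B_t$. For connectivity, for fixed $v$ and $i$ the index set $\{t:(v,i)\in B_t'\}$ equals $\{t:v\in B_t\}$, which is a connected subtree of $T$ by the original decomposition; hence the same connectivity holds after doubling.

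Since all three axioms hold and the bags have size at most $2\tau+2$, the width is at most $2\tau+1$. I do not expect a serious obstacle here: the construction is forced, and each axiom reduces to the corresponding property of the decomposition of $G$. The only point requiring a moment's care is the connectivity axiom, where one must observe that duplicating each vertex into two \emph{fixed} copies (rather than splitting its occurrences across different parts of the tree) keeps the index set of bags identical to the original one, so connectivity is inherited verbatim.
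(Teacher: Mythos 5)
Your proposal is correct and follows exactly the paper's argument: keep the tree, double each bag by including both copies of its vertices, and observe the width bound $2\tau+1$. The only difference is that you verify the three tree-decomposition axioms more explicitly than the paper does, which is a welcome addition but not a different approach.
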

\begin{proof}
  Since the treewidth of $G$ is $\tau$, we can construct a tree $T$ whose nodes have ``bags'', i.e., subsets  of vertices of $G$, such that
    \begin{enumerate}
    \item  Each vertex is in at least one bag.
    \item  There are at most $\tau+1$ vertices in each bag.
    \item If $(i,j)$ is an edge of $G$, there is some bag containing both $i$ and $j$.
    \item The nodes whose bags contain a given vertex form a connected subtree of $T$.
    \end{enumerate}
$G\,\square\,P_2$ has two copies $v_1$ and $v_2$ of each vertex $v$ of $G$, with edges $(v_1, w_1)$ and $(v_2, w_2)$ for each edge $(v,w)$ of $G$ and $(v_1,v_2)$ for each vertex $v$ of $G$. A tree $\tilde{T}$ for $G\,\square\,P_2$ can be constructed from $T$ with nodes corresponding to those of $T$, where the bag for each node of $\tilde{T}$ contains both copies of each vertex in the bag for the corresponding node of $T$.  Thus each bag has at most $2\tau+2$ vertices, making the treewidth $\le 2\tau+1$.
\end{proof}
We suspect that this bound is optimal. There are known lower bounds on treewidth for product graphs: see \cite{LBT}.

\begin{exa}
  Similarly to the matching polynomial case, the GBS polynomial can be given explicitly for a few prominent graph families.
  $C_n$ is the $n$-cycle graph.  For $n$ odd this has no even cycles, and the matching and GBS polynomials are the same.  For $n$ even,
  $$
  \gbs_{C_n}(x) = \mu_{C_n}(x) + 2\,(-1)^{n/2},
  $$
  as the only subgraph with hafnian greater than $1$ is $C_n$ itself. The GBS polynomial of $K_n$ can be expressed using a hypergeometric function
  $$
  \gbs_{K_n}(x) = {x}^{n}{\mbox{$_3$F$_0$}(1/2,-n/2,1/2-n/2;\,\ ;\,-4\,{x}^{-2})}.
  $$
  For the complete bipartite graph $K_{m,n}$ with parts of size $m$ and $n$ we get
  $$
  \gbs_{K_{m,n}}(x) = \sum_{r=0}^{\min{[m,n]}}(-1)^r{m \choose r} {n \choose r} (r!)^2 x^{m+n-2r}.
  $$
  In particular, in the case $m=n$, this may be expressed using a Lommel S2 function
  $$
  \gbs_{K_{n,n}}(x) = {4}^{-m}\ls2{\left[2\,m+1,0,2\,x \right]}.
  $$
  For the book graph $B_n$ on $n$ pages (consisting of $n$ $4$-cycles all with one common edge) one gets
  $$
  \gbs_{B_n}(x) = (x^2-1)^{n-1}(x^2-n-1)^2.
  $$
\end{exa}

\section{The GBS polynomial for nonzero displacement}\label{sec:DGBS}

We now turn to the case of nonzero displacement, which, curiously, uncovers relations with the original matching polynomial. This leads to a central result of this paper, namely that the displaced GBS polynomial is identical to the signless matching polynomial of an extended graph constructed from $G$ (see also Theorem \ref{thm:duality}).

\subsection{Non-collision regime}

We first consider the collision-free regime in~\eqref{eq:ProbMixedGBSdisplaceForAopA}. We offer a slightly different take on the analysis initiated in~\cite{kruse2018detailed}. Our interest in the coarse-grained probability distributions enables us to explore structures going beyond the original result. We recall that a partition $\pi$ of a set $S$ is a non-empty set of disjoint subsets $B_i$, usually called blocks, such that  their union forms $S$. For example, for $S=[5]$, $\pi(S)=((1,2),(3,4,5))=(B_1,B_2)$ is one of the partitions of the block sizes $|B_i|=2,3$. Following~\cite{kruse2018detailed}, where Eq.~\eqref{eq:ProbMixedGBSdisplace} has been analyzed with the help of Proposition~1 from~\cite{hardy2006combinatorics}, it was noticed that due to the at most quadratic argument of the exponential, the partial derivatives split into a sum over all partitions of the block size of at most two. In the case of the partial derivative from~\eqref{eq:ProbMixedGBSdisplaceForAopA} we seek to partition $[M]$ (since $\dim{A}=M$). There are $\lfloor M/2\rfloor+1$ of such partitions and we can count their size quite easily since this is just the size of the cycle conjugacy class of the symmetric group $S_{M}$ composed of cycles of the size of at most two. It is given by
\begin{equation}\label{eq:1and2cycles}
{M!\over(M-2\ell)!\ell!2^\ell},
\end{equation}
for $0\leq\ell\leq\lfloor M/2\rfloor$. Our main goal, however, is to generalize the orbit probability, Eq.~\eqref{eq:probOrbitColFree}, to include a displacement so that we can construct a \emph{displaced GBS polynomial} by properly taking into account the contributions from the blocks (cycles) of size one.

We set $\bd\in\bbR^M$ in Eq.~\eqref{eq:ProbMixedGBSdisplaceForAopA} and denote $\z^\top=\bd^\top(\bbI_M-A)$, where $A$ is the adjacency matrix of~$G$\footnote{The variable $z$ can be made complex. The consequences are yet to be explored.}. It turns out that a matching polynomial appears in the generalization of the GBS polynomial that includes a displacement. The key piece is the following lemma.
\begin{lem}\label{lem:DGBSasMPs}
  Let $A\in\bbR^{M\times M}$ be the adjacency matrix of $G$ and assume $z=z_i,\forall i$. Then
  \begin{equation}\label{eq:MatchToProb}
  \sum_{\n\in O_{\n}}^{|O_{\n}|}
  \big(\partial^{|\n|}_{\bs{\b}}e^{{1\over2}\bs{\b}^\top A\bs{\b}+\z^\top\bs{\b}}\big\rvert_{\bs{\b}=0}\big)
  =
  \sum_{\n\in O_{\n}}^{|O_{\n}|}\mu^+_{G\backslash\{V^c_{{\n}}\}}(z),
  \end{equation}
  where  $\mu^+_G(z)$ is the signless  matching polynomial of~$G$ and $V_{{\n}}^c$ is the complement of the vertex subset $V_{{\n}}$ indexed by $\n$:
  $$
  V_{{\n}}\df\big\{i:n_i=1\big\}.
  $$
\end{lem}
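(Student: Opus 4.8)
The plan is to prove \eqref{eq:MatchToProb} \emph{term by term}: for each fixed collision-free click pattern $\n$ (so every entry $n_i\in\{0,1\}$) I will establish that
$$
\partial^{|\n|}_{\bs{\b}}e^{{1\over2}\bs{\b}^\top A\bs{\b}+\z^\top\bs{\b}}\big\rvert_{\bs{\b}=0}=\mu^+_{G\backslash\{V^c_{\n}\}}(z),
$$
after which summing both sides over the orbit $O_{\n}$ yields the claimed equality at once. Write $S\df V_{\n}=\{i:n_i=1\}$, so $|S|=|\n|$ and, because in the collision-free regime each variable is differentiated at most once, the operator collapses to $\partial^{|\n|}_{\bs\b}=\prod_{i\in S}\partial_{\b_i}$.

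The main tool is the exponential (multivariate Fa\`a di Bruno) expansion already invoked in the discussion around \eqref{eq:1and2cycles} via Proposition~1 of~\cite{hardy2006combinatorics}. Setting $g(\bs\b)={1\over2}\bs\b^\top A\bs\b+\z^\top\bs\b$, one has
$$
\prod_{i\in S}\partial_{\b_i}\,e^{g}\big\rvert_{\bs\b=0}=\sum_{\pi}\,\prod_{B\in\pi}\big(\textstyle\prod_{i\in B}\partial_{\b_i}\,g\big)\big\rvert_{\bs\b=0},
$$
where $\pi$ ranges over all set partitions of $S$. Since $g$ is quadratic, every block $B$ with $|B|\geq 3$ annihilates its factor, so only partitions into blocks of size one and two survive. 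Evaluating the surviving factors at $\bs\b=0$ gives $\partial_{\b_i}g\rvert_0=z_i$ for a singleton $\{i\}$ and $\partial_{\b_i}\partial_{\b_j}g\rvert_0=a_{ij}$ for a pair $\{i,j\}$ (with $i\neq j$, so the diagonal of $A$ never enters).

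I would then read off the combinatorial content of the truncated sum. A partition of $S$ into blocks of size $\leq 2$ is exactly the data of a matching on $S$ — its two-element blocks, which must be genuine edges of $G$ or else $a_{ij}=0$ kills the term — together with the unmatched vertices recorded as singletons. Thus the sum becomes $\sum_{r}\big(\sum_{M}\prod_{\{i,j\}\in M}a_{ij}\big)\,\prod_{i\ \text{unmatched}}z_i$, the inner sum running over the $r$-matchings $M$ of the induced subgraph $G[S]=G\backslash\{V^c_{\n}\}$. Under the hypothesis $z_i=z$ for all $i$, the singleton product collapses to $z^{|S|-2r}$, and the bracketed weighted count is precisely $m(G[S],r)$; comparing with the weighted form of \eqref{eq:matchPolySignLess} identifies the total with $\mu^+_{G[S]}(z)$, which is the term-by-term claim.

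The only genuinely delicate point is this passage from the partition sum to the signless matching polynomial: one must verify that the two-element blocks correspond bijectively to matched edges carrying the correct weight $a_{ij}$, that non-edges drop out automatically, and that the assumption $z_i=z$ is exactly what merges the singleton factors into the single power $z^{|S|-2r}$ demanded by $\mu^+$. The remaining ingredients — the reduction to first derivatives, the vanishing of blocks of size $\geq 3$, and the final summation over $O_{\n}$ — are routine.
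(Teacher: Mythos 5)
Your proposal is correct and follows essentially the same route as the paper's own proof: both fix a collision-free $\n$, invoke Proposition~1 of~\cite{hardy2006combinatorics} to reduce the derivative of the exponential to a sum over partitions into blocks of size at most two, identify singleton blocks with factors of $z$ and pair blocks with the entries $a_{ij}$, and then recognize the surviving sum as the signless matching polynomial of the induced subgraph before summing over the orbit. The only cosmetic difference is that you phrase the block-to-matching correspondence slightly more explicitly; the substance is identical.
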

\begin{proof}
  We fix a collision-free $\n$ (recall $0\leq|\n|\leq M$) in $\partial^{|\n|}_{\bs{\b}}e^{{1\over2}\bs{\b}^\top A\bs{\b}+\z^\top\bs{\b}}\big\rvert_{\bs{\b}=0}$ and apply Proposition~1 from~\cite{hardy2006combinatorics}. As already observed in~\cite{kruse2018detailed}, for a given partition $\pi$, only blocks partitioning $[|\n|]$ of the length at most~two survive. According to~\eqref{eq:1and2cycles}, what remains is a sum of products of $\ell$ second derivatives  and $(|\n|-2\ell)$ first derivatives of the exponential argument. Since $\partial^2_{ij}({1\over2}\bs{\b}^\top A\bs{\b}+\z^\top\bs{\b})|_{\b=0}=a_{ij}$ and $\partial_{k}({1\over2}\bs{\b}^\top A\bs{\b}+\z^\top\bs{\b})|_{\b=0}=z_k$, where $i\neq j\neq k$, it follows that if we set $z=z_k,\forall k$ then we obtain a polynomial in $z$ of order $|\n|-2\ell$. The coefficient of $z^{|\n|-2\ell}$ is an $\ell$-match since it is a sum over all products of $\ell$ matrix elements $a_{ij}$ representing $\ell$ disjoint edges $e_{ij}$. There are
  $\lfloor|\n|/2\rfloor+1$ partitions whose size is given by \eqref{eq:1and2cycles}. We thus obtained the signless matching polynomial $\mu^+_{G\backslash\{V_{\widetilde{\n}}\}}(z)$ defined in~\eqref{eq:matchPolySignLess}.

  The last remaining step is to repeat the procedure for all permutations of $\n$ (the orbit of $\n$). This concludes the proof.
\end{proof}
\begin{rem}\label{rem:z}
  Setting $\z$ constant is a reasonable condition that can be experimentally achieved by appropriately tuning the displacement~$\bd$ for any~$A$. But can we assume $z=z_i$ for an arbitrary $A$? We rewrite $\z^\top\bs{\b}=\bd^\top(\bbI_M-A)\bs{\b}=\sum_{i=1}^{M}f_i(\bd)\b_i$, where $f_i(\bd)$ is a linear function in $\bd$. The condition $z=z_i$ demands $f_{i}(\bd)=f_{i+1}(\bd),\forall i<M$. This is $M-1$ constraints for $M$ unknowns $d_i$ and hence it is always possible to find a solution (in fact, infinitely many). From the physical perspective, if the values of $A$ are of the same order of magnitude (typically this is the case) the satisfying $d_i$'s are of the same magnitude as well and therefore all is under our control.
\end{rem}
\begin{cor}\label{cor:GBSToProb}
  \begin{equation}\label{eq:GBSToProb}
  \sum_{\n\in O_{\n}}^{|O_{\n}|}
  \big(\partial^{|\n|}_{\bs{\b}}e^{{1\over2}\bs{\b}^\top A\bs{\b}+\z^\top\bs{\b}}\big\rvert_{\bs{\b}=0}\big)^2
  =
  \sum_{\n\in O_{\n}}^{|O_{\n}|}\big(\mu^+_{G\backslash\{V^c_{{\n}}\}}(z)\big)^2.
  \end{equation}
\end{cor}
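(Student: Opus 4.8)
The plan is to recognize that the proof of Lemma~\ref{lem:DGBSasMPs} already establishes, before any orbit sum is performed, the \emph{per-pattern} scalar identity
\begin{equation*}
  \partial^{|\n|}_{\bs{\b}}e^{{1\over2}\bs{\b}^\top A\bs{\b}+\z^\top\bs{\b}}\big\rvert_{\bs{\b}=0}
  =\mu^+_{G\backslash\{V^c_{\n}\}}(z)
\end{equation*}
for each fixed collision-free $\n$, where the right-hand side is the signless matching polynomial of the subgraph induced on $V_{\n}=\{i:n_i=1\}$. Indeed, the combinatorial argument of~\cite{hardy2006combinatorics} used in the lemma is applied to one $\n$ at a time: the operator $\partial^{|\n|}_{\bs{\b}}$ for collision-free $\n$ singles out the index set $V_{\n}$, and the resulting polynomial in $z$ depends on $\n$ only through this support. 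The orbit sum in the lemma statement is then nothing more than a repackaging of this identity over all permutations of $\n$.

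First I would record this per-pattern identity explicitly. Granting it, the corollary is immediate: both sides are scalars depending only on $\n$ and $z$, so squaring term by term gives
\begin{equation*}
  \big(\partial^{|\n|}_{\bs{\b}}e^{{1\over2}\bs{\b}^\top A\bs{\b}+\z^\top\bs{\b}}\big\rvert_{\bs{\b}=0}\big)^2
  =\big(\mu^+_{G\backslash\{V^c_{\n}\}}(z)\big)^2,
\end{equation*}
and summing over the orbit $O_{\n}$ then yields~\eqref{eq:GBSToProb}.

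The only real content — and the reason this does not follow purely formally from the lemma \emph{as stated} — lies in the passage from the orbit-summed identity to the term-by-term identity, since squaring commutes with a per-summand matching but not with the outer sum. The hard part is therefore to verify that the lemma's proof genuinely matches summands, that is, that no cancellation or cross-term between distinct permutations of $\n$ is concealed inside the orbit sum. This is clear once one observes that the evaluation at $\bs{\b}=0$ acts on each summand independently and that the partition argument is carried out separately for each $\n$; with that confirmed, moving the square inside the sum is licensed and the proof is complete.
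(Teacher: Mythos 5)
Your proposal is correct and follows the same route the paper intends: the Corollary is stated without proof precisely because the proof of Lemma~\ref{lem:DGBSasMPs} fixes a single collision-free $\n$ and derives $\partial^{|\n|}_{\bs{\b}}e^{{1\over2}\bs{\b}^\top A\bs{\b}+\z^\top\bs{\b}}\big\rvert_{\bs{\b}=0}=\mu^+_{G\backslash\{V^c_{\n}\}}(z)$ term by term, only summing over the orbit at the very end, so squaring each summand and re-summing is exactly the intended argument. Your observation that the Corollary would not follow formally from the orbit-summed statement of the Lemma alone, but does follow from its per-pattern proof, is the right point to make explicit.
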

As in~\eqref{eq:probOrbitColFree} we introduce the collision-free orbit probability. Using~\eqref{eq:GBSToProb} we take Eq.~\eqref{eq:ProbMixedGBSdisplaceForAopA} and write
\begin{equation}\label{eq:probOrbitColFreeDispl}
  p_G(O_{\n})
  =\sum_{\n\in O_{\n}}^{|O_{\n}|}p(\n)
  ={e^{-{1\over2}\bs{D}^\top\s_Q^{-1}\bs{D}}\over\sqrt{\det{\s_Q}}}
  \sum_{\n\in O_{\n}}^{|O_{\n}|}\big(\mu^+_{G\backslash\{V^c_{{\n}}\}}(z)\big)^2.
\end{equation}
We interpreted the detection probabilities in a way that escaped the attention of~\cite{kruse2018detailed}. That is noteworthy but it is the duality proved later in this section (Theorem~\ref{thm:duality} and Theorem~\ref{thm:mixduality}) that makes it interesting, useful and worth defining. We first generalize~\eqref{eq:GBSpolyHaf}.
\begin{defi}\label{def:pureDGBS}
  We introduce the (signless) \emph{displaced GBS polynomial} of a graph $G$ as
    \begin{equation}\label{eq:DGBSpolyCFpure}
        \dgbs^+_G(x,z) = \sum_{|S|=0}^Mh(G,|S|;z)\,x^{M-|S|},
    \end{equation}
    where
    \begin{equation}\label{eq:hCoeffCFpure}
      h(G,|S|;z)=\sum_{\n:|\n|=|S|}\big(\mu^+_{G\backslash\{V^c_{{\n}}\}}(z)\big)^2,
    \end{equation}
    for $V^c_{{\n}}$ defined in Lemma~\eqref{lem:DGBSasMPs}. We further introduce
    \begin{equation}\label{eq:DGBSpolyColPure}
    \dgbs^+_{G\times\ol{K}_n}(x,z) = \sum_{|S|=0}^{nM}h(G\times\ol{K}_n,|S|;z)\,x^{nM-|S|},
    \end{equation}
    where $V$ is the vertex set of $G\times\ol{K}_n$, and
    \begin{equation}\label{eq:hCoeffPureCol}
      h(G\times\ol{K}_n,|S|;z)=\sum_{\N:|\N|=|S|}\big(\mu^+_{G\times\ol{K}_n\backslash\{V^c_{{\N}}\}}(z)\big)^2.
    \end{equation}
    We define $V_{{\N}}\df\big\{i:N_i=1\big\}$ for the decollisioned orbit $\N$ introduced in~\eqref{eq:decollision} and $V^c_{{\N}}$ is its complement.
\end{defi}
Eq.~\eqref{eq:DGBSpolyColPure} is, strictly speaking, a special case of~\eqref{eq:DGBSpolyCFpure} but we anticipate the special relationship the graph $G\times\ol{K}_n$ has with the collision regime as shown in Proposition~\ref{prop:GBSpolyCol} for the zero displacement case. Also, we merely chose to introduce  the signless version in order to have $\dgbs^+_G(x,0)=\gbs^+_G(x)$. If necessary the signed version can be defined as well.

\subsection{Collision regime}

It remains to show how the collision regime is linked to~\eqref{eq:hCoeffPureCol}.
\begin{prop}\label{prop:DGBSpolyColPur}
    Consider a click pattern $\n=(n_1,\dots,n_M)$ in the form of~\eqref{eq:n} and fix $n\geq1$ such that $n_i\leq n$. Then for $0\leq|S|\leq nM$ we get
    \begin{equation}\label{eq:DGBSCoeffsMultiProbsInDetail}
      h(G\times\ol{K}_n,|S|)
      =e^{{1\over2}\bs{D}^\top\s_Q^{-1}\bs{D}}\sqrt{\det{\s_Q}}\sum_{\substack{n_1+\dots+n_M=|\n|\\n_i\leq n_{i+\ell}\\n_i\leq n}}\n!\prod_{j=0}^{n}\binom{n}{j}^{k_j}
      p_G(O_{\n})
    \end{equation}
    are the coefficients of the displaced GBS polynomial of $G\times\ol{K}_n$.
\end{prop}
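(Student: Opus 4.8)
The plan is to carry out the displaced counterpart of the proof of Proposition~\ref{prop:GBSpolyCol}, with squared signless matching polynomials playing the role of the squared hafnians and the displaced exponential replacing the zero-displacement one. I would begin from the definition of the coefficient, Eq.~\eqref{eq:hCoeffPureCol},
$$
h(G\times\ol{K}_n,|S|;z)=\sum_{\N:\,|\N|=|S|}\big(\mu^+_{G\times\ol{K}_n\backslash\{V^c_{\N}\}}(z)\big)^2 ,
$$
observing that the sum ranges over the entire orbit $O_{\N}$ of a decollisioned pattern carrying $|S|$ ones among $nM$ slots. The first step is to apply Corollary~\ref{cor:GBSToProb} \emph{to the graph} $G\times\ol{K}_n$, whose adjacency matrix is $A\ox\bbJ_n$; this converts each squared signless matching polynomial back into a squared displaced derivative and yields
$$
h(G\times\ol{K}_n,|S|;z)=\sum_{\N\in O_{\N}}^{|O_{\N}|}\big(\partial^{|\N|}_{\bs\a}e^{{1\over2}\bs\a^\top(A\ox\bbJ_n)\bs\a+\z^\top\bs\a}\big\rvert_{\bs\a=0}\big)^2 ,
$$
where $\z$ is now the constant displacement vector (all entries equal to $z$) on the $nM$ modes, admissible by the argument of Remark~\ref{rem:z} applied to the enlarged mode set.

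The crux is a displaced analogue of the decollision identity~\eqref{eq:equalityHaf}. Writing $\n$ for the unique preimage of $\N$ under the map~\eqref{eq:decollision}, I claim
$$
\partial^{|\N|}_{\bs\a}e^{{1\over2}\bs\a^\top(A\ox\bbJ_n)\bs\a+\z^\top\bs\a}\big\rvert_{\bs\a=0}
=\partial^{|\n|}_{\bs\b}e^{{1\over2}\bs\b^\top A\bs\b+\z^\top\bs\b}\big\rvert_{\bs\b=0} ,
$$
so that the squared quantities coincide as well. I would establish this exactly as~\eqref{eq:equalityHaf} was obtained: expanding the exponentials and invoking Proposition~1 of~\cite{hardy2006combinatorics}, each derivative becomes a sum over partial matchings of its slots, the matched pairs carrying the quadratic weights and each unmatched slot carrying a factor $z$ from the linear term. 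The reduced Kronecker structure $(A\ox\bbJ_n)_{kl}=a_{ij}$ for $k,l$ lying in blocks $i,j$ furnishes a weight-preserving bijection between the partial matchings of the $n_i$-fold multiset of $\b$-slots and those of the block-organized set of active $\a$-slots; since, before and after decollision, every unmatched slot contributes the \emph{same} constant $z$, the singleton contributions also agree term by term. Hence the displaced derivative is invariant under decollision, precisely as its pure-quadratic version is.

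With the identity in hand, I would pass from the squared $\n$-derivative to a probability through the displaced formula~\eqref{eq:ProbMixedGBSdisplaceForAopA}: taking $\bd$ real makes $\bs D^\dagger=\bs D^\top$ and $\bd^\top\bbI_M-\ol{\bd}^\top A=\z^\top$, so
$$
\big(\partial^{|\n|}_{\bs\b}e^{{1\over2}\bs\b^\top A\bs\b+\z^\top\bs\b}\big\rvert_{\bs\b=0}\big)^2
=e^{{1\over2}\bs D^\top\s_Q^{-1}\bs D}\,\n!\,\sqrt{\det{\s_Q}}\;p_G(\n) .
$$
It then remains to reuse verbatim the orbit-counting bookkeeping from the proof of Proposition~\ref{prop:GBSpolyCol}: summing over the full orbit $O_{\N}$ a quantity that depends only on the decollision preimage $\n$ splits into a sum over restricted integer partitions of $|\n|$ (which enumerate the inequivalent orbits $O_{\n}$) weighted by $\n!\prod_{j=0}^{n}\binom{n}{j}^{k_j}$, the product counting, independently for each block, the ways of placing the ones inside the $n$-tuples of~\eqref{eq:decollision}, exactly as in~\eqref{eq:GBSCoeffsColFinal}. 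Collecting the prefactor $e^{{1\over2}\bs D^\top\s_Q^{-1}\bs D}\sqrt{\det{\s_Q}}$ reproduces~\eqref{eq:DGBSCoeffsMultiProbsInDetail}.

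The main obstacle is the displaced decollision identity of the second paragraph; everything else is a faithful transcription of the zero-displacement argument. The delicate point is to confirm that distributing the constant $z$ uniformly over the $n_i$ active block variables reproduces the $z$-weighted singleton contributions that arise from taking the $n_i$-th derivative in the single variable $\b_i$. I would secure this either by the explicit weight-preserving bijection on partial matchings sketched above or, if that proves cumbersome, by an induction on the number of unmatched slots that reduces the displaced statement to the already-established identity~\eqref{eq:equalityHaf} at each stage.
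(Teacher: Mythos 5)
Your proposal is correct and follows essentially the same route as the paper: the paper's proof likewise converts the coefficient $h(G\times\ol{K}_n,|S|;z)$ into a sum of squared displaced derivatives over $O_{\N}$, invokes the displaced generalization of the decollision identity~\eqref{eq:equalityHaf} (stated there as~\eqref{eq:equalityPartDerv}), passes to probabilities via~\eqref{eq:ProbMixedGBSdisplaceForAopA}, and reuses the orbit-counting bookkeeping of Proposition~\ref{prop:GBSpolyCol} verbatim. If anything, you supply more justification for the displaced decollision identity (the weight-preserving bijection on partial matchings) than the paper, which simply asserts it.
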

\begin{proof}
  The proof is a copy of the proof of Proposition~\ref{prop:GBSpolyCol} where, using the same notation, Eq.~\eqref{eq:equalityHaf} is generalized to
    \begin{equation}\label{eq:equalityPartDerv}
    \big(\partial^{|\n|}_{\bs{\b}}e^{{1\over2}\bs{\b}^\top A\bs{\b}+\z^\top\bs{\b}}\big\rvert_{\bs{\b}=0}\big)^2
    =\big(\partial_{\bs{\a}}^{|\N|}e^{{1\over2}\bs{\a}^\top (A\ox\bbJ_n)\bs{\a}+(\z^\top\ox\bbJ_{1,nM})\bs{\a}}\big\rvert_{\bs{\a}=0}\big)^2
    \end{equation}
  and used as in Eq.~\eqref{eq:NorbitProb}:
    \begin{equation}\label{eq:NorbitProbDisp}
       h(G\times\ol{K}_n,|S|;z)
       =\sum_{\bs{N}\in O_{\bs{N}}}^{|O_{\bs{N}}|}
       \big(\partial_{\bs\a}^{|\N|}e^{{1\over2}\bs{\a}^\top (A\ox\bbJ_n)\bs{\a}+(\z^\top\ox\bbJ_{1,nM})\bs{\a}}\big\rvert_{\bs{\a}=0}\big)^2.
    \end{equation}
    Using the generalization of~\eqref{eq:ProbMixed} from~\eqref{eq:ProbMixedGBSdisplaceForAopA} we get
    \begin{equation}\label{eq:ProbMixedDisp}
    e^{{1\over2}\bs{D}^\top\s_Q^{-1}\bs{D}}\sqrt{\det{\s_Q}}\,\n!\,p_G(\bs{n})
    =\big(\partial^{|\n|}_{\bs{\b}}e^{{1\over2}\bs{\b}^\top A\bs{\b}+\z^\top\bs{\b}}\big\rvert_{\bs{\b}=0}\big)^2
    \end{equation}
    and we deduce the equivalent of~\eqref{eq:GBSCoeffsCol} to be
    \begin{subequations}
    \begin{align}\label{eq:DGBSCoeffsColA}
      h(G\times\ol{K}_n,|S|;z)
      &=e^{{1\over2}\bs{D}^\top\s_Q^{-1}\bs{D}}\sqrt{\det{\s_Q}}\sum_{\N\in O_{\bs{N}}}^{|O_{\N}|}\n!\,p_G(\n)\\
      \label{eq:DGBSCoeffsColB}
      &=e^{{1\over2}\bs{D}^\top\s_Q^{-1}\bs{D}}\sqrt{\det{\s_Q}}\sum_{\substack{n_1+\dots+n_M=|\n|\\n_i\leq n_{i+\ell}\\n_i\leq n}}\n!\prod_{j=0}^{n}\binom{n}{j}^{k_j}
      \underbrace{\sum_{\n\in O_{\n}}^{|O_{\n}|}p_G(\n)}_{p_G(O_{\n})}.
    \end{align}
    \end{subequations}
    The rest goes on as in the proof of Proposition~\ref{prop:GBSpolyCol}.
\end{proof}
Using~\eqref{eq:GBSToProb} on the RHS of Eq.~\eqref{eq:equalityPartDerv} we write
\begin{equation}\label{eq:outOfNames}
\big(\partial_{\bs\a}^{|\N|}e^{{1\over2}\bs{\a}^\top(A\ox\bbJ_n)\bs{\a}+(\z^\top\ox\bbJ_{1,nM})\bs{\a}}\big\rvert_{\bs{\a}=0}\big)^2
=\big(\mu^+_{G_{A\ox\bbJ_n}\backslash\{V^c_{{\N}}\}}(z)\big)^2
\end{equation}
and it helps us obtain the equivalent of~\eqref{eq:probOrbitColFreeDispl}:
\begin{equation}\label{eq:probOrbitColDisplPure}
  p_G(O_{\n})
  =\sum_{\n\in O_{\n}}^{|O_{\n}|}p(\n)
  ={e^{-{1\over2}\bs{D}^\top\s_Q^{-1}\bs{D}}\over\sqrt{\det{\s_Q}}}
  {1\over\n!}\sum_{\N}\big(\mu^+_{G_{A\ox\bbJ_n}\backslash\{V^c_{{\N}}\}}(z)\big)^2
\end{equation}
(note that $G_{A\ox\bbJ_n}\backslash\{V^c_{{\N}}\}$ is the graph corresponding to $A\sox\bbJ_{|\n|}$ interpreted as an adjacency matrix). Just like in~\eqref{eq:probOrbitCol}, the sum over $\N$ corresponds to the collision orbits $\n$ via~\eqref{eq:decollision}.
\begin{exa}[Collision-free DGBS for $K_M$ and $\ol{K}_M$]\label{exa:colFreeKM}
  Using~\eqref{eq:GBSToProb} we can easily construct the displaced GBS polynomial for the complete graph~$K_M$. This is because of the well-known expression for $\mu_{K_M}^+$~\cite{godsilAlgComb} in terms of the coefficients of the Hermite polynomial
    \begin{equation}\label{eq:muPlusKM}
      \mu_{K_M}^+(z)=\sum_{r=0}^{\lfloor M/2\rfloor}{M!\over(M-2r)!r!2^r}z^{M-2r}.
    \end{equation}
  Due to the complete symmetry of $K_M$ we also get $\binom{M}{|S|}$ copies of $\mu^+_{K_M\backslash\{V_{\widetilde{\n}}\}}= \mu_{K_{|S|}}^+(z)$. Hence, from~\eqref{eq:hCoeffCFpure}, we get
    \begin{equation}\label{eq:hCoeffKMexample}
      h(K_M,|S|;z)=\binom{M}{|S|}\bigg(\sum_{r=0}^{\lfloor|S|/2\rfloor}{|S|!\over(|S|-2r)!r!2^r}z^{|S|-2r}\bigg)^2.
    \end{equation}
    Since $\mu_{K_M}^+(z)=\mu_{\ol{K}_M}^+(z)$ we get
    $$
    h(\ol{K}_M,|S|;z)=h(K_M,|S|;z).
    $$
    This may look surprising at first sight. The output statistics of $K_M$ and $\ol{K}_M$ are certainly different for the same displacement. But the physical difference is buried in our definition of $\z^\top=\bs{d}^\top (\bbI_M-A)$ in~\eqref{eq:ProbMixedGBSdisplaceForAopA}. It means that to reproduce the output statistics of $K_M$ using $\ol{K}_M$ (or vice versa) one just has to adjust the displacement $M$-tuple $\bs{d}$ and the squeezing parameter $c$.
\end{exa}

\subsection{Duality between GBS and matching polynomials}

We now prove the ``pure state'' version of the main result of the paper, which unveils another close link between GBS and matching polynomials.
\begin{thm}\label{thm:duality}
\begin{equation}\label{eq:duality}
  \dgbs^+_G(x,z)=\mu^+_{G\,\square\,P_2(x)}(z)
\end{equation}
\end{thm}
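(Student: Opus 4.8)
The plan is to compute the right-hand side $\mu^+_{G\,\square\,P_2(x)}(z)$ directly and match it term-by-term with the definition of $\dgbs^+_G(x,z)$ in \eqref{eq:DGBSpolyCFpure}--\eqref{eq:hCoeffCFpure}. This is the natural generalization of the argument proving Theorem~\ref{thm:CartesianHaf}: there one counted only the \emph{perfect} matchings of the prism (captured by a single hafnian), whereas here the displacement variable $z$ plays exactly the role of bookkeeping unmatched vertices, so one must sum over \emph{all} matchings of the prism -- which is precisely the combinatorial content of the signless matching polynomial.

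First I would expand
\begin{equation*}
  \mu^+_{G\,\square\,P_2(x)}(z)=\sum_{\mathcal{M}} w(\mathcal{M})\,z^{2M-2|\mathcal{M}|},
\end{equation*}
where the sum runs over all matchings $\mathcal{M}$ (not necessarily perfect) of the prism, $w(\mathcal{M})$ is the product of the weights of the edges in $\mathcal{M}$, and $2M$ is the number of vertices of $G\,\square\,P_2(x)$. Recall that the prism has, for each vertex $v$ of $G$, two copies $v_1,v_2$ joined by a ``rung'' of weight $x$, and that each edge $vw$ of $G$ gives rise to a ``top'' edge $(v_1,w_1)$ and a ``bottom'' edge $(v_2,w_2)$, both of weight $a_{vw}$.

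Next I would organize the sum according to the set $R\subseteq V$ of vertices $v$ whose rung $(v_1,v_2)$ belongs to $\mathcal{M}$; these rungs contribute the factor $x^{|R|}$. The crucial structural observation -- and the only real obstacle -- is that $G\,\square\,P_2$ contains no edge joining the top copy of one vertex to the bottom copy of another. Hence, once $R$ is fixed, every vertex still to be matched lies in one of the two layers and can only be matched to a vertex in the \emph{same} layer whose rung is also unused; that is, the remainder of $\mathcal{M}$ splits into a top-layer matching $\mathcal{M}_1$ and a bottom-layer matching $\mathcal{M}_2$, each an arbitrary matching of the induced subgraph $G[V\setminus R]$, the two being chosen independently.

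Finally I would factorize. For fixed $R$, the $z$-exponent $2M-2|R|-2|\mathcal{M}_1|-2|\mathcal{M}_2|$ equals $\bigl(|V\setminus R|-2|\mathcal{M}_1|\bigr)+\bigl(|V\setminus R|-2|\mathcal{M}_2|\bigr)$, so summing over $\mathcal{M}_1$ and $\mathcal{M}_2$ yields $x^{|R|}\bigl(\mu^+_{G[V\setminus R]}(z)\bigr)^2$. Summing over $R$ and substituting $S=V\setminus R$ gives
\begin{equation*}
  \mu^+_{G\,\square\,P_2(x)}(z)=\sum_{S\subseteq V}\bigl(\mu^+_{G[S]}(z)\bigr)^2\,x^{M-|S|}.
\end{equation*}
Since $G\backslash\{V^c_{\n}\}=G[V_{\n}]$ and the collision-free patterns $\n$ with $|\n|=|S|$ are in bijection with the subsets $V_{\n}=S$ of cardinality $|S|$, grouping the right-hand side by $|S|$ reproduces exactly $\sum_{|S|=0}^{M}h(G,|S|;z)\,x^{M-|S|}=\dgbs^+_G(x,z)$ through \eqref{eq:hCoeffCFpure}, completing the identification. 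As a consistency check, setting $z=0$ collapses each matching polynomial to the hafnian of a full layer and recovers the zero-displacement identity $\gbs^+_G(x)=\haf{[G\,\square\,P_2(x)]}$ of Theorem~\ref{thm:CartesianHaf}.
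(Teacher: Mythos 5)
Your proof is correct and rests on the same key observation as the paper's: a matching of the prism splits into a set $R$ of rungs (contributing $x^{|R|}$) together with two independent matchings of the two layer-copies of the induced subgraph on $V\setminus R$, each of which sums to $\mu^+_{G[V\setminus R]}(z)$. The paper performs exactly this decomposition, only packaged as a comparison of the coefficients of $x^jz^k$ expressed through hafnians of submatrices, so the combinatorial content is identical.
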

\begin{proof}
We write the LHS with the help of~\eqref{eq:DGBSpolyCFpure} as
$$
    \dgbs_G^+(x,z) = \sum_{S \subseteq [1,\ldots, M]} (\mu^+(A_{S,S},z))^2 x^{M-|S|}.
$$
For the RHS we write the $2M\times2M$ adjacency matrix of $G\,\square\,P_2(x)$ as
\begin{equation}%\label{}
    D(x)=
    \begin{bmatrix}
    A & x\bbI_M\cr x\bbI_M & A
    \end{bmatrix}
\end{equation}
and it is advantageous to rewrite the RHS as $\mu^+(D(x),z)$. The coefficient of $z^k$ in $\mu^+(D(x), z)$ is the sum of $\haf{D_{S,S}(x)}$ for subsets $S$ of $[1,\ldots, 2M]$ with cardinality $2M - k$. Terms with $x^{j}$ will arise from matchings of $S$ containing $j$ pairs $(s, s+M)$, $s \in S$.  The remaining  elements of $S \cap [1,\ldots, M]$ must be matched to each other, as will the remaining elements of $S \cap [M+1,\ldots, 2M]$.  Thus the coefficient of $x^j z^k$ is
\begin{equation}%\label{}
\sum_{S_1 \subseteq [1,\ldots,M]: |S_1| = j\ } \sum_{\ S_2 \subseteq [1,\ldots, M] \backslash S_1\ } \sum_{\ S_3 \subseteq [1,\ldots, M] \backslash S_1: |S_2|+|S_3| = 2M-2j-k} \haf{A_{S_2,S_2}}\haf{A_{S_3,S_3}}.
\end{equation}
On the other hand, the coefficient of $x^j$ in $\dgbs_G^+(x,z)$ is the sum of  $(\mu^+(A_{T,T},z))^2$ over $T \subseteq [1,\ldots,M]$ with $|T|=M-j$. Using the definition of $\mu^+$ in~\eqref{eq:matchPolySignLess} rewritten as
$$
    \mu_G^+(z) = \sum_{S \subseteq [1,\ldots,M]}\haf{A_{S,S}}\,z^{M-|S|},
$$
the coefficient of $x^j z^k$ is
\begin{equation}%\label{}
\sum_{T \subseteq  [1,\ldots, M]: |T| = M-j\ } \sum_{\ T_1, T_2 \subseteq T: |T_1| + |T_2| = 2M-2j-k} \haf{A_{T_1,T_1}}\haf{A_{T_2,T_2}}
\end{equation}
(i.e., from the $T_1$ term in $\mu^+(A_{T,T},z)$ we get $z^{|T|-|T_1|} = z^{M-j-|T_1|}$ and similarly for the $T_2$ term, so we need $M-j-|T_1| + M-j-|T_2| = k$, i.e. $|T_1|+|T_2|=2M-2j-k$). Taking $T_1 = S_2$, $T_2 = S_3$, $T = [1,\ldots,M] \backslash S_1$, we see that these are the same.
\end{proof}
\begin{rem}
  Setting $z=0$ in~\eqref{eq:duality} we recover the signless version of Theorem~\ref{thm:CartesianHaf}.
\end{rem}
Identity~\eqref{eq:duality} behaves like what could be called a duality: the LHS is a polynomial in $x$ with some physical interpretation for the second indeterminate~$z$ whereas the RHS  is the polynomial in~$z$ with an auxiliary indeterminate~$x$ (see an explicit example following Theorem~\ref{thm:mixduality}  generalizing the result to the mixed case scenario). Irrespective of how we call it, the advantage of being able to calculate the displaced GBS polynomial of~$G$ by calculating the matching polynomial of the prism over~$G$ is enormous. Generalizing Corollary~\ref{cor:linearSystem}, each DGBS coefficient $h(G,|S|;z)$ is a polynomial in $z$ and the highest one is of order $2M$. Its coefficients can be calculated if we are given $2M+1$ matching polynomials of the prism over $G$ (i.e., for different values of~$x$). Even more is possible using the recurrence formulas for the matching polynomial~\cite{sagemath}, some of which are listed in Section~\ref{sec:preliminaries}. In particular, the calculation of the matching polynomial  for larger graphs  is intractable but the boundary can be pushed by the Godsil tree or the edge recurrence formula. The limits of using these results are case-dependent. The number of subgraphs whose matching polynomial needs to be calculated increases fast and the recursive use will get us only that far. For the Godsil tree, in the case of a highly connected graph the tree grows very fast. In the most extreme case of a complete graph $K_M$ the growth is factorial in $M-1$. But this is fine -- nobody expects that the calculation of the output GBS statistics (even the coarse-grained one) becomes classically tractable.

From Theorem~\ref{thm:duality}, the multiplicative property of the matching polynomial and~\eqref{eq:CarteDistributive} we conclude
\begin{cor}
  \begin{equation}\label{eq:DGBSmultiplicative}
    \dgbs^+_{G_1\uplus G_2}(x,z)=\dgbs^+_{G_1}(x,z)\dgbs^+_{G_2}(x,z).
  \end{equation}
\end{cor}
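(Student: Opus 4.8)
The plan is to deduce this purely formally from the duality of Theorem~\ref{thm:duality}, chaining it with two structural facts about the right-hand side: the distributivity of the Cartesian product over disjoint unions recorded in~\eqref{eq:CarteDistributive}, and the multiplicativity of the signless matching polynomial over disjoint unions. No new computation is needed; the argument is a short string of rewrites.

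First I would rewrite the left-hand side via Theorem~\ref{thm:duality}, converting the displaced GBS polynomial of $G_1\uplus G_2$ into the matching polynomial of a prism,
\[
  \dgbs^+_{G_1\uplus G_2}(x,z)=\mu^+_{(G_1\uplus G_2)\,\square\,P_2(x)}(z).
\]
Next I would apply~\eqref{eq:CarteDistributive} with $H=P_2(x)$ to split the prism over a disjoint union into a disjoint union of prisms,
\[
  (G_1\uplus G_2)\,\square\,P_2(x)=\big(G_1\,\square\,P_2(x)\big)\uplus\big(G_2\,\square\,P_2(x)\big).
\]
Finally I would use that $\mu^+$ is multiplicative over disjoint unions, $\mu^+_{H_1\uplus H_2}(z)=\mu^+_{H_1}(z)\,\mu^+_{H_2}(z)$---the matching-polynomial analogue of Theorem~\ref{thm:GBSmultiplicative}, which follows from the convolution $m(H_1\uplus H_2,r)=\sum_{r_1+r_2=r}m(H_1,r_1)\,m(H_2,r_2)$ since every $r$-matching of a disjoint union splits uniquely into matchings of the two components (there being no edges across the union). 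Applying this with $H_i=G_i\,\square\,P_2(x)$ and then reading Theorem~\ref{thm:duality} backwards on each factor recovers $\dgbs^+_{G_1}(x,z)\,\dgbs^+_{G_2}(x,z)$, closing the chain.

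There is no genuine obstacle here; the identity is a formal consequence of the duality once the combinatorial multiplicativity of $\mu^+$ is in hand. The only point requiring care is the weight bookkeeping: the rungs $(v_1,v_2)$ of the prism carry the weight $x$, so I would confirm that~\eqref{eq:CarteDistributive} is applied in its weighted form and that $\mu^+$ is taken as the weighted signless matching polynomial (as used in Theorem~\ref{thm:wer}). Since the distributive identity is purely structural and the weighted matching polynomial of a disjoint union factorizes for exactly the same reason as in the unweighted case, this introduces no difficulty.
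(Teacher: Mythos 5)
Your argument is correct and is exactly the paper's route: the paper derives this corollary in one line ``from Theorem~\ref{thm:duality}, the multiplicative property of the matching polynomial and~\eqref{eq:CarteDistributive},'' which is precisely the three-step chain you spell out. Your additional care about the weighted rungs is sound but raises no issue, since the factorization of $\mu^+$ over disjoint unions holds verbatim for weighted graphs.
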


\section{Beyond pure graph encoding -- the most general case}\label{sec:mixed}

The results from the previous section hold for any graph $G$ whose adjacency matrix $A$ gets ``doubled'': $C=A\oplus A$. But sometimes this doubling procedure is not necessary and $A$ can be encoded directly. The form of the most general~case of such graph has been uncovered in~\cite{bradler2018nonnegativity} to be
\begin{equation}\label{eq:Cgen}
A=\begin{bmatrix}
         A_{11} & A_{12} \\
         A_{12}^\top & A_{11}
\end{bmatrix},
\end{equation}
where $A_{12}\succeq0$ and $A_{ij}\in\bbR^{M\times M}$ are block matrices. Note that unlike~\cite{bradler2018nonnegativity} we consider $A$ real so that it can be interpreted as an adjacency matrix. At first sight this seems like an interesting yet limited class of matrices. But we can arrive at a subclass of~\eqref{eq:Cgen} from an entirely different direction. If $A$ of $G$ is not of the form in~\eqref{eq:Cgen} the only option to encode $G$  into the GBS device is as an adjacency matrix  $C=A\oplus A$.  Then we show in~\cite{schuld2019quantum} that if the corresponding pure Gaussian state experiences the uniform photon loss the resulting mixed Gaussian state is always described by the special case of~\eqref{eq:Cgen}. It becomes
\begin{equation}\label{eq:C}
    C_{\mathrm{loss}}=\begin{bmatrix}
             A & B \\
             B & A
    \end{bmatrix},
\end{equation}
where $A,B\in\bbR^{M\times M}$ and $A=A^\top,B=B^\top$. It is therefore highly desirable to generalize Theorem~\ref{thm:duality} to deal with the adjacency matrices of this type but, in fact, we will obtain a more general result (Theorem~\ref{thm:mixduality}) that will include~\eqref{eq:Cgen} as its special case.

Let's assume $C=C_{\mathrm{loss}}$ to be from~\eqref{eq:C} and rewrite the exponent of~\eqref{eq:ProbMixedGBSdisplace} as ${1\over2}\bs{\b}^\top A\bs{\b}+\bs{z}^\top\bs{\b}$ where $\bs{z}^\top=\bd^\top(\bbI_M-B)-\ol{\bd}^\top A$. Similarly to the Remark on p~\pageref{rem:z}, we have a choice to make $\bs{z}$ such that $z=z_i,\forall i$ and we may also set $\bs{d}\in\bbR^M$. We can now harvest the fruit of our previous labor and immediately write down the mixed version of the displaced GBS polynomial and the orbit probabilities both in the collision-free and collision regime. This is the most general coarse-grained GBS statistics one can investigate.

In the collision-free case we adapt Eq.~\eqref{eq:MatchToProb} to the mixed scenario and write
\begin{equation}\label{eq:MatchMixToProb}
    \partial^{|\n|}_{\bs{\b},\ol{\bs\b}}e^{{1\over2}\bs{\g}^\top C\bs{\g}+\bs{z}^\top\bs{\b}+\bs{z}^\top\ol{\bs{\b}}}\big\rvert_{\bs{\b},\ol{\bs\b}=0}
    =
    \mu^+_{G_C\backslash\{W^c_{{\n}}\}}(z),
\end{equation}
where $G_C$ is the graph associated with $C\in\bbR^{2M\times2M}$ in~\eqref{eq:C} interpreted as an adjacency matrix and $W^c_{{\n}}$ is the complement of the vertex set defined as
$$
W_{{\n}}=\big\{i:n_i=1\big\}\cup\big\{i+M:n_i=1\big\}.
$$
The adjacency matrix of $G_C\backslash\{W^c_{{\n}}\}$ is of the form
\begin{equation}%\label{}
    C_S=
    \begin{bmatrix}
        A_S & B_S \\
        B_S & A_S \\
    \end{bmatrix},
\end{equation}
where $0\leq|S|\leq M$ and $S=\{i:n_i=1\}$ is a vertex subset as before.

In the collision case we take the RHS of~\eqref{eq:equalityPartDerv} and write an equivalent of~\eqref{eq:outOfNames}
\begin{equation}%\label{}
  \partial_{\bs\a,\ol{\bs\a}}^{|\bs N|}e^{{1\over2}
  (\bs{\a},\ol{\bs{\a}})^\top (C\ox\bbJ_n)(\bs{\a},\ol{\bs{\a}})+(\z^\top\ox\bbJ_{1,nM})\bs{\a}+(\z^\top\ox\bbJ_{1,nM})\ol{\bs{\a}}}\big\rvert_{\bs{\a},\ol{\bs{\a}}=0}
  =
  \mu^+_{G_{C\ox\bbJ_{n}}\backslash\{W^c_{{\N}}\}}(z),
\end{equation}
where
$$
W_{{\N}}=\big\{i:N_i=1\big\}\cup\big\{i+nM:N_i=1\big\}.
$$
The collision-free orbit $N$ is obtained via the decollision map, Eq.~\eqref{eq:decollision}, and $G_{C\ox\bbJ_{n}}\backslash\{W^c_{{\N}}\}$ denotes the graph associated with the adjacency matrix
\begin{equation}\label{eq:Csox}
    \slashed{C}
    =\begin{bmatrix}
        A\sox\bbJ_{|\n|} & B\sox\bbJ_{|\n|} \\
        B\sox\bbJ_{|\n|} & A\sox\bbJ_{|\n|} \\
    \end{bmatrix}
\end{equation}
generalizing $C_S$.

We introduce the mixed equivalent of the displaced GBS polynomial from Definition~\ref{def:pureDGBS}.
\begin{defi}\label{def:mixedDGBS}
  The signless \emph{mixed displaced GBS polynomial} of a graph $G$ is
    \begin{equation}\label{eq:DGBSpolyCFMix}
        \mdgbs^+_G(x,z) = \sum_{|S|=0}^Mq(G,|S|;z)\,x^{M-|S|},
    \end{equation}
    where
    \begin{equation}\label{eq:qCoeffCFMix}
      q(G,|S|;z)=\sum_{\n:|\n|=|S|}\mu^+_{G\backslash\{W^c_{{\n}}\}}(z).
    \end{equation}
    Similarly, we have
    \begin{equation}\label{eq:DGBSpolyColMix}
        \mdgbs^+_{G\times\ol{K}_n}(x,z)=\sum_{|S|=0}^{nM}q(G\times\ol{K}_n,|S|;z)\,x^{nM-|S|},
    \end{equation}
    where
    \begin{equation}\label{eq:qCoeffColMix}
      q(G\times\ol{K}_n,|S|;z)=\sum_{\N:|\N|=|S|}\mu^+_{G\times\ol{K}_n\backslash\{W^c_{{\N}}\}}(z).
    \end{equation}
\end{defi}
\begin{prop}\label{prop:DGBSpolyColMix}
    Consider a click pattern $\n=(n_1,\dots,n_M)$ in the form of~\eqref{eq:n} and fix $n\geq1$ such that $n_i\leq n$. Then for $0\leq|S|\leq nM$ we get
    \begin{equation}\label{eq:DGBSCoeffsMultiProbsInDetail}
      q(G\times\ol{K}_n,|S|)
      =e^{{1\over2}\bs{D}^\top\s_Q^{-1}\bs{D}}\sqrt{\det{\s_Q}}\sum_{\substack{n_1+\dots+n_M=|\n|\\n_i\leq n_{i+\ell}\\n_i\leq n}}\n!\prod_{j=0}^{n}\binom{n}{j}^{k_j}
      p_G(O_{\n})
    \end{equation}
    are the coefficients of the displaced GBS polynomial of $G\times\ol{K}_n$.
\end{prop}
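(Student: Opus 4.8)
The plan is to follow the proofs of Propositions~\ref{prop:GBSpolyCol} and~\ref{prop:DGBSpolyColPur} line for line, the only structural change being that the factorised pure-case probability~\eqref{eq:ProbMixedGBSdisplaceForAopA} is replaced by the full mixed-case formula~\eqref{eq:ProbMixedGBSdisplace} with $C=C_{\mathrm{loss}}$, and that the squared matching polynomial $(\mu^+)^2$ of the pure case is replaced by the single matching polynomial produced by the mixed identity~\eqref{eq:MatchMixToProb} and its collision generalisation stated just before Definition~\ref{def:mixedDGBS}. Every combinatorial ingredient --- the decollision map~\eqref{eq:decollision}, the orbit size, and the multiplicity $\prod_{j=0}^{n}\binom{n}{j}^{k_j}$ --- carries over unchanged, since it depends only on the map $\n\mapsto\N$ and is blind to the pure/mixed distinction.

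Concretely, I would first take the definition~\eqref{eq:qCoeffColMix} of the coefficient $q(G\times\ol{K}_n,|S|;z)$ and rewrite each summand $\mu^+_{G\times\ol{K}_n\backslash\{W^c_{\N}\}}(z)$ as the partial derivative $\partial^{|\N|}_{\bs\a,\ol{\bs\a}}e^{\cdots}\big\rvert_{\bs\a,\ol{\bs\a}=0}$ furnished by the collision mixed identity, which is the exact mixed analogue of the step producing~\eqref{eq:NorbitProbDisp} in the pure case. I would then invoke the mixed counterpart of~\eqref{eq:ProbMixed}, obtained by reading off~\eqref{eq:ProbMixedGBSdisplace} after the reparametrisation $\bs z^\top=\bd^\top(\bbI_M-B)-\ol{\bd}^\top A$,
\[
e^{{1\over2}\bs{D}^\top\s_Q^{-1}\bs{D}}\sqrt{\det{\s_Q}}\,\n!\,p_G(\bs n)=\partial^{|\n|}_{\bs{\b},\ol{\bs\b}}e^{{1\over2}\bs{\g}^\top C\bs{\g}+\bs{z}^\top\bs{\b}+\bs{z}^\top\ol{\bs{\b}}}\big\rvert_{\bs{\b},\ol{\bs\b}=0},
\]
which is precisely the summand appearing after the first step. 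Substituting gives the mixed analogue of~\eqref{eq:GBSCoeffsCol}/\eqref{eq:DGBSCoeffsColA}: a sum of $\n!\,p_G(\n)$ over the entire orbit $O_{\N}$ carrying the prefactor $e^{{1\over2}\bs{D}^\top\s_Q^{-1}\bs{D}}\sqrt{\det{\s_Q}}$. Finally I would split this orbit sum, as in~\eqref{eq:GBSCoeffsColFinal}/\eqref{eq:DGBSCoeffsColB}, into a sum over restricted integer partitions of $|\n|$ (the permutationally inequivalent representatives) weighted by $\prod_{j=0}^{n}\binom{n}{j}^{k_j}$, which is the claimed formula.

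The step deserving genuine attention is verifying that the multiplicity $\prod_{j=0}^{n}\binom{n}{j}^{k_j}$ is not disturbed by the doubled block structure of $C$ nor by the doubled vertex set $W_{\N}=\{i:N_i=1\}\cup\{i+nM:N_i=1\}$. I expect no difficulty here: the decollision transformation~\eqref{eq:decollision} acts on the physical pattern $\n$ alone, and the second half of $W_{\N}$ is completely slaved to the first, so the number of decollisioned tuples $\N$ produced by a fixed $\n$ is identical to the pure case. The only genuinely new input beyond this bookkeeping is the correct absorption of the normalisation and displacement factors $e^{{1\over2}\bs{D}^\top\s_Q^{-1}\bs{D}}\sqrt{\det{\s_Q}}$ inherited from~\eqref{eq:ProbMixedGBSdisplace}; once this is in place, as in the preceding propositions, ``the rest goes on as in the proof of Proposition~\ref{prop:GBSpolyCol}''.
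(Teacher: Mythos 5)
Your proposal is correct and follows the paper's own route exactly: the paper likewise dispenses with a formal proof and states that the argument is that of Proposition~\ref{prop:DGBSpolyColPur} with the pure-case probability~\eqref{eq:ProbMixedGBSdisplaceForAopA} replaced by the general expression~\eqref{eq:ProbMixedGBSdisplace}, which is precisely why the square of the matching polynomial disappears in~\eqref{eq:qCoeffCFMix} and~\eqref{eq:qCoeffColMix}. Your additional check that the combinatorial multiplicity $\prod_{j=0}^{n}\binom{n}{j}^{k_j}$ is unaffected by the doubled vertex set $W_{\N}$ is a sensible (and correct) piece of diligence that the paper leaves implicit.
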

The proof is nearly identical to the proof of Proposition~\ref{prop:DGBSpolyColPur} except that instead of $p(\n)$ from~\eqref{eq:ProbMixedGBSdisplaceForAopA} we use the most general expression, Eq.~\eqref{eq:ProbMixedGBSdisplace}. Henceforth the disappearance of the square of the matching polynomial in~\eqref{eq:qCoeffCFMix} and~\eqref{eq:qCoeffColMix}. In fact, Proposition~\ref{prop:DGBSpolyColPur}, Definition~\ref{def:pureDGBS} and all their consequences are a special case of Proposition~\ref{prop:DGBSpolyColMix} for $B=0$ in~\eqref{eq:C} thanks to the multiplicativity property of the matching polynomial.
For the record, we spell out the orbit probability as the generalization of~\eqref{eq:probOrbitColDisplPure}:
\begin{equation}\label{eq:probOrbitColDisplMix}
  p_G(O_{\n})
  =\sum_{\n\in O_{\n}}^{|O_{\n}|}p(\n)
  ={e^{-{1\over2}\bs{D}^\top\s_Q^{-1}\bs{D}}\over\sqrt{\det{\s_Q}}}
  {1\over\n!}\sum_{\N}\mu^+_{G_{C\ox\bbJ_n}\backslash\{W^c_{{\N}}\}}(z).
\end{equation}

Let us present our second main result of this paper. To this end, we define $C(x)$ to be the $2M \times 2M$ matrix
\begin{equation}\label{eq:Cx}
    C(x) =
    \begin{bmatrix}
    A & B + x \bbI_M \cr B^\top + x \bbI_M & A
    \end{bmatrix}.
\end{equation}
\begin{thm}\label{thm:mixduality}
Let $A$ and $B$ be symmetric $M \times M$ matrices. Then
\begin{equation}\label{eq:mixedduality}
  \mdgbs^+_G(x,z)=\mu^+_{G(x)}(z),
\end{equation}
where $G(x)$ is the graph whose adjacency matrix is $C(x)$ and $G$ corresponds to $C(0)$.
\end{thm}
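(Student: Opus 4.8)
The plan is to follow the template of the proof of Theorem~\ref{thm:duality}, comparing the coefficient of each monomial $x^j z^k$ on the two sides by means of the hafnian expansion $\mu^+_H(z)=\sum_{U}\haf{A_{U,U}}\,z^{|V(H)|-|U|}$ of the signless matching polynomial. First I would put the left-hand side in closed form: since summing over all click patterns $\n$ with $|\n|=|S|$ is the same as summing over all subsets of $[M]$ of that cardinality, Definition~\ref{def:mixedDGBS} gives
\[
  \mdgbs^+_G(x,z)=\sum_{S\subseteq[M]}\mu^+(C_S,z)\,x^{M-|S|},
  \qquad
  C_S=\begin{bmatrix}A_S & B_S\\ B_S & A_S\end{bmatrix},
\]
where $C_S$ is the restriction of $C$ to the index set $S\cup(S+M)$. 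The right-hand side is $\mu^+(C(x),z)$ with $C(x)$ as in~\eqref{eq:Cx} (using $B=B^\top$, so $C(x)=C+xE$ with $E=\begin{bmatrix}0&\bbI_M\\\bbI_M&0\end{bmatrix}$).

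The one genuinely new ingredient, relative to the $B=0$ case already handled in Theorem~\ref{thm:duality}, is the extraction of the coefficient of $x^j$ from $\haf{C(x)_{T,T}}$ for a fixed $T\subseteq[2M]$. Because the cross block is now $B+x\bbI_M$, each rung edge $(i,i+M)$ carries the weight $b_{ii}+x$, so the $x$-dependence and the $B$-dependence are intertwined on the diagonal of that block. Expanding $\haf{C(x)_{T,T}}$ as a sum over perfect matchings of $T$ and then over which $j$ rung edges are chosen to supply the pure-$x$ factor, I would obtain
\[
  [x^j]\,\haf{C(x)_{T,T}}
  =\sum_{\substack{R\subseteq[M],\,|R|=j\\ R\cup(R+M)\subseteq T}}\haf{C_{T',T'}},
  \qquad T'=T\backslash\bigl(R\cup(R+M)\bigr),
\]
since fixing the selected rungs $R$ leaves the remaining $|T|-2j$ vertices to be matched by the full undisplaced matrix $C$, whose matching sum is exactly $\haf{C_{T',T'}}$.

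Feeding this into $\mu^+(C(x),z)=\sum_T\haf{C(x)_{T,T}}\,z^{2M-|T|}$ shows that the coefficient of $x^j z^k$ on the right is $\sum_{R,U}\haf{C_{U,U}}$, the sum running over $R\subseteq[M]$ with $|R|=j$ and $U\subseteq[2M]\backslash(R\cup(R+M))$ with $|U|=2M-2j-k$ (set $U=T'$, so that $T=U\cup R\cup(R+M)$ is recovered). On the left, applying $\mu^+(C_S,z)=\sum_{U\subseteq S\cup(S+M)}\haf{C_{U,U}}\,z^{2|S|-|U|}$ term by term, the coefficient of $x^j z^k$ is $\sum_{S,U}\haf{C_{U,U}}$ with $|S|=M-j$ and $U\subseteq S\cup(S+M)$, $|U|=2M-2j-k$. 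The two expressions are identified by the bijection $S=[M]\backslash R$, under which $S\cup(S+M)=[2M]\backslash(R\cup(R+M))$; I expect this reindexing to be the step that makes everything collapse.

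I anticipate the only real difficulty to be the bookkeeping in the displayed coefficient for $[x^j]\,\haf{C(x)_{T,T}}$: one must track that the rung edges which are \emph{not} selected still contribute their weight $b_{ii}$ through $\haf{C_{T',T'}}$ rather than being dropped, and that symmetry of $B$ keeps $C(x)$ symmetric so the hafnians are well defined. This is precisely why a single matching polynomial $\mu^+(C_S,z)$ appears here, in contrast to the \emph{square} $(\mu^+(A_S,z))^2$ of Theorem~\ref{thm:duality}; the square is recovered in the decoupled case $B=0$, where $C_S=A_S\oplus A_S$ and multiplicativity of $\mu^+$ yields $\mu^+(C_S,z)=(\mu^+(A_S,z))^2$, consistent with the remark following the definition.
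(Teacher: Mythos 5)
Your proposal is correct and follows essentially the same route as the paper: both decompose $C(x)=C(0)+xX$, use the additive hafnian expansion (your matching-by-matching selection of which rung edges supply the $x$ factor is exactly the identity $\haf{[Y+Z]}=\sum_{T}\haf{Y_{T,T}}\haf{Z_{T^c,T^c}}$ specialized to $Y=xX$), and then match coefficients of $x^jz^k$ via the complementation $S=[M]\backslash R$ — which the paper phrases as a double count with the function $f_j$ rather than an explicit bijection. Your closing remark on recovering the squared matching polynomial when $B=0$ via multiplicativity also matches the paper's observation that Theorem~\ref{thm:duality} is the special case $B=0$.
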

\begin{proof}
  For the purpose of the proof we define
    \begin{align}%\label{}
      D(S) &= ([1,\ldots, M] \backslash S) \cup ([M+1,\ldots, 2M] \backslash (S + M))\quad\mathrm{for}\quad S \subseteq [1,\ldots, M],\\
      \mdgbs^+(A,B,x,z) &= \sum_{j=0}^M q_j(A,B,z) x^j,\\
      q_j(A,B,z) &= \sum_{S \subseteq [1,\ldots, M]: |S|=j} \mu^+(C(0)_{D(S),D(S)},z),
    \end{align}
    where the last two rows are~\eqref{eq:DGBSpolyCFMix} and~\eqref{eq:qCoeffCFMix}, respectively, rewritten in the matrix language. In the same spirit, we write $\mu^+_{G(x)}(z)$ as
    $$
    \mu^+(C(x),z) = \sum_{S \subseteq [1,\ldots, 2M]} \haf{[C(x)_{S,S}]}\,z^{2M-|S|}.
    $$
    Let $c_{j,k}$ be the coefficient of $x^j z^k$ in $\mu^+(C(x),z)$. This is the coefficient of $x^j$ in the sum of $\haf{[C(x)_{S,S}]}$ for $S$ with $|S| = 2M-k$.  Of course, for this to be nonzero, $|S|$ must be even, so $k$ is even. If $Y$ and $Z$ are $m \times m$ symmetric matrices,
    $$
    \haf{[Y+Z]} = \sum_{T \subseteq [1,\ldots, m]} \haf{Y_{T,T}} \haf{Z_{T^c,T^c}},
    $$
    where $T^c = [1,\ldots, m] \backslash T$ and the hafnian of an empty matrix is taken to be $1$.  Now in our case of Eq.~\eqref{eq:Cx}, write $C(x) = x X + C(0)$ where $X$ comes from~\eqref{eq:X} and we dropped subscript $2M$. So
    $$
    \haf{[C(x)_{S,S}]} = \sum_{T \subseteq S} \haf{[x X_{T,T}]} \haf{[C(0)_{S \backslash T,S \backslash T}]}.
    $$
    The only possible $T$ that make $\haf{[x X_{T,T}]}$ nonzero are when $T = T_1 \cup (T_1 + M)$ with $T_1 \subseteq [1, \ldots, M]$,
    in which case $\haf{[x X_{T,T}]} = x^{|T_1|}$.  Write $S = T \cup S_1$ where $T$ and $S_1$ are disjoint.  Thus
    $$
    c_{j,k} = \sum_{S' \subseteq [1, \ldots, 2M]: |S'| = 2M - 2j - k} f_j(S') \haf{[C(0)_{S',S'}]},
    $$
    where $f_j(S')$ is the number of $j$-tuples $T_1$ in $[1,\dots, M]$  with $T_1 \cap S = \emptyset$ and $(T_1 + M) \cap S = \emptyset$. On the other hand, the coefficient of $z^k$ in $\mu^+(C(0)_{D(S),D(S)},z)$ is
    $$
    \sum_{T \subseteq D(S): |T|= 2M-2j-k} \haf{[C(0)_{T,T}]}.
    $$
    Thus the coefficient of $x^j z^k$ in $\mdgbs^+_G(x,z)$ is
    $$
    c'_{j,k} = \sum_{S \subseteq [1,\ldots, M]: |S|=j}\ \sum_{T \subseteq D(S): |T| = 2M-2j-k} \haf{[C(0)_{T,T}]}.
    $$
    For a given $T \subseteq [1,\ldots, 2M]$ with $|T|=2M-2j-k$, the number of $S \subseteq [1,\ldots, M]$ with $T \subseteq D(S)$ is $f_j(T)$.  Thus we have $c'_{j,k} = c_{j,k}$, and the claim is proved.
\end{proof}
\begin{exa}
  We now illustrate the mixed duality, Eq.~\eqref{eq:mixedduality}. To show the scope of the result, we will not use $C(0)$ in~\eqref{eq:Cx} of the form $C_{\mathrm{loss}}$ in~\eqref{eq:C} corresponding to the photon loss and we will not even consider a physical adjacency matrix~\eqref{eq:Cgen}. Indeed, in the proof of Theorem~\eqref{thm:mixduality}, there is no mention of $A_{12}\equiv B\succeq0$. Recall~\cite{bradler2018nonnegativity} that a non-physical adjacency matrix means a non-physical covariance matrix derived from~\eqref{eq:sigmaQ}. Hence, let's choose graph $G$  whose adjacency matrix is
  \begin{equation}%\label{}
    C(0)=\left[
    \begin{array}{cccccc}
     0 & 1 & 1 & 0 & 1 & 1 \\
     1 & 0 & 0 & 0 & 1 & 1 \\
     1 & 0 & 0 & 1 & 1 & 0 \\
     0 & 0 & 1 & 0 & 1 & 1 \\
     1 & 1 & 1 & 1 & 0 & 0 \\
     1 & 1 & 0 & 1 & 0 & 0 \\
    \end{array}
    \right].
  \end{equation}
  So $M=3$ and for the LHS of~\eqref{eq:mixedduality} we find from~\eqref{eq:qCoeffCFMix}
  \begin{align*}%\label{}
    q(G,0;z) &= 1, \\
    q(G,1;z) &= 1+3z^2, \\
    q(G,2;z) &= 4 + 11 z^2 + 3 z^4,\\
    q(G,3;z) &= 5+21 z^2+10 z^4+z^6.
  \end{align*}
  Then
  \begin{equation}\label{eq:mDGBSexa}
    \mdgbs^+_G(x,z)=x^3+x^2(1+3 z^2)+x(4+11 z^2+3 z^4)+5+21 z^2+10 z^4+z^6.
  \end{equation}
  For the matching polynomial on the RHS of~\eqref{eq:mixedduality} we find from~\eqref{eq:matchPolySignLess} (for $M\mapsto2M=6$):
  \begin{align*}
    m(G(x),0) &= 1, \\
    m(G(x),2) &= 10 + 3 x, \\
    m(G(x),4) &= 21 + 11 x + 3 x^2, \\
    m(G(x),6) &= 5 + 4 x + x^2 + x^3.
  \end{align*}
  We get
  \begin{equation}\label{eq:MPexa}
    \mu^+_{G(x)}(z)=z^6+z^4(10+3x)+z^2(21+11x+3x^2)+5+4x+x^2+x^3
  \end{equation}
  and the polynomials are identical as expected.
\end{exa}

\section{Deriving novel GBS output statistics, its properties and applications}\label{sec:coarsegrainDistro}

While the coarse-grained probability distribution of orbits led to new types of matching polynomials, we will now see how investigating these polynomials leads to a new coarse-graining strategy. This strategy summarizes orbits to ``meta-orbits'' and has been successfully used by us in Ref \cite{schuld2019quantum}, a success that this more technical motivation may be able to explain.

\subsection{Motivation}

The rationale behind the process of probability coarse-graining is the fact that a single measurement event becomes highly unlikely as the number of modes of the photonic circuit incarnating GBS increases. Hence it is better to cluster some events together to ``combined events'' like orbits, and investigate the collective probability. Furthermore, the number of photon click patterns grows extremely fast with the maximum number of photons to consider, and it therefore quickly becomes unrealistic to work with this distribution as a deterministic output estimated by the device.  The distribution of photon click patterns, even though believed to be classically intractable, is therefore not ``quantum feasible'' in applications with a deterministic output. However, one has to strike the right balance: too much coarse-graining will certainly boost the probability of detection of such a ``combined'' event but it is perhaps clear that it can hardly be useful in any quantum task.

\subsection{Finding the right output distribution}

Let us first recapitulate what distributions we have encountered so far. We  consider the most general case of the displaced GBS in the collision regime. For simplicity, we assume $|\n|\leq M$ which will be typically true in an experiment for a large $M$ but nothing is expected to change in general. The first probability distribution to mention is over all click patterns $p_G(\n)$. The interesting click patterns (such as $\n=(1,\dots,1)$ corresponding to the hafnian squared of the encoded graph) are, however, classically as well as quantumly intractable. The former follows from the classical complexity arguments regarding the calculations of the permanent and hafnian and the latter from the probability of measurement estimation due to the exponential number (in $M$) of possible click patterns. The detection events that are tractable (such as the probability of the vacuum) are typically uninteresting. The next natural step is to coarse-grain over all permutations of a click pattern and study all possible orbit probabilities given by the distribution $p_G(O_{\n})=\sum_{\n\in O_{\n}}^{|O_{\n}|}p_G(\n)$. The result from~\cite{bradler2018graph} on the role of the orbit probabilities as being a complete set of graph invariants for the graph isomorphism problem suggests, among other things, that the orbit probabilities should be classically intractable. But are they accessible through the GBS device? This is closely related to the number and size of all orbits and here we will clarify the link.

The question of how many orbits for a given $|\n|$ there are is equivalent to the question of how many ways an integer $|\n|$ can be partitioned. There is no closed form for the partition number but the machinery of generating functions provides an easy answer. We construct
\begin{equation}\label{eq:genFcnIntegerPartition}
  \wp(|\n|)=\prod_{k=1}^{|\n|}{1\over1-x^k},
\end{equation}
Taylor expand it around the origin and the coefficient of $x^{|\n|}$ is the desired number of partitions of $|\n|$. The number of partitions increases exponentially with $|\n|$ (and so with $M$ since we assume $|\n|<M$ for simplicity) but as we will see later in this section, the probability of some orbits is quite high making it amenable to sampling and so it can be estimated. If we wanted to coarse-grain more, however, we would find the following result
\begin{lem}\label{lem:pn}
Let
    \begin{equation}\label{eq:probCoarsegrained}
      p_G(|\n|)\df\sum_{\n~\mathrm{s.t.\ }|\n|\mathrm{\ fixed}}p_G(\n)
      %\equiv\sum_{i=1}^{\binom{|\n|+M-1}{|\n|}}p_i(\n)
      ={1\over\sqrt{\det{\s_{Q}}}}
      \sum_{\genfrac{}{}{0pt}{1}{n_1+\dots+n_M=|\n|}{n_i\leq n_{i+\ell}}}{1\over\n!}\sum_{\n\in O_{\n}}^{|O_{\n}|}\haf^{\,2}{[A\sox\bbJ_{|\n|}]}.
    \end{equation}
    Then
    \begin{equation}\label{eq:probCoarsegrainedGenFcn}
      p_G(|\n|)=\det{[\bbI_{M}-c^2A^2]}^{1/2}{1\over|\n|!}{\partial^{|\n|}\over\partial w^{|\n|}}\left(\det{[\bbI_{M}-c^2w^2A^2]^{-1/2}}\right)\big\rvert_{w=0}.
    \end{equation}
\end{lem}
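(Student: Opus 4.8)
The plan is to package all the fixed-total-photon-number probabilities into a single generating function in an auxiliary variable $w$ and to recognize the right-hand side of~\eqref{eq:probCoarsegrainedGenFcn} as its $|\n|$-th Taylor coefficient at $w=0$. Set $F(w)=\sum_{\n\in\bbZ^M_{\geq0}}p_G(\n)\,w^{|\n|}$. First I would observe that the inner double sum defining $p_G(|\n|)$ — over ordered partition representatives weighted by $1/\n!$ and over their orbits — equals $\sqrt{\det\s_Q}\sum_{\n:\,|\n|\text{ fixed}}p_G(\n)$: combining~\eqref{eq:ProbMixed} with~\eqref{eq:equalityHaf} gives $\tfrac1{\n!}\haf^{\,2}{[A\sox\bbJ_{|\n|}]}=\sqrt{\det\s_Q}\,p_G(\n)$ for each click pattern, so summing a representative over its orbit and then over all partitions of $|\n|$ merely re-sums $p_G$ over every pattern with that total. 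Hence $p_G(|\n|)$ is exactly the coefficient of $w^{|\n|}$ in $F(w)$, and the claim reduces to the closed form $F(w)=\det{[\bbI_M-c^2A^2]}^{1/2}\,\det{[\bbI_M-c^2w^2A^2]}^{-1/2}$, where I absorb the squeezing scale into the encoded matrix $cA$, i.e. $C=cA\op cA$.

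Two pieces then remain. The prefactor is immediate from~\eqref{eq:sigmaQ} and~\eqref{eq:X}: $\bbI_{2M}-X_{2M}C$ is the block matrix with diagonal blocks $\bbI_M$ and off-diagonal blocks $-cA$, whose blocks commute, so $\det\s_Q=\det{[\bbI_M-c^2A^2]}^{-1}$ and $1/\sqrt{\det\s_Q}=\det{[\bbI_M-c^2A^2]}^{1/2}$. It therefore suffices to evaluate $S(w)\df\sum_{\n}\tfrac{w^{|\n|}}{\n!}h_{\n}^2$, where $h_{\n}=\partial^{|\n|}_{\bs\b}e^{\frac12\bs\b^\top cA\bs\b}\big\rvert_{\bs\b=0}$ are the (real, since $A$ is real) Taylor data of $f(\bs\b)\df e^{\frac12\bs\b^\top cA\bs\b}=\sum_{\n}\tfrac{h_{\n}}{\n!}\bs\b^{\n}$, and to show $S(w)=\det{[\bbI_M-c^2w^2A^2]}^{-1/2}$.

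To compute $S(w)$ I would use the complex-Gaussian (Bargmann) pairing $\frac1{\pi^M}\int_{\bbC^M}\overline{\bs\b}^{\m}\bs\b^{\n}e^{-|\bs\b|^2}\,d^{2M}\bs\b=\n!\,\delta_{\m\n}$. Since $h_{\n}$ is real and replacing $\bs\b\mapsto w\bs\b$ in the second factor inserts precisely $w^{|\n|}$, this yields
$$
S(w)=\frac1{\pi^M}\int_{\bbC^M}\overline{f(\bs\b)}\,f(w\bs\b)\,e^{-|\bs\b|^2}\,d^{2M}\bs\b,
$$
valid for $w$ near $0$, which is all one needs to extract coefficients at $w=0$. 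Writing $\bs\b=\bs u+i\bs v$ with $\bs u,\bs v\in\bbR^M$ turns this into a real $2M$-dimensional Gaussian integral $\frac1{\pi^M}\int e^{-\frac12(\bs u,\bs v)Q(\bs u,\bs v)^\top}=2^M/\sqrt{\det Q}$ with
$$
Q=\begin{bmatrix}2\bbI_M-(1+w^2)cA & -i(w^2-1)cA\\ -i(w^2-1)cA & 2\bbI_M+(1+w^2)cA\end{bmatrix}.
$$

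The crux — and the step I expect to be the main obstacle — is $\det Q$: all four blocks are polynomials in $A$ and hence commute, so $\det Q=\det\big(Q_{11}Q_{22}-Q_{12}^2\big)$, and the key cancellation $-(1+w^2)^2+(w^2-1)^2=-4w^2$ collapses everything to $\det Q=\det\big(4\bbI_M-4c^2w^2A^2\big)=4^M\det{[\bbI_M-c^2w^2A^2]}$. Hence $S(w)=2^M/\sqrt{\det Q}=\det{[\bbI_M-c^2w^2A^2]}^{-1/2}$, giving $F(w)$ in the claimed form, and reading off $\tfrac1{|\n|!}\partial_w^{|\n|}F\rvert_{w=0}$ yields exactly~\eqref{eq:probCoarsegrainedGenFcn}. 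The other place to be careful is the bookkeeping that the $\n!$ weights cancel and that $w^{|\n|}$ (not $w^{|\n|/2}$) is produced; a clean independent check is to diagonalize $A=O\La O^\top$, note that the total-photon operator $\sum_i\hat n_i$ is invariant under the passive interferometer so $F$ factorizes over eigenmodes, and recover each factor $(1-c^2\la_i^2)^{1/2}(1-c^2w^2\la_i^2)^{-1/2}$ as the photon-number generating function of a single-mode squeezed vacuum.
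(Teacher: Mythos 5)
Your proof is correct, but it follows a genuinely different route from the paper's. The paper diagonalizes $A$, invokes the physical fact that the total photon number is conserved by the passive interferometer, and therefore computes $p_G(|\n|)$ directly from the photon-number distribution of the $M$ independent single-mode squeezed vacua, obtaining the generating function as a product $\prod_j(1-c^2\la_j^2w^2)^{-1/2}$ over eigenmodes; the determinant formula and the normalization $\b=\det[\bbI_M-c^2A^2]^{1/2}$ then follow immediately. You instead stay entirely on the combinatorial/analytic side: you identify $p_G(|\n|)$ as the $w^{|\n|}$-coefficient of $F(w)=\sum_{\n}p_G(\n)w^{|\n|}$, reduce to $S(w)=\sum_{\n}h_{\n}^2w^{|\n|}/\n!$ via \eqref{eq:ProbMixed}, and evaluate $S(w)$ as a Bargmann-pairing Gaussian integral of $\overline{f(\bs\b)}f(w\bs\b)$, where the block-determinant cancellation $(1+w^2)^2-(1-w^2)^2=4w^2$ delivers $\det[\bbI_M-c^2w^2A^2]^{-1/2}$ without ever diagonalizing $A$ or appealing to interferometer invariance. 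I checked the key steps — the prefactor $\det\s_Q^{-1/2}=\det[\bbI_M-c^2A^2]^{1/2}$, the form of $Q$, its block determinant, and the convergence of the integral for $|w|<1$ given $c<1/\|A\|_2$ — and they are all sound (you could say a word about interchanging the sum and the integral, but absolute convergence for small $|w|$ makes this routine). Your approach buys a self-contained algebraic derivation that does not presuppose the squeezed-vacuum photon statistics; the paper's buys brevity and physical transparency. Amusingly, the "independent check" you sketch at the end — diagonalize $A$, use invariance of $\sum_i\hat n_i$ under the interferometer, and recover the per-mode factors $(1-c^2\la_i^2)^{1/2}(1-c^2w^2\la_i^2)^{-1/2}$ — \emph{is} the paper's proof.
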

\begin{rem}
  Already in~\cite[Lemma~17]{bradler2018graph} it was noted that the presence of an interferometer does not affect $p_G(|\n|)$ since the total photon number is preserved. Hence $p_G(|\n|)$ can be calculated just from the array of $M$ single-mode squeezers. Therefore, this quantity  cannot serve as a graph invariant for distinguishing co-spectral non-isomorphic graphs. Here we explicitly show that this statistics is classically tractable by using~\eqref{eq:probCoarsegrainedGenFcn} and reading off the corresponding expansion coefficient. Note that the displaced version of $p_G(|\n|)$ can be analyzed using the same proof technique leading to the same conclusion.
\end{rem}
\begin{proof}
  We consider a GBS setup as described by~\eqref{eq:ProbMixedGBSdisplaceForAopA} for $\bs{d}=0$, where $A\in\bbR^{M\times M}$, and set $|\n|=2k$. All possible states produced by the squeezers that contribute to $|\n|$ are of the form $\ket{2x_1,2x_2,\dots,2x_M}$, where $\sum_{i}{x_i}=k$. Since the total probability corresponding to all $\ket{2x_1,2x_2,\dots,2x_M}$ is a preserved quantity by the interferometer, we can omit it altogether and write it in terms of $M$ single-mode squeezers as $\b\prod_{j=1}^M\tau_{x_j}^2(r_j)$ where $c\la_j=\tanh{r_j}$ and
  \begin{equation}\label{eq:SMSSprob}
    \tau_{x_j}^2(r_j)={(2x_j)!\over2^{2x_j}(x_j!)^2}c^{2x_j}\la_j^{2x_j},
  \end{equation}
  where $0<c<1/\|A\|_2$ ($\|A\|_2$ being the spectral norm of $A$), $\la_j$ are the eigenvalues of $A$ and $\b$ is the normalization constant. The probability of $|\n|=2k$ is then
    \begin{subequations}%\label{}
      \begin{align}\label{eq:pn}
        p(|\n|)=p(2k)&=\b\sum_{\genfrac{}{}{0pt}{1}{x\in\bbN^M}{\sum_ix_i=k}}\prod_{j=1}^M\tau_{x_j}^2(r_j)\\
                     &=\b\Big({c\over2}\Big)^{2k}\sum_{\genfrac{}{}{0pt}{1}{x\in\bbN^M}{\sum_ix_i=k}}\prod_{j=1}^M{(2x_j)!\over(x_j!)^2}\la_j^{2x_j}.
      \end{align}
    \end{subequations}
  Its ordinary generating function reads
  \begin{equation}\label{eq:ordGenFcn}
    G(w)=\sum_{k=0}^\infty p(2k)w^{2k}=\b\prod_{j=1}^Mg_j(w),
  \end{equation}
  where
  \begin{equation}\label{eq:g}
    g_j(w)=\sum_{x_j=0}^\infty{(2x_j)!\over(x_j!)^2}\bigg({c\la_jw\over2}\bigg)^{2x_J}={1\over\sqrt{1-c^2\la_j^2w^2}}.
  \end{equation}
  So
  \begin{equation}\label{eq:G}
    G(w)=\prod_{j=1}^M{1\over\sqrt{1-c^2\la_j^2w^2}}=\det{[\bbI_M-c^2w^2A^2]^{-1/2}}.
  \end{equation}
  It remains to find the normalization constant, for example, by setting $w=1$ in~\eqref{eq:ordGenFcn} and demanding $G(1)=1$. We find
  $$
  \b=\det{[\bbI_M-c^2A^2]^{1/2}}
  $$
  and the claim follows.
\end{proof}

\subsection{Summarizing click patterns to meta-orbits}

In order to strike the right balance between the desired classical intractability and quantum (GBS) feasibility we would like to coarse-grain more than $p_G(O_{\n})$ but less than $p_G(|\n|)$. In fact, Propositions~\ref{prop:GBSpolyCol},~\ref{prop:DGBSpolyColPur} and~\ref{prop:DGBSpolyColMix} hint at such an option. Recall that we show there how multiparticle (collision) orbits coalesce into the DGBS coefficient $h(G\times\ol{K}_n,|S|;z)$  as seen in~\eqref{eq:DGBSCoeffsColB}. We set
$$
p_G(|\n|,n)=\sum_{\substack{n_1+\dots+n_M=|\n|\\n_i\leq n_{i+\ell}\\n_i\leq n}}\n!\prod_{j=0}^{n}\binom{n}{j}^{k_j}p_G(O_{\n}).
$$
The problem is, however, that $p_G(|\n|,n)$ can't be sampled ``directly''. The culprit is the combinatorial piece $\n!\prod_{j=0}^{n}\binom{n}{j}^{k_j}$. To estimate $p_G(|\n|,n)$, one has to sample all participating $p_G(O_{\n})$ ``separately'', multiply them by the combinatorial coefficients and then sum. So it is not different from sampling the less coarse-grained distribution $p_G(O_{\n})$ which,  by definition, cannot provide less information.

Motivated by $p_G(|\n|,n)$, we instead directly sample events from what we call ``meta-orbits'' $\{|\n|,\Delta_{n}\}$,
\begin{equation}\label{eq:DeltaCoarseGr}
      p_G(|\n|,\Delta_{n})\df\sum_{\n\in\Delta_n}p_G(O_{\n}),
\end{equation}
where
\begin{equation}\label{eq:DeltaDistro}
\Delta_n=\big\{\n:\sum_{i}n_i=|\n|,(\forall i)(n_i\leq n) ,(\forall\n\exists n_i\in\n)(n_i=n)\big\}.
\end{equation}
In words, meta-orbits summarize all click patterns of a total photon number equal to $|\n|$, where no detector counts more than $n$ photons. We also call this the $\Delta$ coarse-graining strategy.

Whether it is a useful quantity again  depends on the trade-off between quantum feasibility, classical intractability and the actual usefulness. There is no rigorous proof for neither of the three items at the moment -- we will only present evidence in favor of using $p_G(|\n|,\Delta_n)$. Note that $p_G(|\n|,\Delta_{n})$ is a probability distribution since $\Delta_n$ partitions the set of all $\n$'s once $|\n|$ is chosen.

\subsection{Quantum feasibility of meta-orbits}

Let us first take a look at what coarse-grained probabilities are actually accessible in an experiment as a function of the mode number $M$ and the total photon number~$|\n|$. By accessible we mean using realistic squeezing levels leading to the probabilities whose values can be estimated with a reasonable number of samples which by the repetition rate of the GBS device translates into a time estimate. Clearly, we cannot simulate any graph but only those whose GBS polynomial/coarse-grained probabilities can be derived analytically for any graph size. The simplest case is a complete graph with loops $\ol{K}_M$. We choose $|\n|$ and calculate all $p_{\ol{K}_M}(O_{\n})$. We count the number of orbits by~\eqref{eq:genFcnIntegerPartition} (for comfort we again assume $|\n|\leq M$) and their size by the Burnside formula, Eq.~\eqref{eq:burnside}. We use it together with~\eqref{eq:muPlusKM} and $\mu_{\ol{K}_M}^+(z)=\mu_{K_M}^+(z)$ to write~\eqref{eq:probOrbitColDisplPure} explicitely as
\begin{equation}\label{eq:probKMloop}
  p_{\ol{K}_M}(O_{\n})
  ={e^{-{1\over2}\bs{D}^\top\s_Q^{-1}\bs{D}}\over\sqrt{\det{\s_Q}}}\binom{M}{k_0,k_1,\dots,k_\ell}{1\over\n!}\bigg(\sum_{r=0}^{\lfloor M/2\rfloor}{M!\over(M-2r)!r!2^r}z^{M-2r}c^r\bigg)^2,
\end{equation}
where $z=(1-c)d-(M-1)cd$ for $d=d_i,\forall i$. The factor $c^r$ appears due to the necessary ``renormalization'': $A\mapsto cA$, where $c$ is bounded by the inverse of the operator norm of $A$~\cite{bradler2018graph}. Note that unlike the collision-free case in the example on page~\pageref{exa:colFreeKM}, $p_{\ol{K}_M}(O_{\n})=p_{{K}_M}(O_{\n})$ does \emph{not} hold.

We first plot $p_{\ol{K}_{40}}(O_{\n})$ in Fig.~\ref{fig:K40prob} for a zero displacement and two values of the squeezing parameter $c$ from the allowed interval $0<c<1/40$.
\begin{figure}[t]
      \resizebox{15cm}{8cm}{\includegraphics{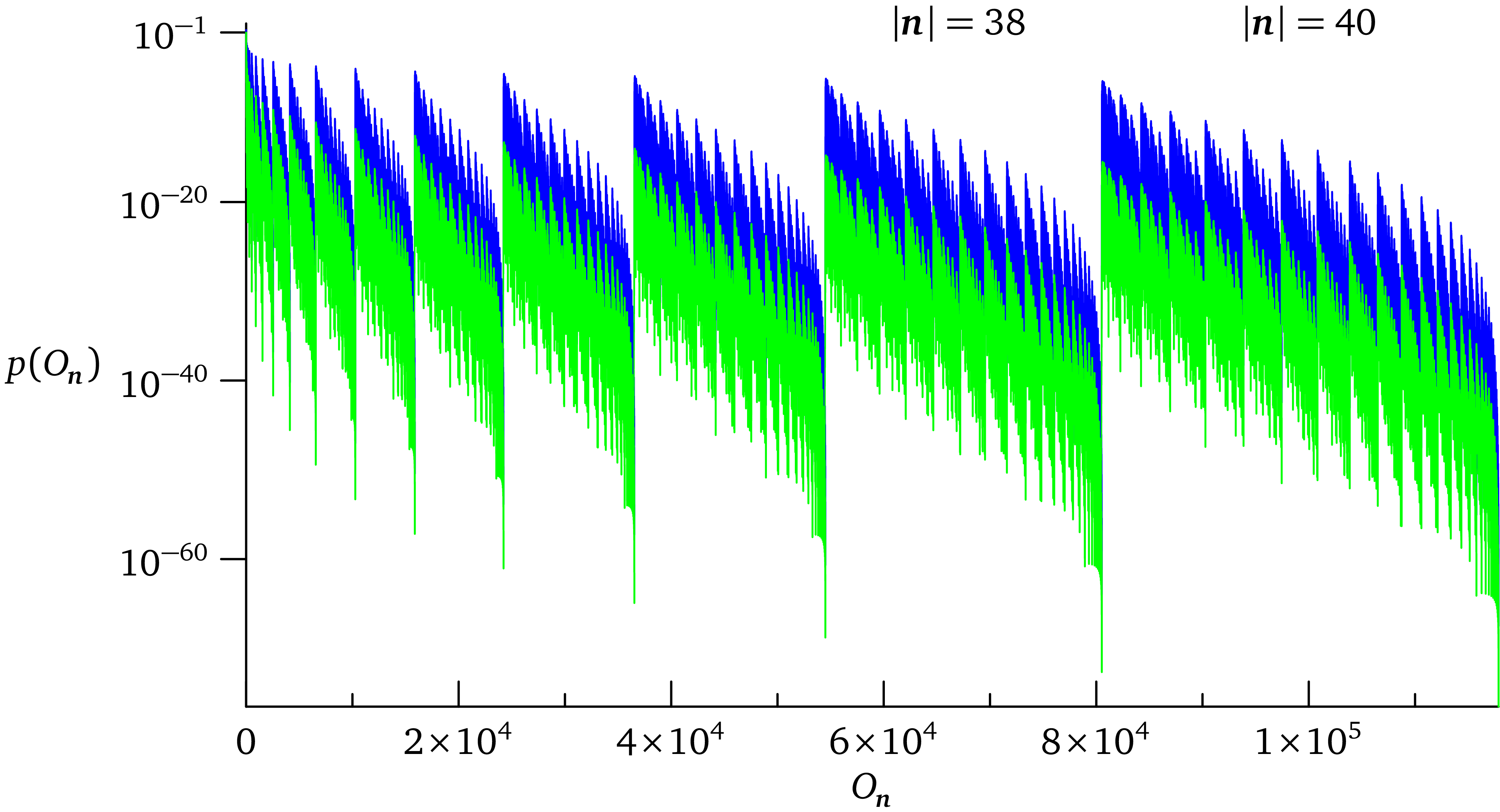}}
      \caption{Coarse-grained probability $p_{\ol{K}_M}(O_{\n})$ for $M=40$ for $|n|=0,2,\dots,40$ for $c=1/50$ (blue) and $c=1/85$ (yellow) and a zero displacement. The orbit order $O_{\n}$ on the $x$ axis is explained in the main text.}
      \label{fig:K40prob}
\end{figure}
Let us describe how the orbits on the $x$ axis are ordered. For each $|\n|$ and all $n$'s satisfying $1\leq n\leq|\n|$  we order the orbit representatives naturally in the following way: we rewrite~\eqref{eq:n} as
\begin{equation}\label{eq:nkn}
  \n=(0^{k_0},1^{k_1},\dots,n^{k_n})\equiv(1^{k_1},\dots,n^{k_n}),
\end{equation}
where $\sum_{j=1}^{|\n|}jk_j=|\n|$. The case of $n=1$ is trivial (the orbit either exists or no) so we start with $n=2$ and increase $k_2$ starting from $k_2=1$. Once all possibilities are found we set $n=3, k_3=1$ and search for all possible $k_2$'s. In this way we iteratively continue. An example of the ordering is in~\eqref{eq:nExample}.

The first thing we notice in Fig.~\ref{fig:K40prob} is the immense probability range of all orbits. Second, even the most likely orbit farther from the vacuum are quite unlikely even for a relatively high squeezing. The values $c=1/50$ and $c=1/85$ correspond to $9.5\,\mathrm{dB}$ and  $4.4\,\mathrm{dB}$ of squeezing, respectively. We address the second point in a moment.
\begin{figure}[h]
      \resizebox{15cm}{8cm}{\includegraphics{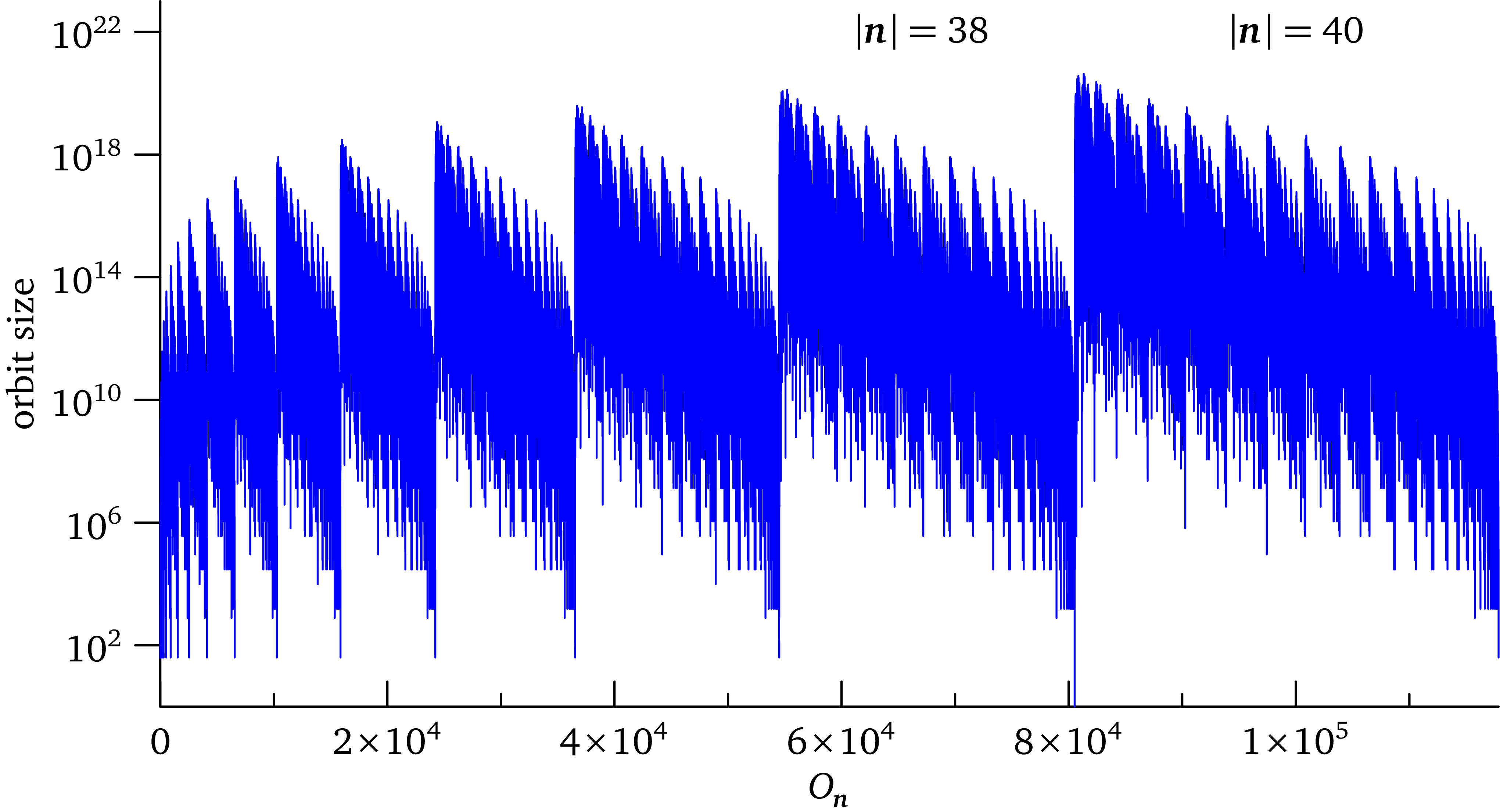}}
      \caption{The orbit sizes are clearly correlated with the orbit probabilities (cf.~Fig.~\ref{fig:K40prob} for a given $|\n|$). The orbit order on the $x$ axis is explained in the text.}
      \label{fig:K40orbits}
\end{figure}
The explanation for the large probability range lies in the orbit sizes shown in Fig.~\ref{fig:K40orbits}. We can see that for a given $|\n|$ the shape of the orbit function nearly perfectly copies the probability landscape. Indeed, this is an expected behavior. Following a reasoning from statistical mechanics, the PNR output for the click patterns with a lot of permutations is more likely (for a fixed $|\n|$). But this correspondence is not exact as can also be  seen by zooming in and comparing with the maxima for $|\n|=38$ or 40 in Figs.~\ref{fig:K40prob} and~\ref{fig:K40orbits}. Despite the fact that they do not exactly match, they are pretty close and the procedure will help us localize the orbits with the highest probabilities.

We are interested in maximizing $|O_{\n}|$ in Eq.~\eqref{eq:burnside} (or equivalently minimizing $k_1!k_2!\ldots k_\ell!$) for fixed $M$ and $|{\n}|$. We can formulate the problem as a binary integer linear programming problem (essentially a two-dimensional knapsack problem~\cite{martello1990knapsack}). Let $m$ be the largest $i$ for which we want to consider the possibility of $k_i \ne 0$.  For $0 \le i  \le m$ let $M_i$ be the largest value of $k_i$ that we want to consider. Then we take binary variables $x_{ij}$, $j = 1 \ldots M_i$, with the interpretation that $x_{ij} = 1$ if $k_i\geq j$.  Thus $k_i=\sum_{j=1}^{M_i}x_{ij}$.  Our objective will make it advantageous to have $x_{ij}\geq x_{i(j+1)}$. The cost of $x_{ij}$ is $\log{j}$, so that the total cost will be $\log{\left[\prod_i k_i!\right]}$.  The binary integer linear programming problem is
\begin{alignat*}{3}
 & \text{minimize} & \sum_{i=0}^m \sum_{j=2}^{M_i} \log{j}\,x_{ij}& \\
 & \text{subject to} \quad& \sum_{i=0}^m \sum_{j=1}^{M_i} x_{ij} &= M ,\\
                 &&\sum_{i=0}^m  \sum_{j=1}^{M_i} i\,x_{ij} &= |{\n}|,\\
                 && x_{ij}& \in \{0,1\},\quad & i & =0,\dots,m,\ \  j=1 ,\dots, M_j.
\end{alignat*}
Although the knapsack problem is NP-complete in general, these problems don't seem particularly difficult to solve. For example, for the situation in Fig.~\ref{fig:K40orbits}, where $M=40$, we explore $|\n|=40$. We find
$$
k_0=19,\,k_1=10,\,k_2=6,\,k_3=3,\,k_4=1,\,k_5=1.
$$
This is indeed an optimal solution where $\prod_{i=0}^{5}k_i!\approx1.9\times10^{27}$. How far are we from the most likely orbit? The most likely one is for
$$
k_0=14,\,k_1=15,\,k_2=9,\,k_3=3,
$$
where  $\prod_{i=0}^{3}k_i!\approx2.8\times10^{28}$.

How do we deal with the high squeezing nuisance? By introducing a displacement whose consequence can be seen in Fig.~\ref{fig:K40probDispl}. Here we plot $p_{\ol{K}_{40}}(O_{\n})$ given by~\eqref{eq:probKMloop} for the same squeezing ($c=1/55$ corresponding to $8\,\mathrm{dB}$) with and without a displacement. The overall shape of the distribution is very well preserved where, typically, only a few mismatches occur. But, crucially, the whole pattern is shifted up making some of the orbits very likely to be sampled with a reasonable amount of squeezing.
\begin{figure}[t]
      \resizebox{15.5cm}{8cm}{\includegraphics{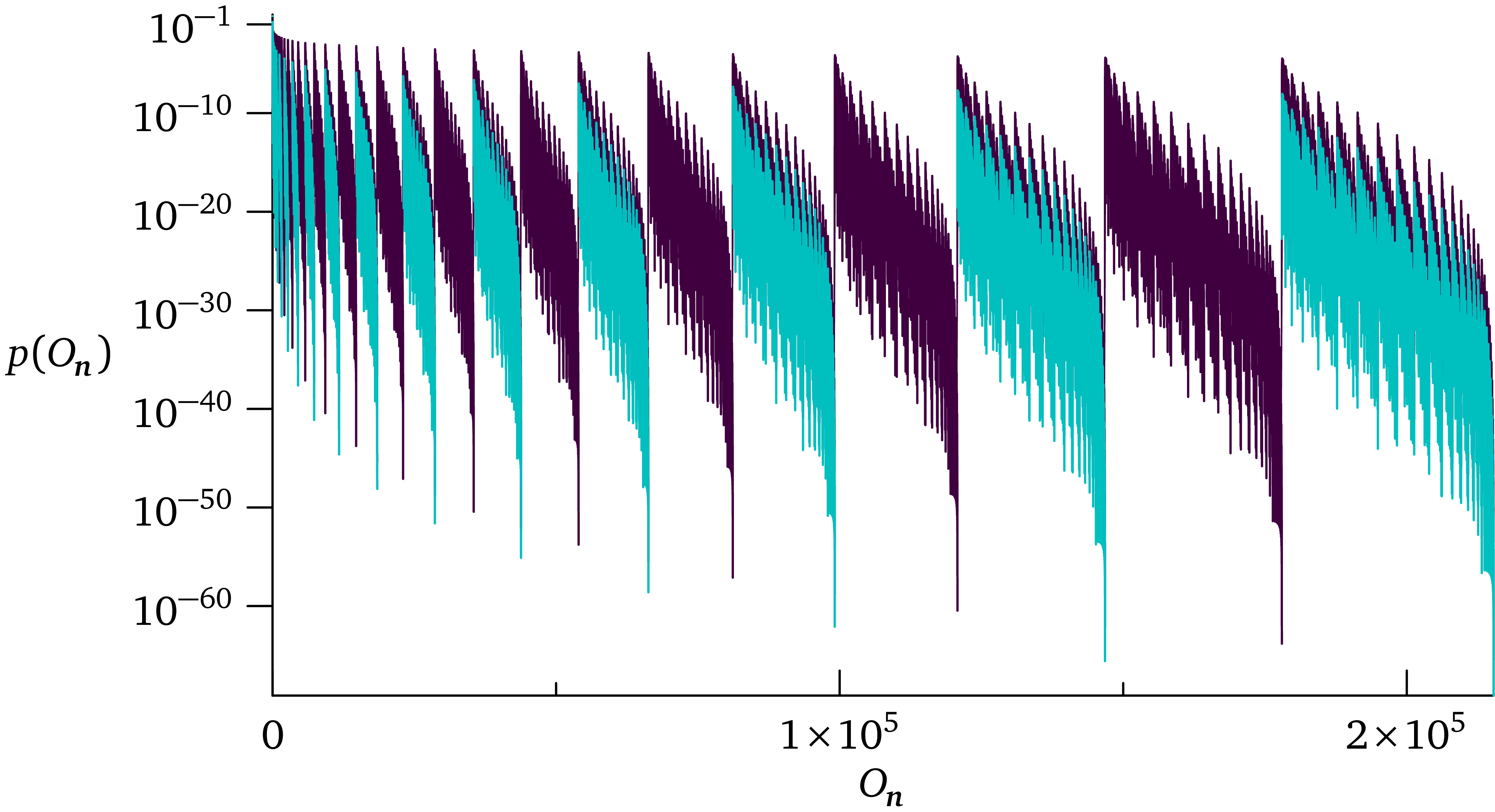}}
      \caption{We illustrate the effect of a nonzero displacement. Coarse-grained probability $p_{\ol{K}_M}(O_{\n})$ is shown for $M=40$ for $|n|=0,1,2,\dots,40$ for $c=1/55,d=0$ (cyan) and $c=1/55,d=1/2$ (magenta). The orbit order on the $x$ axis is explained in the text.}
      \label{fig:K40probDispl}
\end{figure}

One can argue that $\ol{K}_M$ is a graph too special to draw any general conclusion from it. This could be the case and so we instead plot the probability distribution corresponding to one of the two co-spectral, non-isomorphic strongly regular graph on 16 vertices~\cite{godsilAlgComb} named SRG(16,6,2,2) in Fig.~\ref{fig:SRG16} whose data were obtained from~\cite{spence2018}. The probability distribution was brute-force calculated and we will closely analyze all the consequences in the next section. For the purpose of this section we just note that the probability profile is again strongly correlated with the orbit size. The maximal squeezing necessary from top to bottom is 13.2~dB, 3.7~dB and $1\,\mathrm{dB}$.
\begin{figure}[h]
      \resizebox{15.5cm}{8.6cm}{\includegraphics[draft=false]{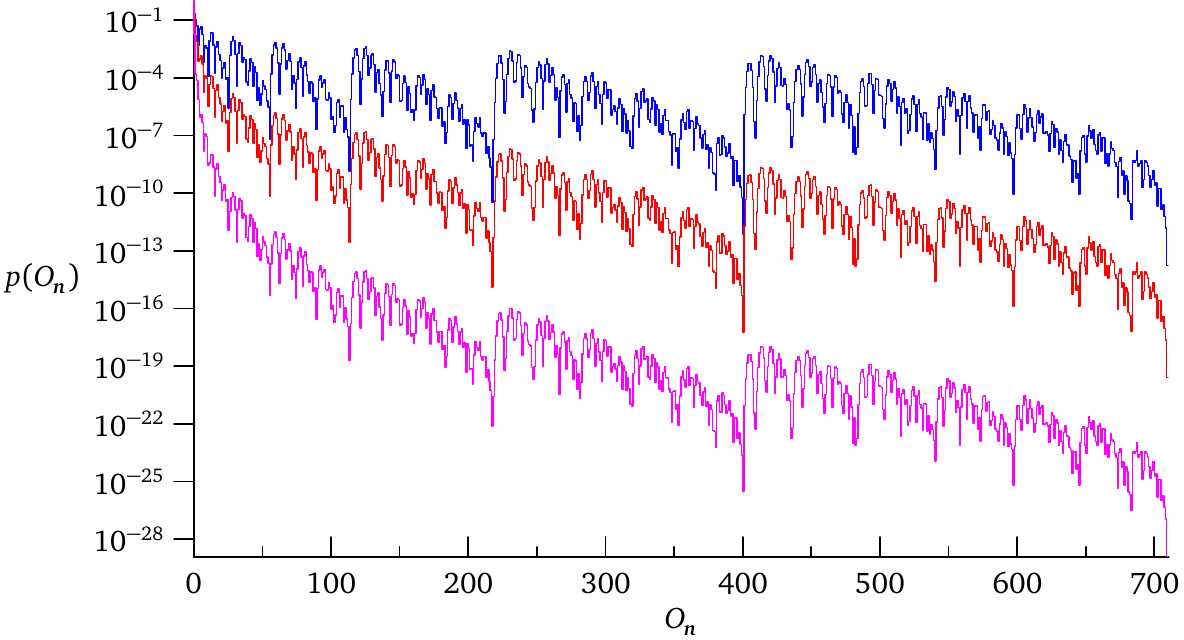}}
      \caption{Coarse-grained probability $p_{G}(O_{\n})$ where $G$ is one of the two co-spectral graphs on 16 vertices for three different values of $0<c<1/6$ and $d=0$ ($c=1/6.6, c=1/15$ and $c=1/50$, top to bottom). The orbit order on the $x$ axis is explained in the text.  Zero probability orbits were removed.}
      \label{fig:SRG16}
\end{figure}

\subsection{Classical intractability and evidence of a quantum advantage}

It is important to emphasize that all the arguments we have in favor of classical intractability are mere evidence (numerical or otherwise). The same holds for the quantum GBS advantage. The potentially good news from the previous section is that if a displacement is introduced then the orbit probabilities (whose importance has been shown on much stronger grounds~\cite{bradler2018graph}) have decent probabilities and can be used. Here, however, we will focus on the interesting properties of its more coarse-grained sibling -- the constrained probability distribution introduced in Eq.~\eqref{eq:DeltaCoarseGr}, namely $\sum_{n=1}^mp_G(|\n|,\Delta_{n})$ for a given $|\n|$ and $M$. The aim is to choose $m$ to be as high as possible because this subset typically has the highest probability as depicted in Figs.~\ref{fig:K40prob}, \ref{fig:K40probDispl} and explicitly in Fig.~\ref{fig:SRG16}. But, thanks to Lemma~\ref{lem:pn}, we know $m$ can't be equal to $\max{n_i}$ and it is not obvious if any $m$ is useful at all. Fortunately this is not the case as we will see in a moment. Let's start by asking how many orbits we are coarse-graining. This is again provided by a generating function for the restricted partitions of $|\n|$ with at most $M$ parts each of which is less or equal than $n$. It is given by the coefficient of  $x^{|\n|}$  of the Gaussian binomial coefficient~\cite{bona2012combinatorics}
\begin{equation}\label{eq:GaussBinom}
  \wp(M,n)=\binom{n+M}{M}_x=\prod_{j=1}^{M}{1-x^{n+M+1-j}\over1-x^j}
\end{equation}
expanded around $x=0$. As an example, let's reproduce the partition counting used in~\eqref{eq:nExample} for $M=6,n=3$ and $|\n|=8$. We get
$$
\wp(6,3)=1 + x + 2 x^2 + 3 x^3 + 4 x^4 + 5 x^5 + 7 x^6 + 7 x^7 + 8 x^8 + \Ocal(x^9)
$$
and the coefficient of $x^8$ is right.

Clearly, the $\Delta$ coarse-graining partitions all orbits (for a fixed $|\n|$) into a polynomial (actually linear) number of subsets in $|\n|$. Can $p_G(|\n|,\Delta_n)$ be calculated efficiently classically? We don't know but we do know that the method of Lemma~\ref{lem:pn} cannot be used. This is because no $p_G(|\n|,\Delta_{n})$ is preserved by an interferometer and so the output probability calculation must take it into account. But there seems to be a stronger argument in favor of classical intractability and this conveniently leads us to the GBS advantage topic. The numerical evidence we gathered points to the fact that $p_G(|\n|,\Delta_n)$ is helpful in a task thought to be classically intractable: the ability to distinguish co-spectral, non-isomorphic graphs, namely  strongly regular graphs, considered to belong to the hardest instances of the graph isomorphism problem. Note that GBS was used to study the graph isomorphism problem in~\cite{bradler2018graph} but there it was the orbit probability that was shown to give rise to complete graph invariants. This  can still be useful following the previous section, where for a nonzero displacement some orbit probabilities were reasonably high. But, as already mentioned, the more coarse-graining the better since such a probability distribution is easier to sample.
\begin{figure}[h]
      \resizebox{15.5cm}{8.7cm}{\includegraphics[draft=false]{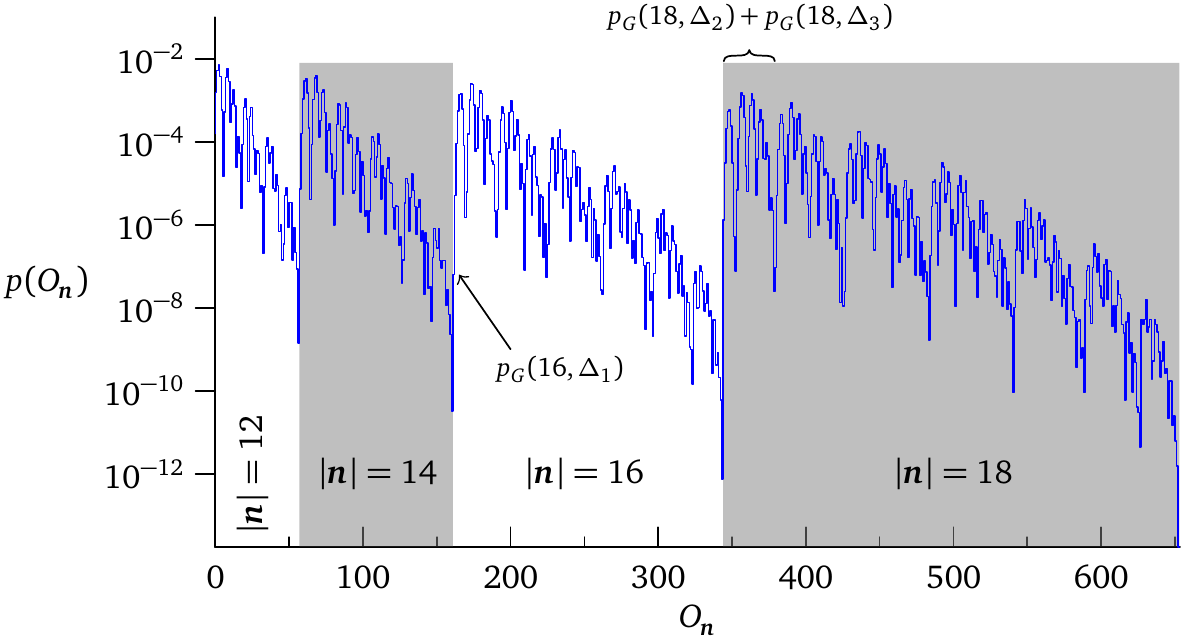}}
      \caption{Illustration of the coarse-grained probability $p_{G}(|\n|,\Delta_n)$ where $G$ is one of the two graphs from the SRG(16,6,2,2) family for $c=1/6.6$ and $d=0$.  Experimentally the most relevant $p_{G}(|\n|,\Delta_n)$'s are able to distinguish co-spectral, non-isomorphic graphs -- see the main text for details. The orbit order on the $x$ axis is explained in the text and the zero probability orbits were removed.}
      \label{fig:SRG16asy}
\end{figure}
In Fig.~\ref{fig:SRG16asy} we illustrate the performance of the $\Delta$ coarse-grained distribution introduced in~\eqref{eq:DeltaCoarseGr} on the pair of the co-spectral, non-isomorphic strongly regular graph graphs SRG(16,6,2,2). For $|\n|\leq10$ no difference is found except for $\Delta_1$. This is enough from the theoretical point of view but these orbits have in general a low probability of detection (illustrated for $|\n|=16$ -- note that $p_G(|\n|,\Delta_1)\equiv p_G(O_{\n})$). But the situation changes for $|\n|=12$ and gets better as $|\n|$ increases. For $|\n|=12,14$ it is $p_G(|\n|,\Delta_1)+p_G(|\n|,\Delta_2)$ that differs and for $|\n|=16$ it is $p_G(|\n|,\Delta_1)+p_G(|\n|,\Delta_2)+p_G(|\n|,\Delta_3)$. For $|\n|=18$ the difference is in $p_G(|\n|,\Delta_2)+p_G(|\n|,\Delta_3)$ (note that $p_G(18,\Delta_1)=0$ since $M=16$). Finally, for $|\n|=20$ we find the difference for $p_G(|\n|,\Delta_2)+p_G(|\n|,\Delta_3)+p_G(|\n|,\Delta_4)$.  As depicted for $|\n|=18$ in Fig.~\ref{fig:SRG16asy}, these are among the most likely events to detect and they carry a substantial amount of the total detection probability.

The GBS polynomial of a graph is motivated by the existence of the graph matching polynomial and its rich theory. But despite the surprising link of the displaced GBS polynomial with the matching polynomial of its prism uncovered by Theorem~\ref{thm:duality}, the displaced GBS polynomial of $G$ is different from the matching polynomial of~$G$. Is one more powerful than the other when it comes to distinguishing similar or even co-spectral graphs?  Here we show that the GBS polynomial outperforms the matching polynomial in deciding whether two co-spectral graphs are isomorphic. Let's take two co-spectral regular graphs on ten vertices~\cite{little2006combinatorial} depicted in~Fig.~\ref{fig:RG10}.
%\begin{figure}[h]
%  \resizebox{8cm}{4.3cm}{\includegraphics{}}
%  \caption{Pair of regular co-spectral non-isomorphic graphs on ten vertices.}
%  \label{fig:RG10}
%\end{figure}
\begin{figure}[h]
\begin{tikzpicture}
\GraphInit[vstyle=Art]
\Vertex[L=\hbox{$1$},x=0.0cm,y=1.8416cm]{v0}
\Vertex[L=\hbox{$2$},x=5.0cm,y=3.5517cm]{v1}
\Vertex[L=\hbox{$3$},x=0.4538cm,y=0.8531cm]{v2}
\Vertex[L=\hbox{$4$},x=4.6416cm,y=4.695cm]{v3}
\Vertex[L=\hbox{$5$},x=2.1592cm,y=0.0cm]{v4}
\Vertex[L=\hbox{$6$},x=3.1211cm,y=5.0cm]{v5}
\Vertex[L=\hbox{$7$},x=0.7398cm,y=3.4638cm]{v6}
\Vertex[L=\hbox{$8$},x=4.7672cm,y=1.4807cm]{v7}
\Vertex[L=\hbox{$9$},x=2.2441cm,y=3.4079cm]{v8}
\Vertex[L=\hbox{$10$},x=2.9348cm,y=2.0904cm]{v9}
\Edge[](v0)(v2)
\Edge[](v0)(v4)
\Edge[](v0)(v6)
\Edge[](v0)(v8)
\Edge[](v1)(v3)
\Edge[](v1)(v5)
\Edge[](v1)(v7)
\Edge[](v1)(v8)
\Edge[](v2)(v4)
\Edge[](v2)(v6)
\Edge[](v2)(v8)
\Edge[](v3)(v5)
\Edge[](v3)(v7)
\Edge[](v3)(v8)
\Edge[](v4)(v7)
\Edge[](v4)(v9)
\Edge[](v5)(v6)
\Edge[](v5)(v9)
\Edge[](v6)(v9)
\Edge[](v7)(v9)
\end{tikzpicture}
\hspace{1cm}
\begin{tikzpicture}
\GraphInit[vstyle=Art]
\Vertex[L=\hbox{$1$},x=5.0cm,y=3.3273cm]{v0}
\Vertex[L=\hbox{$2$},x=0.4201cm,y=4.2536cm]{v1}
\Vertex[L=\hbox{$3$},x=0.0cm,y=1.9381cm]{v2}
\Vertex[L=\hbox{$4$},x=4.558cm,y=1.0396cm]{v3}
\Vertex[L=\hbox{$5$},x=1.3732cm,y=3.5033cm]{v4}
\Vertex[L=\hbox{$6$},x=3.8159cm,y=3.08cm]{v5}
\Vertex[L=\hbox{$7$},x=3.2029cm,y=1.409cm]{v6}
\Vertex[L=\hbox{$8$},x=1.3463cm,y=1.9237cm]{v7}
\Vertex[L=\hbox{$9$},x=2.0175cm,y=0.0cm]{v8}
\Vertex[L=\hbox{$10$},x=2.8607cm,y=5.0cm]{v9}
\Edge[](v0)(v3)
\Edge[](v0)(v5)
\Edge[](v0)(v6)
\Edge[](v0)(v9)
\Edge[](v1)(v2)
\Edge[](v1)(v4)
\Edge[](v1)(v7)
\Edge[](v1)(v9)
\Edge[](v2)(v4)
\Edge[](v2)(v7)
\Edge[](v2)(v8)
\Edge[](v3)(v5)
\Edge[](v3)(v6)
\Edge[](v3)(v8)
\Edge[](v4)(v6)
\Edge[](v4)(v9)
\Edge[](v5)(v7)
\Edge[](v5)(v9)
\Edge[](v6)(v8)
\Edge[](v7)(v8)
\end{tikzpicture}
\caption{Pair of regular co-spectral, non-isomorphic graphs on ten vertices.}
\label{fig:RG10}
\end{figure}
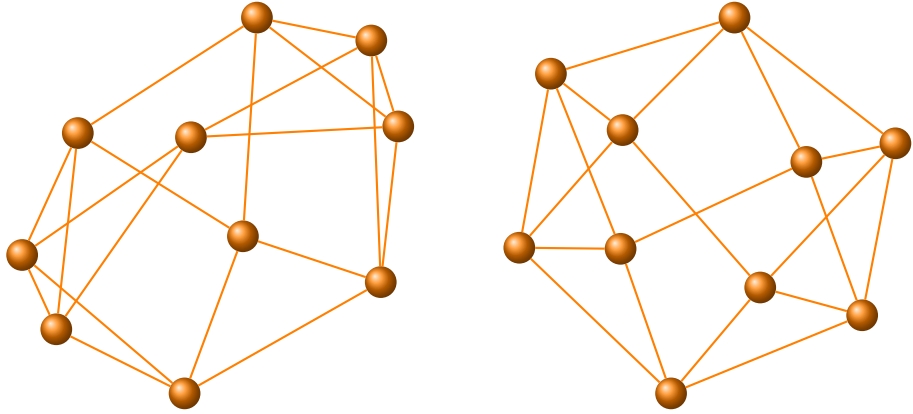
Their collision-free matching and GBS polynomials (Eqs.~\eqref{eq:GBSpolyHaf} and \eqref{eq:matchPolyhaf}) are the same, helping us conclude nothing
\begin{align}%\label{}
  \mu_{G_1}(x) & = \mu_{G_2}(x) = x^{10} - 20x^8 + 130x^6 - 312x^4 + 229x^2 - 24,\\
  \gbs_{G_1}(x) & = \gbs_{G_2}(x) = x^{10} - 20x^8 + 150x^6 - 588x^4 + 1233x^2 - 576.
\end{align}
The situation in the collision regime is different already for $n=2$. We find
\begin{equation}%\label{}
  \mu_{G_1\times\ol{K}_2}(x)-\mu_{G_2\times\ol{K}_2}(x) = -1536x^4 + 3840x^2 - 768
\end{equation}
for the matching polynomials showing that the graphs are not isomorphic. But the GBS polynomial, Eq.~\eqref{eq:GBSpolyCol}, performs much better:
\begin{align}%\label{}
  &\gbs_{G_1\times\ol{K}_2}(x)-\gbs_{G_2\times\ol{K}_2}(x) \\
  &=2560x^{12} - 143360x^{10} + 2585600x^8 - 18898944x^6 + 40554496x^4 + 107151360x^2 - 266797056.\nonumber
\end{align}
The difference for the first time appears already for $|n|=8$ corresponding to the coefficient of~$x^{12}$. There are implications of the practical aspects of using this method. The more coarse-grained distributions differ for two non-isomorphic graphs, the more likely it is to obtain a conclusive result from sampling the graphs on a GBS device. Also, in general, it is experimentally easier to generate lower mean photon numbers of the input squeezers~\footnote{This may not be strictly true in all practical situations but it is morally correct.}. Similar results were observed for other families of co-spectral, non-isomorphic regular and strongly regular graphs.

But this is not the end of the story. If we compare the signless collision-free displaced GBS polynomial introduced in~\eqref{eq:DGBSpolyCFpure} we find a difference already there
\begin{equation}%\label{}
  \dgbs^+_{G_1}(x,z)-\dgbs^+_{G_2}(x,z)=32 z^2x^3 + 16 z^2 (1 + 2 z^2)x^2 + 32z^2x.
\end{equation}
This is a witness of the power of an additional displacement having important practical consequences. In particular, if high squeezing levels to probe the large total photon numbers are difficult to achieve they can be substituted by a ``cheap'' displacement. Note that for two isomorphic graphs, their DGBS polynomials must be the same but the converse is not, in general, true.

%The results are also consistent with the discovered properties of the displaced GBS polynomial in Lemma~\ref{lem:DGBSasMPs}, where the matching polynomials of many subgraphs of $G_i$ appear.

\section{Conclusions}

This paper investigated the fruitful relation between certain coarse-grained probability distributions accessible via Gaussian boson sampling (GBS), and matching polynomials. We defined a new structure called the \emph{(displaced) GBS polynomial} of a graph encoded in the GBS device. In the collision-free regime (at most one photon per mode), its coefficients are the probabilities of a detection event and all its permutations -- so-called orbits. In the most general collision regime, the coefficients are certain natural collections of the orbits. We proved the equivalence of the displaced GBS polynomial of a graph $G$ with the matching polynomial of a different graph known as the prism over $G$. This allows us to bring the machinery of the matching polynomials -- an important topic in graph theory and theoretical physics -- to the analysis of GBS. Another consequence is a tremendous speedup of classically simulating the coarse-grained probabilities yet, at the same time, increasing our confidence in the classical intractability of the coarse-grained GBS statistics due to classical hardness results for matching polynomials. %The rigorous argument in favor of quantum supremacy \emph{not-based} on sampling is probably the main open question and we believe we discovered the right tool.

Using these considerations we also derive a new GBS-accessible coarse-grained probability distribution and motivate that it is classically intractable, yet useful for solving hard problems by quantum means. For this purpose, we show (but do not prove rigorously) that the coarse-grained distributions obtainable from the GBS device with the experimental parameters comfortably within today's possibilities are able to answer the graph isomorphism decision problem. We test it on several families of co-spectral, non-isomorphic (strongly) regular graphs that are considered to belong among the hardest instances.

Overall, we believe that our investigations offer a useful theoretical framework to study Gaussian boson sampling in the context of applications.

\section*{Acknowledgements}
We thank Christian Weedbrook for carefully reading the manuscript.

\bibliographystyle{unsrt}

%\bibliography{GSbib}

\begin{thebibliography}{10}

\bibitem{hamilton2016gaussian}
Craig~S. Hamilton, Regina Kruse, Linda Sansoni, Sonja Barkhofen, Christine
  Silberhorn, and Igor Jex.
\newblock Gaussian boson sampling.
\newblock {\em Physical Review Letters}, 119:170501, 2017.

\bibitem{lund2014boson}
Austin~P Lund, Anthony Laing, Saleh Rahimi-Keshari, Terry Rudolph, Jeremy~L
  O’Brien, and Timothy~C Ralph.
\newblock {Boson sampling from a Gaussian state}.
\newblock {\em Physical review letters}, 113(10):100502, 2014.

\bibitem{rahimi2015can}
Saleh Rahimi-Keshari, Austin~P Lund, and Timothy~C Ralph.
\newblock What can quantum optics say about computational complexity theory?
\newblock {\em Physical review letters}, 114(6):060501, 2015.

\bibitem{aaronson2011computational}
Scott Aaronson and Alex Arkhipov.
\newblock The computational complexity of linear optics.
\newblock In {\em Proceedings of the forty-third annual ACM symposium on Theory
  of computing}, pages 333--342. ACM, 2011.

\bibitem{huh2015boson}
Joonsuk Huh, Gian~Giacomo Guerreschi, Borja Peropadre, Jarrod~R McClean, and
  Al{\'a}n Aspuru-Guzik.
\newblock Boson sampling for molecular vibronic spectra.
\newblock {\em Nature Photonics}, 9(9):615, 2015.

\bibitem{bradler2018graph}
Kamil Br\'adler, Shmuel Friedland, Josh Izaac, Nathan Killoran, and Daiqin Su.
\newblock {Graph isomorphism and Gaussian boson sampling}.
\newblock {\em arXiv preprint arXiv:1810.10644}, 2018.

\bibitem{banchi2019molecular}
Leonardo Banchi, Mark Fingerhuth, Tom\'a\v{s} Babej, and Juan~Miguel Arrazola.
\newblock {Molecular Docking with Gaussian Boson Sampling}.
\newblock {\em arXiv preprint arXiv:1902.00462}, 2019.

\bibitem{arrazola2018using}
Juan~Miguel Arrazola and Thomas~R Bromley.
\newblock {Using Gaussian boson sampling to find dense subgraphs}.
\newblock {\em Physical review letters}, 121(3):030503, 2018.

\bibitem{schuld2019quantum}
Maria Schuld, Kamil Br{\'a}dler, Robert Israel, Daiqin Su, and Brajesh Gupt.
\newblock {A quantum hardware-induced graph kernel based on Gaussian Boson
  Sampling}.
\newblock {\em arXiv preprint arXiv:1905.12646}, 2019.

\bibitem{bradler2018gaussian}
Kamil Br{\'a}dler, Pierre-Luc Dallaire-Demers, Patrick Rebentrost, Daiqin Su,
  and Christian Weedbrook.
\newblock {Gaussian boson sampling for perfect matchings of arbitrary graphs}.
\newblock {\em Physical Review A}, 98(3):032310, 2018.

\bibitem{farrell1979introduction}
Edward~J Farrell.
\newblock An introduction to matching polynomials.
\newblock {\em Journal of Combinatorial Theory, Series B}, 27(1):75--86, 1979.

\bibitem{heilmann1970monomers}
Ole~J Heilmann and Elliott~H Lieb.
\newblock Monomers and dimers.
\newblock {\em Physical Review Letters}, 24(25):1412, 1970.

\bibitem{hosoya1971topological}
Haruo Hosoya.
\newblock {Topological index. A newly proposed quantity characterizing the
  topological nature of structural isomers of saturated hydrocarbons}.
\newblock {\em Bulletin of the Chemical Society of Japan}, 44(9):2332--2339,
  1971.

\bibitem{gutman1977acyclic}
Ivan Gutman.
\newblock The acyclic polynomial of a graph.
\newblock {\em Publ. Inst. Math.(Beograd)(NS)}, 22(36):63--69, 1977.

\bibitem{godsil1981matchings}
Christopher~D Godsil.
\newblock Matchings and walks in graphs.
\newblock {\em Journal of Graph Theory}, 5(3):285--297, 1981.

\bibitem{jerrum1989approximating}
Mark Jerrum and Alistair Sinclair.
\newblock Approximating the permanent.
\newblock {\em SIAM journal on computing}, 18(6):1149--1178, 1989.

\bibitem{patel2017deterministic}
Viresh Patel and Guus Regts.
\newblock Deterministic polynomial-time approximation algorithms for partition
  functions and graph polynomials.
\newblock {\em SIAM Journal on Computing}, 46(6):1893--1919, 2017.

\bibitem{bezakova2018complexity}
Ivona Bez\'akov\'a, Andreas Galanis, Leslie~Ann Goldberg, and Daniel
  \v{S}tefankovi\v{c}.
\newblock The complexity of approximating the matching polynomial in the
  complex plane.
\newblock {\em arXiv preprint arXiv:1807.04930}, 2018.

\bibitem{cvetkovic1988recent}
Dragos~M Cvetkovic, Michael Doob, Ivan Gutman, and Aleksandar Torga{\v{s}}ev.
\newblock {\em Recent results in the theory of graph spectra}, volume~36.
\newblock Elsevier, 1988.

\bibitem{shi2016graph}
Yongtang Shi, Matthias Dehmer, Xueliang Li, and Ivan Gutman.
\newblock {\em Graph Polynomials}.
\newblock Chapman and Hall/CRC, 2016.

\bibitem{godsilAlgComb}
Chris Godsil.
\newblock {\em Algebraic Combinatorics}.
\newblock Chapman Hall Crc Mathematics Series, 1993.

\bibitem{lass2004matching}
Bodo Lass.
\newblock Matching polynomials and duality.
\newblock {\em Combinatorica}, 24(3):427--440, 2004.

\bibitem{eep}
Ilia Averbouch, Benny Godlin, and Johann~A. Makowsky.
\newblock A most general edge elimination polynomial.
\newblock In Hajo Broersma, Thomas Erlebach, Tom Friedetzky, and Daniel
  Paulusma, editors, {\em Graph-Theoretic Concepts in Computer Science}, pages
  31--42, Berlin, Heidelberg, 2008. Springer Berlin Heidelberg.

\bibitem{imrich2008topics}
Wilfried Imrich, Sandi Klavzar, and Douglas~F Rall.
\newblock {\em Topics in graph theory: Graphs and their Cartesian product}.
\newblock AK Peters/CRC Press, 2008.

\bibitem{godsil1981theory}
Chris~D Godsil and Ivan Gutman.
\newblock On the theory of the matching polynomial.
\newblock {\em Journal of Graph Theory}, 5(2):137--144, 1981.

\bibitem{caianiello1953quantum}
Eduardo~R Caianiello.
\newblock {On quantum field theory~I: explicit solution of Dyson's equation in
  electrodynamics without use of Feynman graphs}.
\newblock {\em Il Nuovo Cimento (1943-1954)}, 10(12):1634--1652, 1953.

\bibitem{lovasz2009matching}
L{\'a}szl{\'o} Lov{\'a}sz and Michael~D Plummer.
\newblock {\em Matching theory}, volume 367.
\newblock American Mathematical Society, 2009.

\bibitem{cifuentes2016efficient}
Diego Cifuentes and Pablo~A Parrilo.
\newblock An efficient tree decomposition method for permanents and mixed
  discriminants.
\newblock {\em Linear Algebra and its Applications}, 493:45--81, 2016.

\bibitem{LBT}
Otachi-Y. Kozawa, K. and K.~Yamazaki.
\newblock Lower bounds for treewidth of product graphs.
\newblock {\em Discrete Applied Math.}, 162:251--258, 2014.

\bibitem{kruse2018detailed}
Regina Kruse, Craig~S Hamilton, Linda Sansoni, Sonja Barkhofen, Christine
  Silberhorn, and Igor Jex.
\newblock {A detailed study of Gaussian Boson Sampling}.
\newblock {\em arXiv preprint arXiv:1801.07488}, 2018.

\bibitem{hardy2006combinatorics}
Michael Hardy.
\newblock Combinatorics of partial derivatives.
\newblock {\em {The Electronic Journal of Combinatorics}}, 13(1):1, 2006.

\bibitem{sagemath}
{The Sage Developers}.
\newblock {\em {S}ageMath, the {S}age {M}athematics {S}oftware {S}ystem
  ({V}ersion 8.8)}, 2019.
\newblock {\tt https://www.sagemath.org}.

\bibitem{bradler2018nonnegativity}
Kamil Br\'adler, Shmuel Friedland, and Robert Israel.
\newblock Nonnegativity for hafnians of certain matrices.
\newblock {\em arXiv preprint arXiv:1811.10342}, 2018.

\bibitem{martello1990knapsack}
Silvano Martello.
\newblock {\em Knapsack problems: algorithms and computer implementations}.
\newblock John Wiley \& Sons Ltd., 1990.

\bibitem{spence2018}
Ted Spence.
\newblock Strongly regular graphs.
\newblock {\tt http://www.maths.gla.ac.uk/{\textasciitilde}es/srgraphs.php},
  August 2018.

\bibitem{bona2012combinatorics}
Mikl{\'o}s B{\'o}na.
\newblock {\em Combinatorics of permutations}.
\newblock Chapman and Hall/CRC, 2012.

\bibitem{little2006combinatorial}
C.H.C. Little.
\newblock {\em Combinatorial Mathematics V.: Proceedings of the Fifth
  Australian Conference, Held at the Royal Melbourne Institute of Technology,
  August 24 - 26, 1976.}
\newblock Lecture Notes in Mathematics. Springer Berlin Heidelberg, 2006.

\end{thebibliography}

\end{document}